\numberwithin{equation}{section}
\newcommand{\fsign}{\zeta}
\newcommand{\C}{{\mathbb C}}
\newcommand{\Z}{{\mathbb Z}}\newcommand{\E}{{\mathbb E}}
\newcommand{\crs}{\beta}
\newcommand{\dob}{\omega}
\newcommand{\hcap}{\mathrm{hcap}}
\renewcommand{\H}{{\mathbb H}}
\renewcommand{\P}{{\mathbb P}}
\newcommand{\R}{{\mathbb R}}
\newcommand{\const}{\mathrm{const}\, }
\newcommand{\comp}{\mathcal{C}}
\newcommand{\mesh}{\delta}
\newcommand{\edges}{\text{Edges}}
\newcommand{\vertices}{\text{Vertices}}
\newcommand{\Od}{\Omega^\mesh}
\newcommand\Conf{\mathrm{Conf}}
\newcommand\wind{\mathrm{w}}
\newcommand\pa{\partial}
\newcommand\arb{1}
\renewcommand\Re{\mathrm{Re}\,}
\renewcommand\Im{\mathrm{Im}\,}
\def\res{\mathop{\mathrm{res}}}
\newcommand\dist{\mathrm{dist}}
\newcommand\nor{\vec}
\newcommand\free{\{\text{free}\}}
\newcommand\faces{{\text{Faces}}}
\newcommand\plus{\{{+}\}}
\newcommand\minus{\{-\}}
\newcommand\norF{\tilde{F}}
\newcommand\crossing{\mathcal{I}}
\newcommand\bcond{\mathcal{B}}
\theoremstyle{remark}
\theoremstyle{plain}
\newtheorem{theorem}{Theorem}[section]
\newtheorem{lemma}[theorem]{Lemma} %[section]
\newtheorem{corollary}[theorem]{Corollary} %[section]
\newtheorem{proposition}[theorem]{Proposition} %[section]
\theoremstyle{remark}
\newtheorem{rem}[theorem]{Remark}
\date{}
\title{Smirnov's observable for free boundary conditions, interfaces and crossing probabilities}
\author{Konstantin Izyurov}
\begin{document}
\maketitle
 \begin{abstract}
We prove convergence results for variants of Smirnov's fermionic observable in the critical planar Ising model in presence of free boundary conditions. One application of our analysis is a simple proof of a theorem by Hongler and Kyt\"ol\"a on convergence of critical Ising interfaces with plus-minus-free boundary conditions to dipolar SLE(3), and a generalization of this result to an arbitrary number of arcs carrying plus, minus or free boundary conditions. Another application is a computation of scaling limits of crossing probabilities in the critical FK-Ising model with arbitrary number of alternating wired/free boundary arcs. We also deduce a new crossing formula for the spin Ising model.
\end{abstract}
The Stochastic Loewner evolution, introduced by Schramm in \cite{Schramm00}, is a powerful tool in the study of lattice models in two-dimensional statistical mechanics at criticality. In this approach, one describes random geometric shapes arising in the models by planar growth processes. By means of Loewner's equation for the evolution of conformal maps, such processes can be encoded by continuous, real-valued ``driving functions'' (see, e. g., \cite{Lawler-book}).

Schramm's original idea (often called ``Schramm's principle'') was that for certain boundary conditions, natural conformal invariance and ``domain Markov property'' assumptions on a random curve can be restated in terms of its  driving function. In the scaling limit, these properties identify the latter as a Brownian motion $B_{\kappa t}$, where the intensity $\kappa>0$ represents the universality class of the model. This approach, pursued in particular in \cite{Smirnov01, LSW, SS_harm, SS_DGFF, Camia_Newmann, CHS2, CHS3}, was extremely fruitful. In a more general setup (e. g. for more complicated boundary conditions), the driving processes are typically described by Brownian motion $B_{\kappa t}$ with time-dependent drifts; these processes do not admit such a simple axiomatic characterization anymore, and a lot of work has been done  (see e. g. \cite{LSW_restriction, BB, BBK05, LK, Dub07, Zhan-LERW, Dub_FF, IK, FK, KP}) in order to understand them both in general and in relation to concrete lattice 
models.

One celebrated result in the area is the proof of conformal invariance of fermionic observables in the critical Ising model \cite{Smirnov06, CHS2}, leading in particular to the proof that interfaces in the model and its random cluster representation converge to SLE${}_3$ and SLE${}_{\frac{16}{3}}$ respectively \cite{CHS3}. This result was extended in \cite{Izy} to radial and multiple SLE and to multiply-connected domains with suitable analogs of Dobrushin boundary conditions. Another very interesting case, namely that of free boundary conditions, was treated by Hongler and Kyt\"ol\"a \cite{HonKyt}, who proved a conjecture of \cite{BB} that interfaces in the critical Ising model on a simply-connected domain with plus-minus-free boundary conditions converge to the dipolar SLE${}_3$, i. e., the SLE$_{\kappa}(\rho)$ process \cite{LSW_restriction, SW} with $\kappa=3$ and $\rho=-\frac32$. The beautiful proof of Hongler and Kyt\"ol\"a was quite complicated, the main source of technical difficulties 
being that they did not 
use discrete analytic or discrete harmonic functions directly. Instead, the key ingredient of the proof, namely the  computation of the scaling limit of a martingale observable, was obtained using the convergence of FK-Ising interfaces to SLE$_\frac{16}{3}$. The argument relied on explicit formulae for certain SLE martingales, meaning 
 that it was hardly amenable to generalizations.

In the present paper, we point out another approach, employing a direct generalization of Smirnov's observable \cite{CHS2} to the case when a free boundary arc is present. Our observable is a discrete holomorphic function $F(z)$ defined on a discrete domain $\Od$ whose boundary is divided into three arcs. For any $z$, it possesses a martingale property with respect to the interface in the critical Ising model on $\Od$ with $+,-$, and free boundary conditions on these arcs. Our main new observation is that it is possible to treat the arising boundary value problem, in particular, identify the boundary conditions on the free arc. This enables us to prove the convergence of this observable to a conformally covariant scaling limit using techniques developed in \cite{CHS2}. 

In various degenerate cases, our observable coincides with previously known ones. When there is no free boundary arc, one gets the original Smirnov's observable \cite{Smirnov10, CHS2}. When there is no ``$-$'' arc, one gets Smirnov's observable for FK-Ising model \cite{Smirnov10}, which was used to prove convergence of FK-Ising interfaces to SLE$_{\frac{16}{3}}$. Finally, when $z$ is on the boundary, one recovers the observable originally employed in \cite{HonKyt}. 

An advantage of our proof is that it readily generalizes to more complicated geometries. We stick for simplicity to the case of simply connected domains with arbitrary number of boundary arcs, carrying $+$, $-$, or free boundary conditions, although in principle, using techniques employed in \cite{Izy} for fixed boundary conditions, one can extend these results to multiply connected domains. Our result (Theorem \ref{thm: interfaces}, see Section \ref{sec: interfaces} for details and notation) states that in the limit, the interfaces are conformally invariant and are described by the chordal Loewner evolution with a driving force $a_1(t)$ satisfying the SDE
$$
da_1(t)=\sqrt{3}dB_t+D(\{a_i(t),b_i(t)\})dt,
$$
where $a_i(t)$ and $b_i(t)$ are the images under the Loewner map at time $t$ of the points where boundary conditions change (from $+$ to $-$ and from fixed to free, respectively) and $D$ is a quadratic irrational drift function. We provide a general, explicit formula for $D$; for example, in the cases of four and five marked points with $+/-/+/$ free and $+/-/\text{free}/+/\text{free}$ boundary conditions one has 
\begin{gather*}
D_{+/-/+/\text{free}}=-\frac{3}{a_1-a_2}-\frac{3/2}{a_1-b_1}+\frac{3}{a_2+a_1-2b_1},\\
D_{+/-/\text{free}/+/\text{free}}=-\frac{3/2}{a_1-b_1}-\frac{3/2}{a_1-b_2}-\frac{3/2}{a_1-b_3}+3\left(a_1-b_3-\sqrt{(b_3-b_2)(b_3-b_1)}\right)^{-1},
\end{gather*}
where we took $b_2=\infty$ and $b_4=\infty$ respectively.

As a byproduct, we get the convergence of crossing probabilities in the random cluster representation of the critical Ising model. Given a simply-connected domain with $2k$ marked boundary arcs, with alternating wired/free boundary conditions on these arcs, pick any subset of wired arcs and consider the probability that these arcs belong to the same cluster. We prove that the scaling limits of those quantities exist and are conformally invariant, and after mapping to the half-plane, are expressed by explicit quadratic irrational functions. This part extends the case of four boundary arcs treated in \cite{CHS2}. Our technical novelty, which seems essential in the general case, is to consider the observables in the spin representation and then transfer the results by the Edwards-Sokal coupling. To the best of our knowledge, our formulae did not appear explicitly neither in the mathematics nor in the physics literature before.

Incidentally, the explicit computation of the drift functions $D$ and of the crossing probabilities involves the same interpolation problem as the one encountered in \cite{CHHI} when computing spin correlations in the Ising model. The computation as in Proposition~\ref{prop: linear} has been used to make the result of \cite{CHHI} completely explicit, matching the physics literature predictions. A similar computation also allows one to extend the results of \cite{CHHI} to more general boundary conditions (involving free arcs), and to obtain new correlation formulas, which is a subject of a forthcoming paper. 

The paper is organized as follows. In Section \ref{sec: observable}, we introduce the observable and prove its convergence in the scaling limit.  In Section \ref{sec: multipoint}, we work out the case of an arbitrary number of boundary arcs, and obtain as a corollary the proof of conformal invariance of the FK-Ising crossing probabilities. In Section \ref{sec: interfaces}, we derive the convergence of the interfaces. This part is quite standard and employs the same argument as e. g. in \cite{Izy, Zhan-LERW}, eventually going back to \cite{LSW}. In Section \ref{sec: corollaries}, we give the explicit formulae for the observables and hence for the drift terms. We also discuss a new spin crossing formula for the Ising model in topological  rectangles. In Appendix, we collect for convenience of the reader proofs of several facts (most of which are well known) used in the paper.

\vskip 0.5cm
\noindent \textbf{Acknowledgements.} The author is grateful to Dmitry Chelkak, Antti Kemppainen and Kalle Kyt\"ol\"a for valuable conversations, and to the referee for a number of suggestions on improving the text of the paper. Work supported by Academy of Finland. 

\section{Convergence of the observable}
\label{sec: observable}
Let $\Od$ be a simply connected discrete domain of mesh size $\delta>0$, i. e. a subset of faces of $\delta\Z^2$ that forms a connected and simply connected polygonal domain, together with all the vertices and edges incident to those faces. We assume that the boundary of $\Od$ is divided into three arcs, denoted by $\free$, $\plus$ and $\minus$ and separated by vertices $a,b,c$; this subdivision will specify the boundary conditions $\bcond^\delta$ in our model.

The Ising model on $\Od$ with the boundary conditions $\bcond^\delta$ is a random assignment \mbox{$\sigma:\faces(\Od)\to {\pm 1}$} of spins to the faces of $\Od$, with the probability measure given by
\begin{equation}
\label{eq: defIsing}
\P(\sigma)= Z^{-1}\exp[\beta\sum\limits_{u\sim u'}\sigma(u)\sigma(u')],\quad Z:=\sum\limits_\sigma\exp[\beta\sum\limits_{u\sim u'}\sigma(u)\sigma(u')]. 
\end{equation}
The sum in the exponential in (\ref{eq: defIsing}) is then taken over the set of pairs of adjacent faces of $\delta\Z^2$ separated by an edge of $\Od$, except for those edges that belong to the $\free$ arc. The spins on the faces of $\delta\Z^2\backslash \Od$ adjacent to $\plus$ and $\minus$ arcs are assumed to be non-random and equal to $+1$ and $-1$, respectively. 

Let us rewrite the definition (\ref{eq: defIsing}) in the low-temperature expansion. Denote by $\Conf(\Od,z_1,z_2)$, where $z_{1,2}\in \vertices(\Od)$, the set of all $S\subset \edges(\Od)$ such that all vertices of $\Od$ have an even degree in $S$, except for $z_{1,2}$, which have an odd degree. Assign for convenience the value $+1$ to the faces of $\delta\Z^2\backslash \Od$ adjacent to $\free$ (we could as well take $-1$). Given a spin configuration $\sigma:\faces(\Od)\to {\pm 1}$, draw the edges separating faces with different spins (see Fig. \ref{Fig: domain}); this gives a bijection from $\{\pm 1\}^{\faces(\Od)}$ to $\Conf(\Od,a,b)$, thus endowing the latter set with the probability measure
\begin{equation}
\label{eq: low-temperature}
\P(S)= Z^{-1}x^{|S\backslash \free|}, \quad Z=Z(\Od,a,b,\free):=\sum\limits_{S\in\Conf(\Od,a,b)}x^{|S\backslash \{\text{free}\}|}, 
\end{equation}
where $x=\exp[-2\beta]$ will be, from now on, set to its critical value, 
$$
x=x_c=\sqrt{2}-1.
$$

\begin{figure}[t]
\begin{center}
\includegraphics[width=0.7\textwidth]{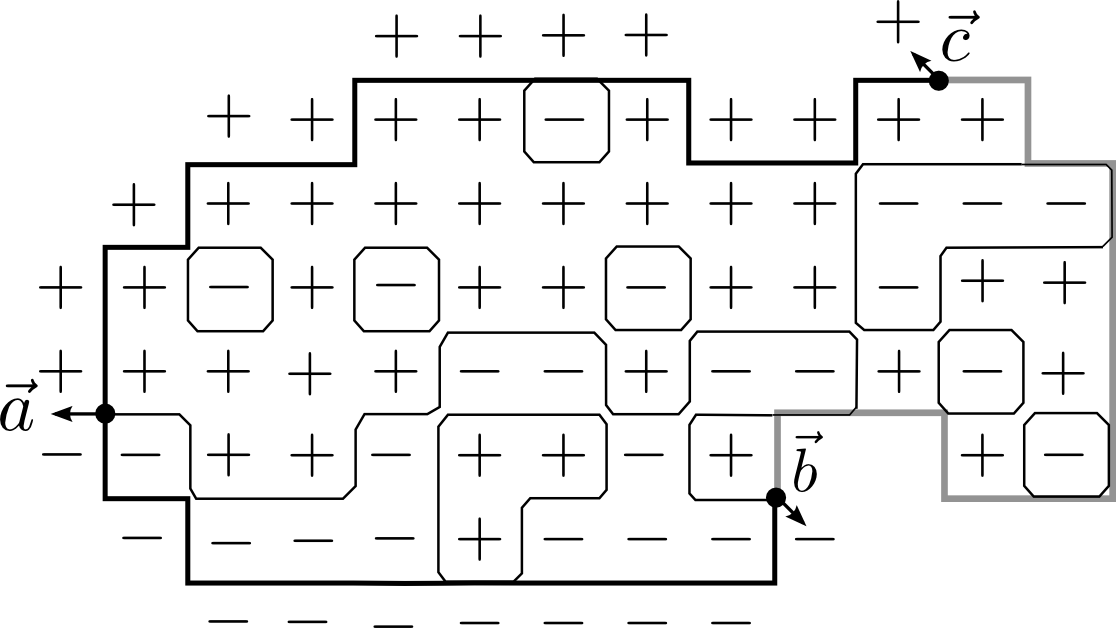}
\end{center}
\caption{\label{Fig: domain} An example of a spin configuration and the corresponding configuration $S\in\Conf(\Od,a,b)$. The free boundary arc $(bc)$ is in gray; $S$ is drawn as if the spins adjacent to $(bc)$ were $+1$. }
\end{figure}

It is convenient to endow $\Od$ with an additional decoration. First, we add a vertex to the midpoint of each edge; it is clear that the Ising model on the original graph $\Od$ in the low-temperature expansion is equivalent to one on the new graph with the weights $\sqrt{x}$ per half-edge. Second, we add vertices at \emph{corners} of faces, i. e. for each vertex $v$ of $\Od$, we add four vertices $c_j$ at $v+\frac{\sqrt{2}\delta}{4} e^{\frac{i\pi}{4}+\frac{i\pi}{2}j}$, $j=0,1,2,3$, and connect each $c_j$  by an edge to $v$; here is what we obtain: 
\begin{center}
\includegraphics[width=0.5\textwidth]{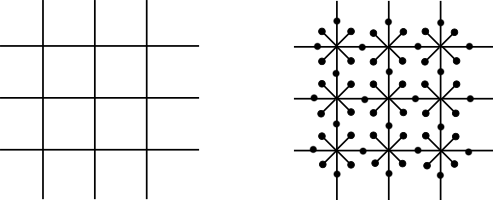}
\end{center}
We will often identify an edge of $\Od$ with its midpoint, and a corner edge with its endpoint. We assign the weight $\sqrt{x}\cos\frac\pi8$ to each corner edge, and extend the definition of $\Conf(\Od,z_1,z_2)$ and of the corresponding partition functions by allowing $z_{1,2}$ to be corners or midpoints of edges. We then use the convention that $x^{|S\backslash \free|}$ is understood as the product of the weights of edges in the \emph{decorated} graph that constitute $S\backslash \free$.

By a \emph{discrete outer normal} at a vertex $v$ on the boundary of $\Od$ we mean an edge of the decorated graph which connects $v$ to a corner or to a midpoint of an edge not in $\Od$, oriented from $v$ to the latter. An oriented edge $e$ can be viewed as a complex number, and we introduce another number $\eta_e\in\C$ and the straight line $l_e\subset \C$ by 
$$
\eta_e:= \left(\frac{ie}{|e|}\right)^{-\frac12},\quad l_e:=\eta_e\R;
$$
note that $\eta_e$ is defined up to sign which we will specify when necessary.
 
The definition of the \emph{fermionic observable} is a natural generalization of the one proposed by Smirnov \cite{Smirnov10, CHS2}. Let $\nor{a}$ be a discrete outer normal \emph{edge} at $a$, and denote by $\nor{b},\nor{c}$ the discrete outer normal \emph{corners} at $b$ and $c$, adjacent to the $\minus$ and to the $\plus$ boundary arcs respectively, see Fig \ref{Fig: domain}. Suppose that $z\neq \nor{a}$ is either a corner of $\Od$, or a discrete outer normal, or an edge midpoint of $\Od\backslash \free$ (below we give a separate definition for $z\in\free$). We put
\begin{equation}
\label{eq: obs}
F(z)=F_{\Od,\bcond^\delta}(z)=i\eta_{\nor{a}}\sum\limits_{S\in \Conf(\Od,\nor{a},z)}x^{|S\backslash \free|}e^{-i\wind(S)/2}
\end{equation}
 the winding factor $\wind(S)$ being defined as follows. Decompose $S$ into a union of loops and a path $\gamma$ starting with the edge $\nor{a}$ and ending at $z$ in such a way that no edge is traced twice, and that the loops and the path $\gamma$ do not cross each other or themselves transversally; we do allow them to have self-touchings or mutual touchings at vertices. Then $\wind(S)$ is defined to be the winding number of the tangent vector of $\gamma$ around zero, that is, the net angle by which this vector turns on the way form $\nor{a}$ to $z$. It is easy to see that the winding factor $e^{-i\wind(S)/2}$ does not depend on the decomposition of $S$, thus the observable is well defined.

 Define, for $b>0$ and $z\in\H=\{z\in\C:\Im z>0\}$, 
\begin{equation}
f_{\H, b}(z):=\frac{z-2b}{\sqrt{\pi} z\sqrt{b-z}}. 
\label{eq: obs_hp}
\end{equation}
Given a simply connected domain $\Omega$ with boundary conditions $\bcond$ (specified by marked points $a,b,c$ such that $c\neq a,b$), let $\varphi_{\bcond}$ denote a conformal map from $\Omega$ to $\H$ such that $\varphi_\bcond(a)=0$  and $\varphi_\bcond(c)=\infty$, and denote
$$
f_{\Omega, \bcond}(z)=(\varphi_\bcond'(z))^{\frac12}f_{\H,\varphi_{\bcond}(b)}(\varphi_{\bcond}(z)).
$$

Throughout this paper, we say that discrete domains $(\Od,\bcond^\delta)$ \emph{approximate} $(\Omega,\bcond)$ if $\Od$ converges to $\Omega$ as $\delta$ tends to zero in the sense of Carath\'eodory, and the boundary points $a^\delta,b^\delta,c^\delta,\ldots$ specifying $\bcond^\delta$ converge as prime ends to their counterparts in $\bcond$. 
 \begin{proposition}
 \label{prop: convergence}
Suppose the domains $(\Od,\bcond^\delta)$ approximate a simply connected domain $(\Omega,\bcond)$, and suppose that $\bcond$ are such that $c\neq a,b$. Then 
$$
\norF:=\frac{F_{\Od,\bcond^\delta}(\cdot)}{\sqrt[4]{2}\sqrt{\delta}Z(\Od,\nor{a}^\delta,\nor{b}^\delta,\free)}\longrightarrow f_{\Omega, \bcond}(\cdot)
$$ 
uniformly on compact subsets of $\Omega$ (here $F$ is viewed as a function on edges of $\Od$).
\end{proposition}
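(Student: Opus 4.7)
My strategy would be the Chelkak--Smirnov scheme of \cite{CHS2} adapted to the three-arc geometry. Step one is the local discrete holomorphicity of $F$: at an interior midedge/corner $z$ one pairs up configurations in $\Conf(\Od,\nor{a},z)$ according to the local behaviour near $z$; the critical weight $x_c=\sqrt{2}-1$ is chosen precisely so that these pairings yield Smirnov's s-holomorphic identity, namely that the projections of $F(z)$ on the lines $l_e$ of the incident edges coincide. At an outer-normal corner on a $\plus$ or $\minus$ boundary arc, a reflection bijection toggling the last edge of $S$ forces $F(\nor{c})\in l_{\nor{c}}$ (with sign fixed by the adjacent face colour), giving the usual Dobrushin boundary conditions.

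\emph{Step 2: free-arc boundary condition (the new ingredient).} Because edges on $\free$ are excluded from $|S\setminus\free|$ in \eqref{eq: low-temperature}, toggling a boundary edge on $\free$ preserves the weight of $S$ but rotates the tangent near $z$ by $\pm\pi$, producing an extra factor $\pm i$ in \eqref{eq: obs}. Projected on $l_{\nor{c}}$, this now forces $F(\nor{c})\in i\cdot l_{\nor{c}}$ at free-arc corners. Passing to the limit, this gives the Riemann--Hilbert-type condition that $(\varphi_\bcond'(\cdot))^{-1/2}f$ is purely imaginary on $\free$ and real on $\plus\cup\minus$, consistent with the explicit formula \eqref{eq: obs_hp}: $f_{\H,b}$ is real on $(-\infty,b)$ and imaginary on $(b,+\infty)$, matching the images of $\plus\cup\minus$ and of $\free$ under $\varphi_\bcond$.

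\emph{Step 3: compactness and identification of the limit.} Following \cite{CHS2}, I would introduce the primitive $H=\Im\int F^2$, which is well-defined by s-holomorphicity; the boundary conditions of Steps 1--2 translate into Dirichlet-type control on $\plus\cup\minus$ and Neumann-type control on $\free$ (possibly after splitting $H$ on the two sublattices). Together with standard discrete Beurling/harmonic-measure estimates and a probabilistic interpretation of $F/Z$ as a ratio of Ising partition functions with sources at $\nor{a}^\delta$ and $\nor{b}^\delta$, this yields uniform boundedness of $\norF$ on compact subsets of $\Omega\setminus\{a,b,c\}$, hence precompactness, together with the expected local behaviour: a simple pole at $a$ (from the source $\nor{a}$), a $(z-b)^{-1/2}$ branch at $b$ (from the source at $\nor{b}$ carried by the normalising $Z$), and vanishing at $c$. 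Any subsequential limit $f$ is then holomorphic in $\Omega$, extends continuously to $\partial\Omega\setminus\{a,b,c\}$, satisfies the above boundary conditions and singular behaviour, with residue at $a$ fixed by the normalisation $\sqrt[4]{2}\sqrt{\delta}\,Z$; pulled back by $\varphi_\bcond^{-1}$, it solves a Riemann--Hilbert problem on $\H$ whose unique solution is \eqref{eq: obs_hp}. Hence $f=f_{\Omega,\bcond}$ and the claimed uniform convergence follows.

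\emph{Main obstacle.} The decisive new difficulty compared with \cite{CHS2} is Step 2 combined with the near-$b$ analysis of Step 3: identifying the correct boundary condition for $F$ on the free arc, and then controlling $H$ near the triple point $b$ where $\free$ meets $\minus$ so as to produce exactly the $(z-b)^{-1/2}$ branch in the scaling limit. Once these are in place, the Riemann--Hilbert problem and the conformal covariance of the prefactor do the rest.
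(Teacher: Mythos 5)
Your outline follows the paper's general scheme (s-holomorphicity, the new boundary relation on the free arc, the primitive $H=\Im\int F^2$, compactness, then a boundary value problem), but there is a genuine gap at the decisive identification step. The normalization by $\sqrt[4]{2}\sqrt{\delta}\,Z(\Od,\nor{a},\nor{b},\free)$ does \emph{not} fix the residue of the limit at $a$: by \eqref{eq: bc_cb} it fixes $|\norF(\nor{b})|=2^{-1/4}\delta^{-1/2}$, i.e. it is exactly the Dirichlet datum $H^\bullet\equiv 1$ on the free arc (equivalently, the coefficient of the square-root singularity at $b$), whereas the strength of the singularity at $a$ is precisely the unknown quantity one must determine. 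With that residue unknown, the Riemann--Hilbert problem as you pose it (holomorphic in $\Omega$, real on $\plus\cup\minus$, purely imaginary on $\free$ after the covariant factor, simple pole at $a$, prescribed $(z-b)^{-1/2}$ coefficient at $b$, vanishing at $c$) is \emph{not} uniquely solvable: in half-plane coordinates $a\mapsto 0$, $c\mapsto\infty$, $b\mapsto b>0$, every function $f_{\H,b}(z)+c\,\sqrt{b-z}/z$ with $c\in\R$ satisfies all of these conditions, and for $c=2/\sqrt{\pi}$ one even gets $-(\pi(b-z))^{-1/2}$, the limit of the degenerate observable with no $\minus$ arc. So listing the boundary conditions and singularities, and asserting uniqueness, cannot identify the limit; the claim that the residue at $a$ is ``fixed by the normalisation'' is circular.

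The paper resolves this with three ingredients absent from your proposal. First, an a priori bound: $H^\bullet$ must be shown uniformly bounded away from $a$ (this needs a separate Harnack-type argument, given in the Appendix), after which any subsequential limit $h$ of $H$ is harmonic, $\equiv 0$ on $\plus\cup\minus$, $\equiv 1$ on $\free$, non-negative, hence $h=\mathrm{hm}(\free)+\alpha P_a$ with an unknown mass $\alpha\ge0$ at $a$ ($P_a$ the Poisson kernel). Second, the sign of the outer normal derivative of $h$ on the free arc (\cite[Remark 6.3]{CHS2} combined with Lemma \ref{lemma: H}) together with single-valuedness of $f=\sqrt{2\partial_z h}$ forces $\alpha\in\{0,4\varphi_\bcond(b)\}$. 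Third, $\alpha=0$ is excluded (when $a\ne b$) by comparison with the degenerate observable of Lemma \ref{lemma: aa}: one uses the exact discrete boundary values $\norF_{\bcond^\delta}(\nor{c})=\norF_{\bcond^\delta_{\arb}}(\nor{c})$ but $\norF_{\bcond^\delta}(\nor{b})=-\norF_{\bcond^\delta_{\arb}}(\nor{b})$, so that $H^\dagger$ built from the difference takes boundary values $2$ and $0$ on the two arcs and, by the harmonic-measure estimate \eqref{eq: hm}, cannot tend to a constant. Your ``Neumann-type control on $\free$'' gestures at the second point but neither it nor the third appears in usable form. (A minor inaccuracy besides: the free-edge toggle $S\mapsto S\bigtriangleup(q_1\cup z\cup q_2)$ changes the winding by $\pi/2$, not $\pi$, and the resulting condition \eqref{eq: bc_bc} is stated at mid-edges of $\free$ via the extension \eqref{eq: continuation}; at the outer-normal corners of the free arc the ``real'' relation \eqref{eq: bc_cb} still holds --- it is s-holomorphicity that fails there.)
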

 
\begin{rem}
(i) This is a ``bulk'' version of \cite[Theorem 10]{HonKyt}, which was the main ingredient in the proof of the convergence of interfaces.

\noindent (ii) We do not assume that $a\neq b$; in particular, we do allow $a^\delta=b^\delta$, i. e. no $-$ arc. This could be used to prove the convergence of interfaces beyond the time it hits the free arc, the result obtained in a different way in \cite{BDH}.
\end{rem}

\noindent (iii) The proof goes along the lines of \cite{CHS2}; we prove the discrete holomorphicity of $F$ and establish Riemann type boundary conditions, and then we consider the imaginary part of the discrete integral of $F^2$, transforming these boundary conditions to Dirichlet ones. The only additional work is to take into account the free part of the boundary. 
 
\begin{lemma}
\label{lemma: dhol}
The function $F_{\Od,\bcond^\delta}(\cdot)$ satisfies the \emph{s-holomorphicity condition} \cite{CHS2}
\begin{equation}
 \text{Proj}_{l_q}F(z)= F(q)
\label{eq: s-hol}
\end{equation}
whenever a corner $q$ and an edge $z$ are adjacent, $z\notin \free$ and $z\neq \nor{a}$, where $\text{Proj}_{l}w$ denotes the orthogonal projection of the complex number $w$ to the line $l\subset\C$.
\end{lemma}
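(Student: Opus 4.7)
My plan is to follow the standard local bijective argument of \cite{CHS2}, the only genuine addition being to verify that the free arc is invisible to the computation. Fix an edge midpoint $z\notin\free$ and an adjacent corner $q$, and let $v$ denote the common endpoint of the half-edge $vz$ and the corner edge $vq$ in the decorated graph.

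The core step is to set up the involution
$$
\Phi:\{S\in\Conf(\Od,\nor{a},z):vz\in S\}\longleftrightarrow\Conf(\Od,\nor{a},q),\qquad \Phi(S)=(S\setminus\{vz\})\cup\{vq\}.
$$
Inspection of degrees shows $\Phi$ preserves parity at $v$ while flipping it at $z$ and $q$, so it is a well-defined bijection. Under $\Phi$ the weight $x^{|S\setminus\free|}$ is multiplied by $\cos\tfrac{\pi}{8}$ (a half-edge of weight $\sqrt{x}$ is replaced by a corner edge of weight $\sqrt{x}\cos\tfrac{\pi}{8}$), and the winding of the terminal tangent of the associated path changes by a fixed angle which, by the geometry of the decorated graph, is $\pm\pi/4$. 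Combined with the definition $\eta_q=(ie/|e|)^{-1/2}$, which absorbs precisely this phase, and with an analogous bijection on the complementary set $\{S:v'z\in S\}$ (where $v'$ is the other endpoint of the underlying edge) paired with the corner on the $v'$ side of $z$, these ingredients rearrange exactly to $F(q)=\mathrm{Proj}_{l_q}F(z)$, which is the projection identity being claimed.

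The only place where $\free$ could conceivably intrude is the weight ratio under $\Phi$, but both $vz$ and $vq$ lie off $\free$: the half-edge $vz$ is off $\free$ because a half-edge of the free arc would force $z\in\free$, which is excluded, while corner edges are never part of $\free$ by construction. Moreover, the parity condition defining $\Conf(\Od,\cdot,\cdot)$ is purely combinatorial and is insensitive to the weight assigned to individual edges, so whether $v$ sits on, near, or far from the free arc plays no role in the bijection. The main technical point of the proof is therefore the winding accounting under the single-edge swap and matching it against the definition of $\eta_q$, a planar computation identical to the one in \cite{CHS2}; I expect this to be the principal (though routine) source of bookkeeping, with the role of $\free$ reducing to the one-line remark above.
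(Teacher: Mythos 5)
There is a genuine gap, and it sits exactly where criticality must enter. Your map $\Phi(S)=(S\setminus\{vz\})\cup\{vq\}$ is \emph{not} a bijection onto $\Conf(\Od,\nor{a},q)$: since $q$ is a leaf, every $T\in\Conf(\Od,\nor{a},q)$ contains $vq$, but the midpoint $z$ may have degree $0$ \emph{or} degree $2$ in $T$, and your $\Phi$ only produces the degree-$0$ configurations. The missing half of $\Conf(\Od,\nor{a},q)$ corresponds, under the symmetric difference with the path $(zq)$ used in the paper, to the configurations of $\Conf(\Od,\nor{a},z)$ that reach $z$ through the \emph{other} endpoint $v'$. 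You instead propose to pair those with ``the corner on the $v'$ side of $z$,'' but that proves a statement about a different corner $q'$ and contributes nothing to the identity $\mathrm{Proj}_{l_q}F(z)=F(q)$ for the given $q$: the $v'$-side terms of $F(z)$ must still be projected onto $l_q$ and matched against the degree-$2$ terms of $F(q)$.

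In that second case the comparison is not a single-edge swap: one adds \emph{both} the half-edge $vz$ and the corner $vq$ (weight ratio $x\cos\frac{\pi}{8}$, not $\cos\frac{\pi}{8}$) and the curve turns by $\pm\tfrac{3\pi}{4}$ rather than $\pm\tfrac{\pi}{4}$. The term-by-term projection identity then requires $\cos\tfrac{3\pi}{8}=x\cos\tfrac{\pi}{8}$, i.e.\ exactly the critical value $x=\sqrt2-1$; in the case you did treat, the projection produces $\cos\tfrac{\pi}{8}$, which cancels the ratio $1/\cos\tfrac{\pi}{8}$ without any condition on $x$. Your write-up never uses $x=x_c$ (and also treats the phase as ``absorbed by $\eta_q$'' rather than as the $\cos$-factor produced by the orthogonal projection), which is a reliable sign the argument is incomplete: s-holomorphicity fails off criticality, so no correct proof can avoid the identity $x=\cos\tfrac{3\pi}{8}/\cos\tfrac{\pi}{8}$. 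Your observation about the free arc (that $z\notin\free$ guarantees the half-edge carries weight $\sqrt{x}$, while corners never lie on $\free$) is correct and is indeed the only role $\free$ plays, as in the paper.
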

\begin{proof} The proof is the same as for the observable in \cite{CHS2}; see Appendix. \end{proof}

 The equation (\ref{eq: s-hol}) allows one to recover the value of $F(z)$ from the values at any two corners adjacent to $z$. In particular, if $z$ is oriented, and if $q_{1,2}$ are the corners adjacent respectively to the beginning and to the end of $z$ \emph{on its left side}, then 
 \begin{equation}
 \label{eq: continuation}
 F(z)=\sqrt{2}e^{-\frac{i\pi}{4}}F(q_2) + \sqrt{2}e^{\frac{i\pi}{4}}F(q_1).
 \end{equation}
When an edge $z$ belongs to $\free$, the s-holomorphicity is incompatible with the definition~(\ref{eq: obs}), and it is more convenient to use the former rather then the latter. Thus, for $z\in \free$, we define $F(z)$ by (\ref{eq: continuation}), orienting $z$ to have the domain on its left.

 \begin{lemma}
\label{lemma: bc}
\begin{enumerate}
 \item If $z\neq \nor{a}$ is a discrete outer normal, then 
 \begin{equation}
 \label{eq: bc_cb}
 F_{\Od,\bcond^\delta}(z)=\eta_{z}Z(\Od,\nor{a},z,\free),
 \end{equation}
 where the sign of $\eta_{z}$ is defined by extending continuously from $\eta_{\nor{a}}$ (used in the definition of $F$) along the boundary in the counterclockwise direction. 
 \item If an edge $z\in\free$ is oriented to have $\Od$ on its left, then 
  \begin{equation}
 \label{eq: bc_bc}
 F_{\Od,\bcond^\delta}(z)\in iz^{-\frac{1}{2}}\R
 \end{equation}
\end{enumerate}
\end{lemma}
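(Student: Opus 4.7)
For part~(1), the plan is to show that the phase $e^{-i\wind(S)/2}$ in \eqref{eq: obs} is constant across $S \in \Conf(\Od, \nor{a}, z)$ when $z$ is an outer normal. I would argue this topologically: since $\Od$ is simply connected and both endpoints are outer normals, closing the path $\gamma$ from $\nor{a}$ to $z$ along the boundary yields a closed curve whose winding is pinned down modulo $4\pi$ by the boundary directions of $\nor{a}$ and $z$ alone, and, together with the already-noted well-definedness of the winding phase, this makes $e^{-i\wind(S)/2}$ a single constant $c$. Summing over $S$ then gives $F(z)=(i\eta_{\nor{a}}c)\cdot Z(\Od,\nor{a},z,\free)$, and the continuous-extension sign convention for $\eta_z$ identifies the constant as $\eta_z$, yielding~\eqref{eq: bc_cb}.

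For part~(2), I would use \eqref{eq: continuation} to write $F(z)=\sqrt{2}e^{-i\pi/4}F(q_2)+\sqrt{2}e^{i\pi/4}F(q_1)$, where $q_1,q_2$ are the interior corners adjacent to the endpoints $v_1,v_2$ of $z$ on its left (domain) side. The proof combines two ingredients. First, Lemma~\ref{lemma: dhol} (s-holomorphicity) forces $F(q_i)\in l_{q_i}$. Second, I would construct an involution $\iota\colon S\mapsto S\triangle T$ between $\Conf(\Od,\nor{a},q_1)$ and $\Conf(\Od,\nor{a},q_2)$, where $T=\{v_1q_1,\,v_1m_z,\,m_zv_2,\,v_2q_2\}$ is the four-edge path along $z$ in the decorated graph. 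This $\iota$ is a bijection (symmetric difference with the path $T$ toggles parities only at its endpoints $q_1,q_2$), is weight-preserving (the two corner-edge factors $\sqrt{x}\cos(\pi/8)$ cancel by the endpoint-parity constraint, while the half-edges of $z$ carry weight~$1$ since $z\in\free$), and shifts the winding of $\gamma$ by $\Delta\equiv\pi/2\pmod{4\pi}$ from a local turning-angle computation at $v_1,m_z,v_2$. Therefore $F(q_2)=e^{-i\pi/4}F(q_1)$, and combining this with $F(q_i)\in l_{q_i}$ leads, after a direct algebraic substitution, to $F(z)\in iz^{-1/2}\R$.

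The main obstacle will be verifying the uniform winding shift $\Delta\equiv\pi/2\pmod{4\pi}$: because $\iota$ modifies $S$ near $z$ in ways that depend on the local structure (whether the half-edges of $z$ were already in $S$, how the non-crossing pairings at $v_1,v_2$ are chosen, and whether $\iota$ absorbs or emits a loop), the absolute winding shift can vary by $\pm 2\pi$ across cases; one has to check that all these values agree modulo $4\pi$, so that the phase $e^{-i\Delta/2}=e^{-i\pi/4}$ is unambiguous.
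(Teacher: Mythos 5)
Your proposal is correct and follows essentially the same route as the paper: part (1) is the observation that the winding phase is constant for outer normals (fixed by closing the path along the boundary), and part (2) uses exactly the paper's bijection $S\mapsto S\bigtriangleup(q_1\cup z\cup q_2)$, with weight preservation because the corner factors cancel and the half-edges of $z\in\free$ carry no weight, giving $F(q_2)=e^{-i\pi/4}F(q_1)$ and hence, via (\ref{eq: continuation}) and $F(q_i)\in l_{q_i}$, the claim $F(z)\in iz^{-1/2}\R$. The only cosmetic difference is that you invoke Lemma \ref{lemma: dhol} for $F(q_i)\in l_{q_i}$, which also follows directly from the definition since the winding at a fixed corner is determined mod $2\pi$; the case-checking of the winding shift mod $4\pi$ that you flag is precisely the verification the paper leaves implicit.
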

\begin{proof}
The first statement follows from an easy observation that if $z$ is a discrete outer normal, then for any $S\in\Conf(\Od,\nor{a},z)$ one has $i\eta_{\nor{a}} e^{-i\wind(S)/2}=\eta_z$, where $\eta_z$ is defined as in the assertion. For the second statement, note that if $q_{1,2}$ are the two corners inside $\Od$ adjacent to the beginning and to the end of $z$, respectively, then 
\begin{equation}
F(q_2)= e^{-\frac{\pi}{4}i}F(q_1).
\label{eq: bc_proof}
\end{equation}
To prove this identity, use the bijection $p: S\mapsto S \bigtriangleup (q_1\cup z\cup q_2)$ from $\Conf(\Od,\nor{a},q_1)$ to $\Conf(\Od,\nor{a},q_2)$, where $\bigtriangleup$ stands for the symmetric difference. For any $S\in \Conf(\Od,a,q_1)$, one has $\wind(p(S))=\wind(S)+\frac{\pi}{2}$ and $|S\backslash \free|=|p(S)\backslash \free|$, hence (\ref{eq: bc_proof}) follows. It is clear that (\ref{eq: bc_proof}) and the definition (\ref{eq: continuation}) of $F(z)$ imply that $F(z)\in e^{-\frac{\pi}{8}i}l_{q_1}=e^{\frac{\pi}{8}i}l_{q_2}=iz^{-\frac{1}{2}}\R$.
\end{proof}

The boundary conditions established in Lemma \ref{lemma: bc} may be informally summarized as follows: if $n_z$ denotes the outer normal vector at $z$, then (\ref{eq: bc_cb}) implies that $F(z)(in_z)^{\frac{1}{2}}\in \R$ when $z\in\plus\cup\minus$, and (\ref{eq: bc_bc}) means that $F(z)(in_z)^{\frac{1}{2}}\in i\R$ when $z\in \free$. Of course (\ref{eq: bc_cb}) also holds for the discrete outer normals at the vertices of the free boundary arc; however, the s-holomorphicity (\ref{eq: s-hol}) fails for $z\in\free$ and $q$ an adjacent discrete outer normal. Therefore, as we will see, the relevant boundary conditions for $\free$ are given by (\ref{eq: bc_bc}) and not by (\ref{eq: bc_cb}).

Define two functions $H^\circ$, $H^\bullet$ on the faces and on the vertices of $\Od$ respectively by the following rule. Set $H^\circ(u_0)=0$, where $u_0$ is some face of $\delta\Z^2\setminus \Od$ adjacent to $\plus\cup\minus$; we will show in a moment that the choice of $u_0$ is not important. Next, if $v$ is a vertex of $\Od$, $u$ is a face adjacent to $v$, and $q$ is the corner adjacent to both of them, then we put
\begin{equation}
  H^\bullet(v)-H^\circ(u)=\sqrt{2}\delta |\norF(q)|^2.
  \label{eq: defH}
\end{equation}
Note that $H$ is well defined by this rule at all the vertices and all the faces of $\Od$ and at the faces adjacent to $\plus\cup\minus$. Indeed, if $q_{1,2,3,4}$ are the corners (in cyclic counterclockwise order) adjacent to an edge $e$ of $\Od\backslash \free$, then $l_{q_1}$ and $l_{q_2}$ are orthogonal to $l_{q_3}$ and $l_{q_4}$, respectively. Hence, the s-holomorphicity condition (\ref{eq: s-hol}) ensures that
\begin{equation}
\label{eq: orthogonal}                                                                                                                                                                                                                                             
|{F}(q_1)|^2+|{F}(q_3)|^2=|{F}(e)|^2=|{F}(q_2)|^2+|{F}(q_4)|^2,                                                                                                                                                                                                                                           \end{equation}
which means that summing (\ref{eq: defH}) around this edge yields zero. We also mention that $H$ is a version of the imaginary part of the discrete integral of $F^2$, namely, if $v\sim v'$ (respectively, $u\sim u'$), then
\begin{eqnarray}
H^\bullet(v)-H^\bullet(v')= \Im \left[\norF^2\left(\frac{v+v'}{2}\right)(v-v')\right],\label{eq: intHbullet}\\
H^\circ(u)-H^\circ(u')= \Im \left[\norF^2\left(\frac{u+u'}{2}\right)(u-u')\right].\label{eq: intHcirc}
\end{eqnarray}
The first identity is easily checked by expressing both sides in terms of $\norF(q_{1,2})$, where $q_{1,2}$ are two corners adjacent to the edge $(vv')$ on the same side, using (\ref{eq: defH}) and (\ref{eq: continuation}) respectively. The second one is similar. 
\begin{lemma}
\label{lemma: H}
 The functions $H^{\bullet,\circ}$ satisfy the following properties:
 \begin{itemize}
 \item $H^\circ\equiv 0$ at the faces of $\delta\Z^2\setminus \Od$ adjacent to $\plus\cup\minus$, and $H^\bullet\equiv 1$ at the vertices of $\free$. For any vertex $v\in\plus\cup\minus$ one has $H^{\bullet}(v)\geq 0$, and for any face $u\in \Od$ adjacent to $\free$ one has $H^\circ(u)\leq 1$.
  \item The inequalities $\Delta H^\bullet(v)\geq 0$,  $\Delta H^\circ(u)\leq 0$, where $\Delta$ stands for the standard discrete Laplacian, hold true for any \emph{interior} vertex $v\in \Od$ and any face $u\in\Od$ not adjacent to $\free$. Moreover, they also hold for the boundary vertices $v\in\plus\cup\minus$, $v\neq a$ and for the faces $u\in \Od$ adjacent to $\free$, with the Laplacian modified on the boundary:  $\Delta H(z) =\sum\limits_{w\sim z}c(z,w)(H(w)-H(z))$, where $c(z,w)=1$ unless $w$ is either a face of $\mesh\Z^2\setminus \Od$ adjacent to $\free$, or a vertex of $\mesh\Z^2\setminus \Od$ adjacent to $\plus\cup\minus$, in which case $c(z,w):=2(\sqrt{2}-1)$ and we set $H^\circ(w)=1$ (respectively, $H^\bullet(w)=0$).
  \end{itemize}
\end{lemma}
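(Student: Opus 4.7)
\emph{Boundary values.} The well-definedness of $H^{\bullet,\circ}$ at interior vertices and faces of $\Od$ is guaranteed by the already established identity (\ref{eq: orthogonal}). To extend $H^\circ$ to the outer faces adjacent to $\plus\cup\minus$, I set $H^\circ(u_0)=0$ at one such face and propagate via (\ref{eq: defH}) along the outer layer. The key consistency input is that at each $v\in\plus\cup\minus$ the two outer corners $q,q'$ at $v$ satisfy $|F(q)|=|F(q')|$; by (\ref{eq: bc_cb}) this reduces to equality of the partition functions $Z(\Od,\nor{a},q,\free)$ and $Z(\Od,\nor{a},q',\free)$, which follows from the weight-preserving involution $S\mapsto S\bigtriangleup\{(v,q),(v,q')\}$ between the corresponding configuration spaces. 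The propagation then gives $H^\circ\equiv 0$ on all outer faces adjacent to $\plus\cup\minus$, and hence $H^\bullet(v)=\sqrt 2\,\delta\,|\norF(q)|^2\geq 0$ at every $v\in\plus\cup\minus$. At $v=b$ the relevant outer corner is $\nor{b}$, for which (\ref{eq: bc_cb}) gives $|F(\nor{b})|=Z(\Od,\nor{a},\nor{b},\free)$, so $H^\bullet(b)=1$. Next, the proof of (\ref{eq: bc_bc}) already established $F(q_2)=e^{-i\pi/4}F(q_1)$ for the two inner corners at the endpoints of any free edge $z$, whence $|F(q_1)|=|F(q_2)|$; substituting into (\ref{eq: defH}) yields $H^\bullet(v_1)=H^\bullet(v_2)$, so $H^\bullet$ is constant along $\free$ and equal to $1$. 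The bound $H^\circ(u)\leq 1$ at inner faces $u$ adjacent to $\free$ is then immediate from $H^\bullet(v)-H^\circ(u)=\sqrt 2\,\delta\,|\norF(q)|^2\geq 0$ combined with $H^\bullet\equiv 1$ on $\free$.

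\emph{Super/subharmonicity.} For an interior vertex $v$, I apply (\ref{eq: intHbullet}) to express $\Delta H^\bullet(v)=\sum_{v'\sim v}\Im[\norF^2(m)(v'-v)]$ over the four incident midpoints $m$. Using (\ref{eq: s-hol}) to rewrite each $F(m)$ in terms of $F$ at the adjacent corners, the orthogonality $l_q\perp l_{q'}$ of opposite corners (which underlies (\ref{eq: orthogonal})) collapses the sum into an explicit nonnegative quadratic expression in $|F|$ at the four corners around $v$, reproducing the standard computation of \cite{CHS2, Smirnov10}; the same manipulation yields $\Delta H^\circ(u)\leq 0$ at interior faces. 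At a boundary vertex $v\in\plus\cup\minus$, $v\neq a$, only three of the four standard neighbor contributions are available; the missing ``outer'' contribution is replaced by a term controlled by (\ref{eq: bc_cb}) at the outer normal corners of $v$. A direct calculation shows this contribution matches precisely $2(\sqrt 2-1)(H^\bullet(w)-H^\bullet(v))$ with $H^\bullet(w):=0$, so the modified Laplacian has the claimed sign. The case of $u$ adjacent to $\free$ is parallel, now invoking (\ref{eq: bc_bc}) together with (\ref{eq: continuation}) in place of (\ref{eq: s-hol}).

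\emph{Main obstacle.} The interior computation is the standard s-holomorphic move. The genuinely new technical content lies at the free arc: the s-holomorphicity (\ref{eq: s-hol}) fails on free edges, where $F$ is instead determined by (\ref{eq: continuation}) combined with the rotated projection (\ref{eq: bc_bc}). Verifying that the specific weight $c=2(\sqrt 2-1)=2x_c$ is exactly what produces $\Delta H^\circ(u)\leq 0$ at faces adjacent to $\free$ requires a direct computation in which the critical value $x_c=\sqrt 2-1$ enters essentially, through the relation between $|F|$ at the inner corners across a free edge. This is where the critical temperature plays its role, and it is the main ingredient not already present in the $\pm$-only setup of \cite{CHS2}.
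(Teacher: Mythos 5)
Your proposal is correct and takes essentially the same route as the paper: boundary constancy of $H^{\circ,\bullet}$ from the equality of moduli at adjacent boundary corners (via (\ref{eq: bc_cb}), the weight-preserving bijection, and (\ref{eq: bc_proof})) together with the normalization at $\nor{b}$, and then the standard s-holomorphic Laplacian computation of \cite{CHS2} supplemented by a direct check that the modified boundary weight $2(\sqrt{2}-1)$ reproduces the generic edge contribution, exactly as in the paper's Appendix. One inessential quibble about your closing remark: the relation (\ref{eq: bc_proof}) across a free edge involves no edge weights at all (free edges are weightless and the two corner weights cancel), so criticality enters this lemma only through the s-holomorphicity already established in Lemma \ref{lemma: dhol}, while the boundary weight $2(\sqrt{2}-1)$ arises from the trigonometry of the $e^{-i\pi/4}$ phase shift, its numerical coincidence with $2x_c$ notwithstanding.
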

\begin{proof}
 Suppose $v$ is a vertex of $\plus\cup\minus$, $u_1\sim u_2$ are two faces of $\delta\Z^2\setminus \Od$ adjacent to $v$ and $q_{1,2}=(v+u_{1,2})/2$ are the corresponding corners. Then, (\ref{eq: bc_cb}) implies that $|F(q_1)|=|F(q_2)|$, hence $H^\circ(u_1)=H^\circ(u_2)$. Similarly, if $v_1\sim v_2$ are two vertices of $\free$, then (\ref{eq: bc_proof}) implies that $H^\bullet(v_1)=H^\bullet(v_2)$. Consequently, $H^\circ\equiv 0$ along $\plus\cup\minus$, and $H^\bullet$ is a constant along $\free$. However, we have $H^\bullet(b)=\sqrt{2}\delta |\norF(\nor{b})|^2=1$ by (\ref{eq: bc_cb}) and the normalization of $\norF$; thus $H^\bullet\equiv 1$ on $\free$. The inequalities follow readily from (\ref{eq: defH}).
 
 For the second clause of the lemma, see \cite[Proposition 3.6]{CHS2} or Appendix.
\end{proof}

Given boundary conditions $\bcond^\delta$, denote by $\bcond^\delta_{1}$ the boundary conditions specified by the marked points $(a^\delta_1,b^\delta_1,c^\delta_1)=(b^\delta,b^\delta,c^\delta)$, i. e. with the same $\free$ arc as $\bcond^\delta$ but with no $\minus$ arc.

\begin{lemma}
\label{lemma: aa}
 As $(\Od,a^\delta,c^\delta)$ approximates $(\Omega,a,c)$, the function $\norF_{\Od,\bcond_{\arb}^\delta}(\cdot)$ converges uniformly on compact subsets of $\Omega$ to $f_{\Omega,\bcond_{\arb}}=(\varphi'_{\bcond}(\cdot))^\frac12(-\pi\varphi_{\bcond}(\cdot))^{-\frac{1}{2}}$.
\end{lemma}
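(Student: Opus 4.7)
The strategy is to apply the $s$-holomorphic scheme of \cite{CHS2} in the degenerate setting $\bcond_1^\delta$: since $a^\delta_1=b^\delta_1=b^\delta$, the $\minus$ arc is empty and the boundary of $\Od$ reduces to just a $\plus$ arc and a $\free$ arc meeting at the prime ends $a$ and $c$. The plan is first to identify the scaling limit of the auxiliary function $H=(H^\bullet,H^\circ)$ of Lemma \ref{lemma: H} as the harmonic measure of the $\free$ arc, and then to deduce convergence of $\norF$ itself using the $s$-holomorphicity of Lemma \ref{lemma: dhol} together with the identities (\ref{eq: intHbullet})--(\ref{eq: intHcirc}). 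The main technical step, as usual in this machinery, is the passage from uniform control of $|\norF|^2\sim \partial H$ to locally uniform control of $\norF$ itself; this is where the $s$-holomorphic structure is essential, and it is also what pins down the correct branch of the square root in the limit.

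By Lemma \ref{lemma: H}, $H^\circ\equiv 0$ on faces of $\mesh\Z^2\setminus\Od$ adjacent to $\plus$ and $H^\bullet\equiv 1$ on vertices of $\free$; moreover $H^\bullet$ is subharmonic and $H^\circ$ is superharmonic (in the appropriate boundary-modified sense), and both take values in $[0,1]$ uniformly in $\mesh$. A standard precompactness argument for bounded discrete sub- and superharmonic functions with the same boundary data shows that both $H^\bullet$ and $H^\circ$ converge, uniformly on compact subsets of $\Omega$, to the unique bounded harmonic function in $\Omega$ with $h\equiv 0$ on $\plus$ and $h\equiv 1$ on $\free$, namely the harmonic measure $h(\cdot)=\omega_\Omega(\cdot,\free)$. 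Under $\varphi_\bcond:\Omega\to\H$ with $a\mapsto 0$, $c\mapsto\infty$ and $\free$ mapped to the positive real axis, this reads $h_\H(z)=1-\arg(z)/\pi$.

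Equipped with the uniform bounds on $H$ and the $s$-holomorphicity of $\norF$, the standard estimates of \cite[Section 3]{CHS2} yield precompactness of $\{\norF_{\Od,\bcond_1^\delta}\}$ on compact subsets of $\Omega$ and holomorphicity of every subsequential limit $f$. Passing (\ref{eq: intHbullet})--(\ref{eq: intHcirc}) to the limit gives $\Im\int f^2\,dz = h$, equivalently $f^2 = 2i\partial_z h$; a direct computation in $\H$ then yields $f_\H^2(z)=-(\pi z)^{-1}$. The branch of the square root is fixed by the prescribed sign of $\eta_{\nor{b}}$ in the definition of $F$ (transported into the bulk by continuity via the $s$-holomorphic machinery), giving $f_\H(z)=(-\pi z)^{-1/2}$. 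Finally, conformal covariance of the normalization pulls this back to $\Omega$ as $f_{\Omega,\bcond_1}=(\varphi_\bcond'(\cdot))^{1/2}(-\pi\varphi_\bcond(\cdot))^{-1/2}$, and uniqueness of the subsequential limit promotes this to convergence of the whole family.
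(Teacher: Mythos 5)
Your proposal is correct and takes essentially the same route as the paper: the a priori bound $0\le H^{\circ,\bullet}\le 1$ from Lemma \ref{lemma: H}, compactness of $\norF$ via \cite[Theorem 3.12]{CHS2}, identification of the limit of $H$ as $\Im\int f^2$ with boundary values $0$ on $\plus$ and $1$ on $\free$ obtained from the harmonic-measure comparison (\ref{eq: hm}), and the explicit computation in $\H$. The only (cosmetic) differences are the order of the steps and that your ``standard precompactness argument'' for $H$ should really be spelled out as the two-sided estimate $\mathrm{hm}^{\circ}(\free,\cdot)\le H^{\circ}\le H^{\bullet}\le 1-\mathrm{hm}^{\bullet}(\plus,\cdot)$, which pins down the limit here only because $\plus$ and $\free$ are complementary arcs (a bounded discrete subharmonic function with prescribed boundary data alone need not converge to the harmonic extension).
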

In fact, this lemma is already contained in \cite{Smirnov06} since $F_{\Od,\bcond^\delta_{\arb}}(\cdot)$ is nothing but Smirnov's FK-Ising observable for the medial lattice. This can be seen either by observing that it solves the same discrete boundary value problem, or by the Edwards-Sokal coupling.
\begin{proof}
  Define $H^{\circ,\bullet}=H^{\circ,\bullet}_{\bcond_{\arb}^\delta}$ by (\ref{eq: defH}) with $\norF=\norF_{\Od,\bcond_{\arb}^\delta}$. By Lemma \ref{lemma: H}, $0=\min_{\partial \Od}H^\circ=\min_{\Od}H^\circ$ (since $H^\circ$ is superharmonic) and  $\max_{\Od}H^\bullet=\max_{\partial \Od}H^\bullet=1$ (since $H^\bullet$ is subharmonic except from $\free$ where is is equal to 1). Taking into account (\ref{eq: defH}), we infer that $0\leq H^{\circ,\bullet}\leq 1$. By \cite[Theorem 3.12]{CHS2}, the functions $F_{\Od,\bcond^\delta_{\arb}}$ form an equicontinuous family on compact subsets of $\Omega$, and thus have subsequential limits. To prove the lemma, it suffices to show that any such limit $f$ must be equal to $f_{\Omega,\bcond_{\arb}}$
  
  Since $F_{\Od,\bcond^\delta_{\arb}}$ are discrete holomorphic, $f$ has to be a holomorphic function; moreover,  by (\ref{eq: intHbullet}) -- (\ref{eq: intHcirc}), $H^{\circ,\bullet}$ then converge to $h(w)=\Im\int^w f^2(z)dz$. Let us establish the boundary conditions for $h$. By the subharmonicity of $H^\bullet$ and the superharmonicity of $H^\circ$, if $v$ is a vertex, $u$ is a face adjacent to $v$, and if $\gamma\subset \partial\Od$, then 
 $$
\text{hm}^\circ(\gamma,u)\min\limits_{\gamma}H^\circ+(1-\text{hm}^\circ(\gamma,u))\min\limits_{\Od}H^\circ\leq H^\circ(u)\leq $$
\begin{equation}
\label{eq: hm}
H^\bullet (v)\leq \text{hm}^\bullet(\gamma,v)\max\limits_{\gamma}H^\bullet+(1-\text{hm}^\bullet(\gamma,v))\max\limits_{\Od}H^\bullet 
\end{equation}
where $\text{hm}^{\bullet,\circ}(\gamma,\cdot)$  denotes the discrete harmonic measure, i. e. the probability that the simple random walk (to be precise, the one corresponding to the modified Laplacian) on vertices (respectively, on faces) of $\Od$ started from $v$ (respectively, $u$) will exit $\Od$ at $\gamma$. When $\delta\to 0$, both discrete harmonic measures converge to the continuous harmonic measure (see \cite{CHS1}), hence, taking $\gamma = \plus$ (respectively, $\gamma = \free$)  in (\ref{eq: hm}) shows that $h(z)\to 0$ as $z\to \plus$ (respectively, $h(z)\to 1$ as $z\to \free$). Since $h$ is bounded, these boundary conditions determine it uniquely as the harmonic measure of $\free$, or, in terms of the conformal map:
$$
h(z)= 1-\frac{1}{\pi}\Im\log(\varphi_{\bcond}(z)).
$$
Differentiating and taking the square root concludes the proof.
 \end{proof}
 
 \begin{rem}
\label{rem: at_a}
  In fact, two implicit assumptions on $\Od$ were used in the proof: first, that an \emph{edge} outer normal $\nor{a}$ at $a=b$ exists; second, that no discrete outer normal edge has both endpoints in $\Od$. Neither of these minor technicalities is essential. We could take $\nor{a}$ in the definition of the observable to be a copy of $\nor{b}$ slightly turned towards the free arc. If an edge violates the second assumption, one should view it as two distinct normals at different points of $\Od$, with the endpoints declared not to belong to $\Od$, implying the corresponding extension of $H^\bullet=0$ to those endpoints when applicable.
 \end{rem}
 
 \begin{proof}[Proof of Proposition \ref{prop: convergence}.] 
 As in the proof of Lemma \ref{lemma: aa}, we have $0=\min_{\partial \Od}H^\circ=\min_{\Od}H$. This time we have no corresponding upper bound, since the subharmonicity of $H^\bullet$ fails at $a^\delta$, and we have no control on its value there. 
 
 Nevertheless, assume for a moment that for any $r>0$, the functions $H^\bullet$ are uniformly bounded on $\Od\cap \{z:|z-a|>r\}$. Then, arguing as in the proof of Lemma \ref{lemma: aa}, we see that $F_{\Od,\bcond^\delta}$ and $H^{\bullet,\circ}$ have subsequential limits $f$ and $h=\Im\int^w f^2(z)dz$ respectively, and that any subsequential limit $h$ must satisfy the following properties: $h\geq 0$ is a harmonic function, bounded on each $\Omega\cap \{z:|z-a|>r\}$, and such that $h\equiv 1$ on $\free$, $h\equiv 0$ on $\minus\cup\plus$. 
  
 These boundary conditions imply that $h$ is a sum of the Poisson kernel at $a$ with non-negative mass and the harmonic measure of the boundary arc $\free$, that is, 
 $$
  h(z)=1-\frac{1}{\pi}\Im\left[ \log (\varphi_\bcond(z)-\varphi_\bcond(b))]+ \frac{\alpha}{\varphi_\bcond(z)}\right],\quad \alpha\geq 0. 
 $$
 Moreover, it follows from \cite[Remark 6.3]{CHS2} and the first clause of Lemma \ref{lemma: H} that the outer normal derivative of $h$ is non-negative on $\free$, more precisely, that there is no point on $(bc)$ such that $h\geq 1$ in a neighborhood of that point. Observe that the derivative 
 $$
 \partial_w\left(\log (w-\varphi_\bcond(b))]+ \frac{\alpha}{w}\right)=\frac{1}{w-\varphi_\bcond(b)}-\frac{\alpha}{w^2}
 $$
 has two simple zeros on $(\varphi_\bcond(b);\infty)$ if $\alpha>4\varphi_\bcond(b)$ and a simple zero in $\H$ if $0<\alpha<4\varphi_\bcond(b)$. The former is impossible by the normal derivative condition, while the latter would imply that $f=\sqrt{2\partial_z{h}}$ is not a single-valued function in $\Omega$, and thus is also impossible. Hence, $\alpha=0$ or $\alpha=4\varphi_\bcond(b)$.
 
 Let us check that if $a\neq b$, then the first alternative cannot hold. If it did, this would mean by Lemma \ref{lemma: aa} that $\norF_{\bcond^\delta_{\arb}}(\cdot)$ and $\norF_{\bcond^\delta}(\cdot)$ have the same limit (hereinafter we drop $\Od$ from subscripts). This in its turn would imply that if we consider $F^{\dagger}=\norF_{\bcond^\delta}-\norF_{\bcond^\delta_{\arb}}$ and define the corresponding discrete integral $H^\dagger$ by (\ref{eq: defH}), then $H^\dagger$ tends to a constant. Look at the  values of $F^{\dagger}$ at $\nor{b}$ and $\nor{c}$. By Lemma \ref{lemma: bc}, one has $\norF_{\bcond^\delta}(\nor{c})=2^{-\frac14}\mesh^{-\frac12}\eta_{\nor{c}}=\norF_{\bcond^\delta_{\arb}}(\nor{c})$ and $\norF_{\bcond^\delta}(\nor{b})=2^{-\frac14}\mesh^{-\frac12}\eta_{\nor{b}}=-\norF_{\bcond^\delta_{\arb}}(\nor{b})$, where the signs of $\eta$'s are chosen by extending continuously from $\eta_{\nor{a}}$ along the boundary in the counterclockwise direction. Consequently, $H^\dagger\equiv 2$ along $(a^\delta b^\delta)$ and $H^\dagger\equiv 0$ along $(b^\delta a^\delta)$. Hence by (\ref{eq: hm}) it cannot 
tend to a constant, which completes the proof that $\alpha=4\varphi_\bcond(b)$. Thus, the subsequential limit of $\tilde{F}$ is unique and is given by
 $$
 \sqrt{2\partial_z h(z)}=(\varphi_\bcond'(z))^\frac12\left(\frac{\varphi^2_\bcond(z)-4\varphi_\bcond(z)\varphi_\bcond(b)+4\varphi^2_\bcond(b)}{\pi\varphi^2_\bcond(z)(\varphi_\bcond(b)-\varphi_\bcond(z))}\right)^{\frac12}=f_{\Omega,\bcond}(z).
 $$
 
 It remains to justify the assumption that $H^{\bullet}$ are uniformly bounded away from $a$. Assume the contrary, i. e., that there is an $r_0>0$ such that $M_\delta=\max\limits_{\Od\cap \{z:|z-a|>r_0\}}H^\bullet\to\infty$. By a version of Harnack's principle for $H$ (see \cite[Proposition 3.11]{CHS2} or Appendix) the functions $M^{-1}_\delta H^\bullet$ are uniformly bounded on $\Od\cap \{z:|z-a|>r\}$ for \emph{all} $r>0$, and thus have subsequential limits. Any such limit $\tilde{h}$ is a non-negative harmonic function which is zero on $\partial \Omega\backslash \{a\}$ (since $M^{-1}_\delta H^{\circ,\bullet}\equiv M^{-1}_\delta\to 0$ on $(b^\delta c^\delta)$), but again by \cite[Remark 6.3]{CHS2} it has non-negative outer normal derivative on $(bc)$. Thus $\tilde{h}\equiv 0$ and 
 $$
 1=\max\limits_{\Od\cap \{z:|z-a|>r_0\}}M^{-1}_\delta H^\bullet\to 0,
 $$ 
 a contradiction which concludes the proof.
 \end{proof}

\section{Observables for general boundary conditions}
\label{sec: multipoint}

In this section we generalize the above construction and Proposition \ref{prop: convergence}. The boundary conditions $\bcond^\delta$ are now specified by three \emph{subsets} $\plus$, $\minus$ and $\free$ of $\partial\Od$, each containing an arbitrary number of arcs. We assume that there are in total $2k+m$ such arcs, of which $k$ carry the free boundary conditions. We denote the $2k$ endpoints of the free arcs by $b_1,\ldots,b_{2k}$ (ordered counterclockwise) so that the $i$-th free arc is $[b_{2i-1},b_{2i}]$. The boundary vertices separating the $\plus$ boundary arcs from the $\minus$ ones are denoted by $a_1,\dots,a_m$. As before, we assign the spin $+1$ (or $-1$ if we wish) to the faces of $\mesh \Z^2\setminus \Od$ adjacent to each free boundary arc. Let $a_{m+1},\dots,a_{m+s}\in\{b_1,\dots,b_{2k}\}$ be the boundary vertices that either separate a $\plus$ arc from a free arc with the assigned spin $-1$, or a $\minus$ arc from a free arc with the assigned spin $+1$. The collection $b_1,\dots,b_
{2k}$, $a_1,\dots,a_{m+s}$ of marked vertices determines the boundary conditions $\bcond$ uniquely up to global spin flip.

Denote by $\Conf(\Od,z_1,\dots,z_{2n})$, where $z_i$ are distinct vertices of the decorated graph, the set of all edge subsets $S$ such that all vertices  except for $z_i$ have an even degree in $S$. The low-temperature expansion
\begin{equation}
\label{eq: lt_multi}
Z(\Od,a_1,\dots,a_{m+s}):=\sum\limits_{S\in\Conf(\Od,a_1,\dots,a_{m+s})}x^{|S\backslash \free|} 
\end{equation}
endows  $\Conf(\Od,a_1,\dots,a_{m+s})$ with the probability measure equivalent to the Ising model on $\Od$ with the boundary conditions $\bcond^\delta$.

Denote by $\nor{b}_i$, $1\leq i\leq 2k$, the discrete outer normal corner at $b_i$ adjacent to the corresponding $\plus$ or $\minus$ arcs. Choose discrete outer normal \emph{edges} $\nor{a_i}$ at $a_i$ (assume for simplicity that such edges exist, although this is not essential, as explained in Remark~\ref{rem: a1}). Define the fermionic observable (using the idea of \cite{HThesis}) by
\begin{equation}
\label{eq: obs_many}
F(z)=F_{\Od,\bcond^\delta}(z)=i\eta_{\nor{a}_1}\sum\limits_{S\in \Conf(\Od,\nor{a}_1,\dots,\nor{a}_{m+s-1},z)}x^{|S\backslash \free|}e^{-i\wind(S)/2},
\end{equation}
where the winding factor of $S$ is defined as follows: we connect the points $\nor{a}_2,\dots,\nor{a}_{m+s-1}$ in pairs by $(m+s+2)/2$ arcs outside the domain, as shown on Fig. \ref{Fig: multi}. Then, every configuration $S$ can be decomposed into a collection of loops and a curve from $\nor{a}_1$ to $z$, without transversal intersections or self-intersections. We define $\wind(S)$ to be the winding of that curve. 

\begin{rem}
 In fact, $F_{\Od,\bcond^\delta}$ is a slight abuse of notation, since $F$ also depends on the choices made: the spins assigned to the free arcs, the order in which $a_i$ are listed, and the way they are  connected outside the domain.
\end{rem}

\begin{lemma}
 \label{lemma: shol_many}
 The observable $F$ defined by (\ref{eq: obs_many}) satisfies the s-holomorphicity condition (\ref{eq: s-hol}) whenever an edge $z$ and a corner $q$ are adjacent, provided that $z\neq \nor{a}_1,\dots,\nor{a}_{m+s-1}$ and $z\notin \free$. Its extension to $\free$ by $(\ref{eq: continuation})$ satisfies (\ref{eq: bc_bc}). If $z\neq \nor{a}_1,\dots,\nor{a}_{m+s-1}$ is a discrete outer normal at a vertex of $\plus\cup\minus$, then
 \begin{equation}
 \label{eq: bc_cb_many}
 F(z)=\eta_{z}Z(\Od,\nor{a}_1,\dots,\nor{a}_{m+s-1},z),
 \end{equation}
 where the sign of $\eta_z$ is defined by counterclockwise continuous extension from $\eta_{\nor{a}_1}$ along the boundary, multiplied by $-1$ for each of $\nor{a}_j$, $j=2,\dots,m+s-1$, encountered on the way.
\end{lemma}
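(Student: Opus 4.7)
The proof proceeds in three parts, closely paralleling the single-endpoint arguments of Lemmas \ref{lemma: dhol} and \ref{lemma: bc}.

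\medskip

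\textbf{s-holomorphicity.} I would check that the proof of Lemma \ref{lemma: dhol} recalled in the Appendix applies verbatim. That proof is built on a \emph{local} involution on $\Conf$ acting in a neighborhood of the pair $(z,q)$: configurations are matched by toggling a small set of edges incident to $z$, and the resulting identity on winding factors and weights yields $\text{Proj}_{l_q} F(z)=F(q)$. Since the extra endpoints $\nor{a}_2,\dots,\nor{a}_{m+s-1}$ and the chosen outside pairing arcs lie on or outside $\partial\Od$ and are untouched by the local move at $z$, the decomposition of $S$ into the curve from $\nor{a}_1$ to $z$ plus loops changes only near $z$, and the winding change under the involution is the same as in the single-endpoint case.

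\medskip

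\textbf{Boundary condition on the free arc.} For an edge $z\in\free$ with adjacent interior corners $q_1,q_2$, I would reuse the bijection $p\colon S\mapsto S\bigtriangleup(q_1\cup z\cup q_2)$ between $\Conf(\Od,\nor{a}_1,\dots,\nor{a}_{m+s-1},q_1)$ and $\Conf(\Od,\nor{a}_1,\dots,\nor{a}_{m+s-1},q_2)$. All three toggled edges lie on or are incident to $\free$, so $|S\backslash\free|=|p(S)\backslash\free|$; the local winding shift computed in Lemma \ref{lemma: bc} still gives $\wind(p(S))=\wind(S)+\pi/2$. This yields $F(q_2)=e^{-i\pi/4}F(q_1)$, and plugging into the extension formula (\ref{eq: continuation}) produces (\ref{eq: bc_bc}).

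\medskip

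\textbf{Boundary condition on fixed arcs.} For $z$ a discrete outer normal at a vertex of $\plus\cup\minus$, I would show that the factor $i\eta_{\nor{a}_1}e^{-i\wind(S)/2}$ is independent of $S\in\Conf(\Od,\nor{a}_1,\dots,\nor{a}_{m+s-1},z)$ and equals $\pm\eta_z$, after which (\ref{eq: bc_cb_many}) follows by summing the weights. The reason is that once we fix the outside pairing, the curve from $\nor{a}_1$ to $z$ closed up through the exterior arcs forms a path whose initial and terminal tangent directions are determined by $\nor{a}_1$ and $z$; modulo $2\pi$, the winding is rigidly fixed by the boundary geometry, so $e^{-i\wind(S)/2}$ is determined up to sign by the endpoints alone.

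The sign, which is what the lemma asserts, would be established by a monodromy argument: as $z$ is moved counterclockwise along $\partial\Od$, for $z$ in the interior of a $\plus$ or $\minus$ arc the prescribed outer-normal direction rotates continuously, and the factor tracks the continuous extension of $\eta$ along the boundary. Each time $z$ crosses one of the points $\nor{a}_j$ ($j=2,\dots,m+s-1$), the admissible outside pairings are reshuffled: the pair containing $\nor{a}_j$ is broken, and the curve from $\nor{a}_1$ to $z$ is rerouted so that its winding changes by $\pi$; this produces the sign flip in $e^{-i\wind/2}$ described in the lemma.

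\medskip

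\textbf{Main obstacle.} The serious bookkeeping is in the third part: verifying rigorously that each crossing of an $\nor{a}_j$ contributes exactly one factor of $-1$, and that the answer is independent of the particular outside pairing used in the definition (a consistency statement already implicit in the definition of $F$). This requires a careful combinatorial check that the reshuffling of the pairing indeed shifts the curve's winding by an odd multiple of $\pi$ regardless of which other $\nor{a}_k$ was paired with $\nor{a}_j$. The remaining two parts are essentially mechanical adaptations of the single-arc arguments.
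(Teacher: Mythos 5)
Your first two steps are fine and coincide with what the paper does: its entire proof of this lemma is the remark that the arguments of Lemmas \ref{lemma: dhol} and \ref{lemma: bc} go through verbatim, since the local bijections (toggling at $z$, resp.\ at $q_1\cup z\cup q_2$) are unaffected by the extra sources and the exterior pairing arcs. One small inaccuracy in the free-arc step: the toggled corner edges do carry weight $\sqrt{x}\cos\frac{\pi}{8}$ even though they touch $\free$; the weight is preserved because one corner is removed and the other one added (they have equal weights), not because these edges ``do not count''.

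The problem is in your third step, which is the only place where anything genuinely new happens. First, your argument pins the phase of a single configuration down only modulo $\pi$: you say the winding is determined mod $2\pi$ by the endpoint data, hence $e^{-i\wind(S)/2}$ ``up to sign''. That is exactly not enough for (\ref{eq: bc_cb_many}): the identity $F(z)=\eta_z Z(\cdot)$ requires \emph{every} $S\in\Conf(\Od,\nor{a}_1,\dots,\nor{a}_{m+s-1},z)$ to contribute with one and the same phase, i.e.\ the winding must be constant mod $4\pi$ over all configurations, with the exterior arcs fixed as in Fig.~\ref{Fig: multi}. Different configurations route the curve from $\nor{a}_1$ to $z$ through different subsets of the exterior arcs (it may run directly to $z$, or exit at some $\nor{a}_j$ and re-enter at its partner), and checking that all such routes give the same winding mod $4\pi$ is precisely the multi-point analogue of the ``easy observation'' in Lemma \ref{lemma: bc}; the paper's intended proof is this direct per-configuration computation, in which the constancy over $S$ and the sign rule come out together, and your sketch does not address it. Second, the mechanism you propose for the factors $-1$ cannot be right as stated: the outside pairing involves only $\nor{a}_2,\dots,\nor{a}_{m+s-1}$, is fixed once and for all in (\ref{eq: obs_many}), and never contains $z$, so nothing is ``broken'' or ``reshuffled'' when $z$ crosses $\nor{a}_j$; likewise, independence of the choice of pairing is not something the lemma needs (the paper's remark after (\ref{eq: obs_many}) explicitly allows $F$ to depend on these choices). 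The sign change across $a_j$ comes from the winding bookkeeping itself: comparing the values at two outer normals $z,z'$ on either side of $a_j$ via the bijection $S\mapsto S\bigtriangleup\bigl(z\cup[\text{boundary path through }a_j]\cup z'\bigr)$ (the same device as in the proof of Lemma \ref{lemma: Hmany}), the toggled path now passes the vertex carrying the source edge $\nor{a}_j$, and there the winding picks up an extra $\pm 2\pi$ relative to the plain counterclockwise continuation of $\eta$, i.e.\ a factor $-1$ in $e^{-i\wind/2}$. Either carry out that comparison, or do the direct deformation of the curve (together with the traversed exterior arcs) onto the boundary; as written, your third part asserts the conclusion rather than proving it.
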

\begin{proof}
 The proof is exactly the same as for Lemmas \ref{lemma: dhol}, \ref{lemma: bc}.
\end{proof}

\begin{lemma}
\label{lemma: Hmany}
 The function $H$ defined by applying (\ref{eq: defH}) to $F=F_{\Od,\bcond^\delta}$ satisfies the following properties. First, $H^{\circ}\equiv 0$ on the faces adjacent to $\plus\cup\minus$, and there exist constants $C_i=C_i(\Od,\bcond^\delta)\geq 0$, $1\leq i\leq k$, such that $H^{\bullet}\equiv C_i$ on the $i$-th free arc $[b_{2i-1},b_{2i}]$. With the extensions $H^\bullet\equiv 0$ to the vertices of $\delta \Z^2\setminus \Od$ adjacent $\plus\cup\minus$ and $H^\circ\equiv C_i$ to the faces of $\delta \Z^2\setminus \Od$ adjacent to $[b_{2i-1},b_{2i}]$, one has the discrete sub- and superharmonicity of $H^{\bullet, \circ}$ with respect to the modified Laplacian as in Lemma \ref{lemma: H}, namely, $\Delta H^\bullet(v)\geq 0$ provided that $v\notin \{a_1,\dots,a_m\}\cup\free$, and $\Delta H^\circ(u)\leq 0$ for all faces $u\in\Od$. Finally, $H^\bullet\geq 0$ on $\plus\cup\minus$ and $H^\circ\leq C_i$ on the faces of $\Od$ adjacent to $[b_{2i-1},b_{2i}]$. 
\end{lemma}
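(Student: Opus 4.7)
The plan is to follow the proof of Lemma \ref{lemma: H} step by step, upgrading each ingredient to the multi-arc setting using Lemma \ref{lemma: shol_many}. First I would establish well-definedness of $H$ on the interior: by Lemma \ref{lemma: shol_many}, the s-holomorphicity condition (\ref{eq: s-hol}) holds at every corner adjacent to an edge of $\Od\setminus\free$ different from the source edges $\nor{a}_1,\dots,\nor{a}_{m+s-1}$, so the orthogonality identity (\ref{eq: orthogonal}) holds at every such edge and the sum of the defining relation (\ref{eq: defH}) around its four corners vanishes. Hence $H$ can be propagated unambiguously from one reference value to all vertices and faces of $\Od$, and to all outer faces of $\delta\Z^2\setminus\Od$ adjacent to $\plus\cup\minus$, along corner paths that avoid the source edges. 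I would fix the reference by setting $H^\circ\equiv 0$ on one chosen outer face adjacent to $\plus\cup\minus$.

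Next comes the boundary behaviour. On the $i$-th free arc, the symmetric-difference bijection $S\mapsto S\bigtriangleup(q_1\cup z\cup q_2)$ used in the proof of Lemma \ref{lemma: bc}(ii) still works in the presence of the extra sources $\nor{a}_2,\dots,\nor{a}_{m+s-1}$ since it only toggles parity at $q_1,q_2$, giving (\ref{eq: bc_proof}) and therefore $|F(q_1)|=|F(q_2)|$ at consecutive free-arc corners; by (\ref{eq: defH}) this forces $H^\bullet$ to be constant on the vertices of the arc, and I call the resulting value $C_i$. Along each $\plus$ or $\minus$ arc, (\ref{eq: bc_cb_many}) gives $|F(q)|=Z(\Od,\nor{a}_1,\dots,\nor{a}_{m+s-1},q)$ at every outer-normal corner, and a similar symmetric-difference bijection shows that this partition function is independent of the corner along the arc, so $H^\circ$ is constant on the outer faces adjacent to that arc. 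Consistency of these constants across the transition vertices $a_j$, $1\leq j\leq m$, is automatic from the interior well-definedness: $H^\circ$ at any such outer face equals $H^\bullet(v)-\sqrt{2}\delta|\norF(q)|^2$ for the adjacent boundary vertex $v$ and connecting corner $q$, and $H^\bullet(v)$ is already determined by interior propagation from the reference. Thus $H^\circ\equiv 0$ on every outer face adjacent to $\plus\cup\minus$.

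The remaining inequalities follow from (\ref{eq: defH}) and the non-negativity of $|\norF|^2$: pairing $v\in\plus\cup\minus$ with an adjacent outer face yields $H^\bullet(v)\geq 0$; pairing a face $u\in\Od$ adjacent to the $i$-th free arc with the adjacent free-arc vertex yields $H^\circ(u)\leq C_i$; and pairing the endpoint $b_{2i-1}$ with the adjacent outer face in $\plus\cup\minus$ yields $C_i\geq 0$. The discrete sub- and superharmonicity with the modified boundary Laplacian is a purely local computation at each vertex/face, proved exactly as in \cite[Proposition 3.6]{CHS2}: it uses only s-holomorphicity of $F$ at the corners around the vertex or face in question, so it goes through everywhere except at the source vertices $a_1,\dots,a_m$ and on the free arcs, which are precisely the points excluded from the subharmonicity clause. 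The main technical point I expect is the consistency check across the transition vertices $a_j$, where s-holomorphicity of $F$ fails and the sign of $\eta_z$ in (\ref{eq: bc_cb_many}) flips; but since the magnitude $|F|$ is insensitive to this sign and the interior propagation has already pinned down $H$ up to the reference constant, the multi-arc consistency is automatic once one carries out the bookkeeping.
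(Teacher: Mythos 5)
There is a genuine gap, and it sits exactly at the point the paper's proof singles out as the one new ingredient. Well-definedness of $H$ in the interior (your first paragraph) indeed determines the value of $H^\circ$ at every outer face adjacent to $\plus\cup\minus$ once the reference face is fixed, and the corner-level bijections give constancy of $H^\circ$ along each maximal boundary arc complementary to $\free$ (including across the vertices $a_j$, where only the sign of $\eta$ flips). But when $k\geq 2$ there are several such complementary arcs, separated by free arcs, and ``determined by propagation'' does not mean ``all equal to $0$'': travelling around the $i$-th free arc, $H$ jumps up by $\sqrt{2}\delta|F(\nor{b}_{2i-1})|^2$ at one end and down by $\sqrt{2}\delta|F(\nor{b}_{2i})|^2$ at the other, so the constant values of $H^\circ$ on the two neighbouring complementary arcs differ by $\sqrt{2}\delta\bigl(|F(\nor{b}_{2i-1})|^2-|F(\nor{b}_{2i})|^2\bigr)$. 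Your claim that ``the multi-arc consistency is automatic once one carries out the bookkeeping'' is therefore a non sequitur: single-valuedness of $H$ only forces the \emph{total} jump around the boundary to vanish, not the equality of the two jumps at the ends of each individual free arc. What is needed, and what the paper proves, is the identity $Z(\Od,\nor{a}_1,\dots,\nor{a}_{m+s-1},\nor{b}_{2i-1})=Z(\Od,\nor{a}_1,\dots,\nor{a}_{m+s-1},\nor{b}_{2i})$, obtained from the weight-preserving bijection $S\mapsto S\bigtriangleup\bigl(\nor{b}_{2i-1}\cup[b_{2i-1},b_{2i}]\cup\nor{b}_{2i}\bigr)$; it is weight-preserving precisely because edges on the free arc do not contribute to $x^{|S\backslash\free|}$ and the two corner edges carry equal weights. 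Combined with (\ref{eq: bc_cb_many}) this gives $|F(\nor{b}_{2i-1})|=|F(\nor{b}_{2i})|$ and hence the equality of all the $H^\circ$ constants; without it the first assertion of the lemma is simply not established for $k\geq 2$, which is the whole point of the multi-arc setting.

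A smaller inaccuracy: your statement that the symmetric-difference bijection shows the partition function ``is independent of the corner along the arc'' cannot be meant literally, since that would force $H^\bullet$ to be constant along the fixed arcs as well, which is false; what the bijection (symmetric difference with the two corner edges at a common boundary vertex) gives is equality of $|F|$ at the two corners attached to the \emph{same} vertex and pointing to the two adjacent outer faces, which is exactly what constancy of $H^\circ$ along one complementary arc requires. The remaining items in your proposal (constancy of $H^\bullet$ on each free arc via (\ref{eq: bc_proof}), the inequalities $H^\bullet\geq 0$, $H^\circ\leq C_i$, $C_i\geq 0$, and the local sub-/superharmonicity computation as in the appendix proof for Lemma \ref{lemma: H}) are in line with the paper, though note that $C_i\geq 0$ also relies on the uniform normalization $H^\circ\equiv 0$ discussed above.
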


\begin{proof}
As in the proof of Lemma \ref{lemma: H}, we deduce from Lemma \ref{lemma: shol_many} that $H^{\bullet}\equiv \const$ on each free arc $[b_{2i-1},b_{2i}]$, and $H^{\circ}\equiv \const$ at the faces of $\delta\Z^2\setminus \Od$ adjacent to each arc complementary to $\free$. We must show that the latter constants are all the same, that is, that the absolute values of the jumps at the two ends of each $[b_{2i-1},b_{2i}]$ coincide. These values are equal to $\sqrt{2}\delta|F(\nor{b}_{2i-1})|^2$ and $\sqrt{2}\delta|F(\nor{b}_{2i})|^2$ respectively, and thus by (\ref{eq: bc_cb_many}) it is enough to prove that $Z(\Od,\nor{a}_1,\dots,\nor{a}_{m+s-1},\nor{b}_{2i})=Z(\Od,\nor{a}_1,\dots,\nor{a}_{m+s-1},\nor{b}_{2i-1})$. The weight-preserving bijection between the corresponding configuration sets given by taking the symmetric difference with $\nor{b}_{2i-1}\cup[b_{2i-1},b_{2i}]\cup\nor{b}_{2i}$ readily proves the identity. The remaining properties are proven as in Lemma \ref{lemma: H}.
\end{proof}

\begin{figure}[t]
\begin{center}
\includegraphics[width=0.8\textwidth]{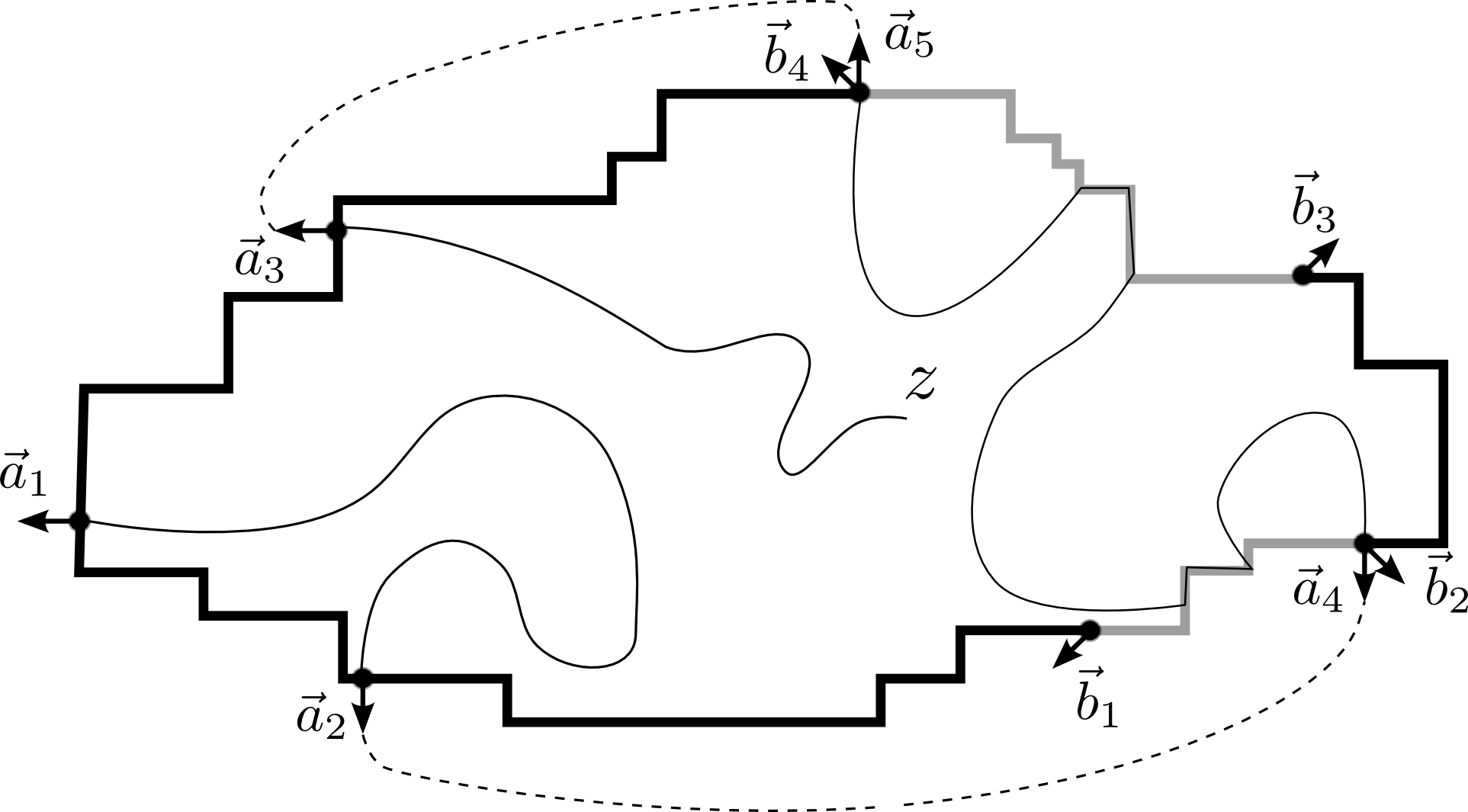}
\end{center}
\caption{\label{Fig: multi} Computation of the winding of a configuration $S$; loops not drawn: by adding the dashed lines, a collection of curves becomes a single curve from $\nor{a}$ to $z$. In this particular case, $w(S)=2\pi$.}
\end{figure}

We now define the functions $f_{\Omega,\bcond}$ which we will prove to be the scaling limits of~$F_{\Od,\bcond^\delta}$. Rather than giving a complicated explicit formula (see Proposition \ref{prop: linear}), we prefer a definition based on a boundary value problem and justified by Lemmas \ref{lemma: Hmany} and \ref{lemma: uniqueness}. From now on, we assume that $k>0$ (i. e. there is at least one free arc) and that $a_{m+s}=b_{2k}$. the latter condition does not lose generality: we can always ensure $b_{2k}\in\{a_{m+1},\ldots,a_{m+s}\}$ by choice of the spin assigned to the free arc $[b_{2k-1},b_{2k}]$. Given the boundary conditions $\bcond$ in the upper half-plane (that is, a collection of points $b_1<\dots<b_{2k}\in \R$ and $a_1,\dots,a_{m+s}\in \R$ with $a_{i}\notin [b_{2j-1};b_{2j}]$ for $1\leq i\leq m$, $1\leq j\leq k$ and $a_{i}\in \{b_{1},\ldots,b_{2k}\}$ for $m<i\leq m+s$), we define 
\begin{equation}
f_{\H,\mathcal{B}}:=\frac{P_\bcond(z)}{\prod\limits_{i=1}^k\sqrt{(z-b_{2i-1})(z-b_{2i})}\prod\limits_{i=1}^m(z-a_i)}. 
\label{eq: deff_many}
\end{equation}
where $P_\bcond(z)$ is the polynomial of degree $k+m-1$ with real coefficients such that $f_{\H,\mathcal{B}}$ satisfies the following conditions:
\begin{itemize}
 \item For all $1\leq i\leq m$,
 \begin{equation}
 \label{eq: residue}
 \lim\limits_{z\to a_i} \left(f_{\H,\bcond}(z)-\frac{\res_{a_i}f_{\H,\bcond}}{z-a_i}\right)=0;
\end{equation}
\item For all $1\leq i\leq k-1$,
\begin{equation}
\label{eq: freearcs}
 \lim\limits_{z\to b_{2i}}\sqrt{(z-b_{2i-1})(z-b_{2i})}f_{\H,\bcond}(z)=-\fsign_i\lim\limits_{z\to b_{2i-1}}\sqrt{(z-b_{2i-1})(z-b_{2i})}f_{\H,\bcond}(z); 
\end{equation}
where the sign $\fsign_i= \pm 1$ is equal to $(-1)^{|\{b_{2i-1},b_{2i}\}\cap \{a_{m+1},\dots,a_{m+s-1}\}|}$.
\item One has 
\begin{equation}
\label{eq: normalization}
\lim\limits_{z\to b_{2k}}\sqrt{\pi(z-b_{2k})}f_{\H,\bcond}(z)=1. 
\end{equation}
\end{itemize}

We will write $\bcond'\prec \bcond$ if boundary conditions $\bcond'$ have the same set $\free$ as $\bcond$ (i. e., $k'=k$ and $b'_1=b_1,\dots,b'_{2k}=b_{2k}$) and a smaller or equal set of points with $+/-$ changes (i. e., $m'\leq m$ and $\{a'_1,\dots,a'_{m'}\}\subseteq \{a_1,\dots,a_m\}$).
\begin{lemma}
\label{lemma: uniqueness}
 The polynomial $P_\bcond$ with the above properties exists and is unique. Moreover, if $h$ is a non-negative linear combination of the Poisson kernels at $a_i$, $1\leq i\leq m$, and the harmonic measures of $(b_{2i-1},b_{2i})$, $1\leq i \leq k$, such that the outer normal derivative of $h$ is non-negative on $\free$ and $\sqrt{2\partial_z h}$ is analytic in $\H$, then $\sqrt{2\partial_z h}=c f_{\H,\bcond'}$ for some boundary conditions $\bcond'\prec \bcond$ and some constant $c\in\R$.  
\end{lemma}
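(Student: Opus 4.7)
My plan is to treat the ``moreover'' claim of the lemma first, then derive existence and uniqueness of $P_\bcond$ from it. For the moreover claim, I first compute
$$
2\partial_z h(z) = \frac{1}{\pi i}\left[\sum_i \frac{\alpha_i}{(z-a_i)^2} + \sum_j \beta_j\left(\frac{1}{z-b_{2j}} - \frac{1}{z-b_{2j-1}}\right)\right]
$$
by differentiating the standard representations of $K_a$ and $\omega_j$; this is a rational function such that $i\cdot 2\partial_z h$ has real coefficients, with double poles of vanishing residue at each $a_i$ with $\alpha_i > 0$ and simple poles of opposite residues at $b_{2j-1}, b_{2j}$. Set $A' := \{i:\alpha_i>0\}$ and $m' := |A'|$, and let $\bcond' \prec \bcond$ have the same free arcs as $\bcond$ but only the marked points $\{a_i\}_{i\in A'}$. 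The goal is to show $f := \sqrt{2\partial_z h} = c f_{\H,\bcond'}$ for some $c \in \R$.

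Writing $i\cdot 2\partial_z h = \tilde N(z)/\tilde D(z)$ with $\tilde D(z) = \prod_{i\in A'}(z-a_i)^2\prod_j(z-b_{2j-1})(z-b_{2j})$, the analyticity of $f$ in $\H$ forces non-real zeros of $\tilde N$ to have even multiplicity. Real coefficients of $\tilde N$ pair these conjugately, so they are absorbed into a factor $P(z)^2$. For real zeros of $\tilde N$, a sign analysis is decisive: on $\plus\cup\minus$, Hopf's lemma applied to the non-negative $h$ vanishing there gives $if^2 = \partial_y h > 0$ strictly, so $\tilde N$ has no zeros on $\plus\cup\minus$; on each free arc $[b_{2j-1},b_{2j}]$, the hypothesis $\partial_n h \geq 0$ translates to $if^2 = \partial_y h \leq 0$, and since $\tilde D$ is negative on the arc (the single factor $(z-b_{2j-1})(z-b_{2j})$ corresponding to that arc is negative there), one concludes $\tilde N \geq 0$ on the arc---so any real zero of $\tilde N$ inside a free arc must have even multiplicity. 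These are again absorbed into $P^2$, and the only remaining real roots of $\tilde N$ lie in $\{b_j\}$; after simplification,
$$
f(z) = c\cdot\frac{P(z)}{\prod_j\sqrt{(z-b_{2j-1})(z-b_{2j})}\prod_{i\in A'}(z-a_i)}
$$
for some polynomial $P$ of degree $\leq k+m'-1$ and some $c\in\R$. Condition (\ref{eq: residue}) for $\bcond'$ holds because $2\partial_z h$ has no simple-pole term at any $a_i\in A'$; condition (\ref{eq: freearcs}) follows from the opposite-sign residues of $2\partial_z h$ at $b_{2j-1}, b_{2j}$ together with a short branch/sign bookkeeping (the sign $\fsign_j$ encoding both the square-root convention and any sign flips at markers $a_\ell \in \{a_{m+1},\dots,a_{m+s-1}\}$ that happen to coincide with $b_{2j-1}$ or $b_{2j}$); (\ref{eq: normalization}) merely fixes $c$.

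For existence of $P_\bcond$, I take the scaling limit $h_\bcond$ of the discrete $H^{\bullet,\circ}$ from Lemma \ref{lemma: Hmany} (established by an argument parallel to Proposition \ref{prop: convergence}), which is a non-negative harmonic function satisfying all the moreover-claim hypotheses with $\bcond' = \bcond$; the moreover claim then produces $P_\bcond$. For uniqueness, suppose $P$ satisfies the homogeneous system---(\ref{eq: residue}), (\ref{eq: freearcs}) and $P(b_{2k}) = 0$ (in place of (\ref{eq: normalization})). The residue theorem applied to $f^2 = P^2/Q^2$, which is $O(1/z^2)$ at infinity, gives $\sum_j \res_{b_j}f^2 = 0$; combined with the pairings and $\res_{b_{2k}}f^2 = 0$, this forces $\res_{b_{2k-1}}f^2 = 0$, hence $P(b_{2k-1}) = P(b_{2k}) = 0$. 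Factoring $P = (z-b_{2k-1})(z-b_{2k})\tilde P$ and inducting on $k$---with the residue conditions at each $a_i$ translated through the analytic factor $\sqrt{(z-b_{2k-1})(z-b_{2k})}$, and the base case $k=1$ reducing to a non-degenerate partial-fraction problem---yields $P = 0$.

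The main obstacle will be the sign/parity argument in the second paragraph, showing $\tilde N \geq 0$ on each free arc so that real zeros there are of even multiplicity: this step crucially uses both the non-negativity of $\alpha_i, \beta_j$ and the non-negative normal derivative hypothesis on $\free$, without which $\tilde N$ could have odd-order real zeros inside free arcs that would create branch points of $f$ incompatible with the form $f_{\H,\bcond'}$.
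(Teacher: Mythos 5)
Your treatment of the ``moreover'' claim (the explicit rational expression for $2\partial_z h$, evenness of zeros forced by analyticity of the square root, and the sign analysis on $\plus\cup\minus$ and on $\free$) is a sound and genuinely more explicit alternative to the paper's softer argument, modulo the branch/sign bookkeeping you defer; note, however, that identifying the resulting normalized function with $f_{\H,\bcond'}$ already requires the uniqueness part, so uniqueness must come first. The first genuine gap is in your uniqueness proof: the residue-theorem step giving $P(b_{2k-1})=P(b_{2k})=0$ is correct, but the induction on $k$ does not go through as stated. After factoring, $\tilde f:=f/\sqrt{(z-b_{2k-1})(z-b_{2k})}$ does \emph{not} satisfy the homogeneous system for $k-1$ free arcs: dividing by the analytic factor $g(z)=\left((z-b_{2k-1})(z-b_{2k})\right)^{-1/2}$ turns (\ref{eq: residue}) at $a_i$ into the requirement that the constant term equal $(\log g)'(a_i)\cdot\res_{a_i}\tilde f$ (a nonzero prescribed multiple of the residue), and turns (\ref{eq: freearcs}) into relations whose ratios are $\pm g(b_{2i})/g(b_{2i-1})\neq\pm1$. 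So the induction hypothesis does not apply, and the non-degeneracy of these modified systems is neither addressed nor obvious (such modified systems can be degenerate for other values of the prescribed multiples, so this genuinely needs proof). The paper avoids induction entirely: if $f_1,f_2$ both solve the problem, then $\Im\int(f_1-f_2)^2$ is a \emph{non-negative} harmonic function --- its Poisson-kernel masses and free-arc constants are squares --- vanishing on $\plus\cup\minus$ and, by (\ref{eq: normalization}), on $(b_{2k-1},b_{2k})$, where its outer normal derivative is non-negative; hence it vanishes identically.

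The second gap is your existence argument, which is circular within the paper and incomplete on its own. The paper deduces existence from uniqueness by linear algebra alone: (\ref{eq: residue})--(\ref{eq: normalization}) form a square system of $m+k$ linear equations in the $m+k$ coefficients of $P_\bcond$, so injectivity of the homogeneous problem (exactly the uniqueness just proved) gives solvability for every $\bcond$. Your route instead takes a scaling limit of the discrete functions of Lemma \ref{lemma: Hmany}; but identifying that limit is the content of Theorem \ref{thm: multiple}, which is proved \emph{after} and \emph{using} the present lemma, and your assertion that the limit satisfies the hypotheses ``with $\bcond'=\bcond$'' is precisely the hard part of that proof (the $m'=m$ flux argument and the sign identification), not something that comes for free from an argument parallel to Proposition \ref{prop: convergence}. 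The moreover claim by itself only produces $cf_{\H,\bcond'}$ for \emph{some} $\bcond'\prec\bcond$ and some sign pattern, so existence of $P_\bcond$ for every prescribed $\bcond$ (every admissible set of $a_i$'s and signs $\fsign_i$) is not obtained this way.
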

\begin{proof}
Define $h_\bcond(w)=\Im\int^wf^2_{\H,\bcond}(z)  dz$. Let us check that (\ref{eq: residue}) and (\ref{eq: freearcs}) imply that $h_\bcond$ satisfies all the properties of $h$ in the assertion. Since $f_{\H,\bcond}$ is real on $\plus\cup\minus$ and purely imaginary on $\free$, $h_\bcond$ is constant on each arc with the correct signs of the normal derivative. Note that (\ref{eq: residue}) can be written as
\begin{equation}
f_{\H,\bcond}(z)=\frac{\res_{a_i}f_{\H,\bcond}}{z-a_i}+O(z-a_i), \quad z\to a_i,
\label{eq: res_f} 
\end{equation}
and that $\res_{a_i}f_{\H,\bcond}$ is real; taking this to the square and integrating yields
\begin{equation}
h_{\bcond}(z)=-\Im\left[\frac{(\res_{a_i}f_{\H,\bcond})^2}{z-a_i}\right]+O(z-a_i),\quad z\to a_i.
\label{eq: res_h}
\end{equation}
In particular, there are no jumps in the constant values of $h_\bcond$ across $a_i$. Similarly, by~(\ref{eq: freearcs}), the jumps at $b_{2i}$ and $b_{2i-1}$ are negatives of each other for $1\leq i\leq k-1$. But $|\nabla h_{\bcond}(w)|=O(|w|^{-2})$ at infinity, so the net jump in the constant values of $h_\bcond$ along $\R$ must be zero. This shows that the jumps of $h_\bcond$ at $b_{2k}$ and $b_{2k-1}$ are also negatives of each other. 

Assume that $f_1$ and $f_2$ are both of the form (\ref{eq: deff_many}) and both satisfy (\ref{eq: residue}) -- (\ref{eq: normalization}) (with the same $\zeta_i$). Then $f_1-f_2$ satisfy (\ref{eq: residue}) -- (\ref{eq: freearcs}), hence $\tilde{h}(w):=\Im\int (f_1-f_2)^2$ is a non-negative harmonic function equal to zero on $\plus\cup\minus$. By (\ref{eq: normalization}), also $\tilde{h}\equiv 0$ on $(b_{2k-1},b_{2k})$. But its outer normal derivative is non-negative there, thus $\tilde{h}$ must vanish identically. This proves the uniqueness of $f_{\H,\mathcal{B}}$. To prove the existence, note that (\ref{eq: residue}) -- (\ref{eq: normalization}) gives $m+k$ linear equations on $m+k$ unknown coefficients of $P_\bcond$, and we have just proved that this system is non-degenerate. 

Conversely, if $h$ satisfies all the properties in the assertion, then  $f:=\sqrt{2\partial_z h}$  is real on $\plus\cup\minus$ and purely imaginary on $\free$. Also, $h$ obeys the expansions of the type~(\ref{eq: res_h}) at each $a_i$. Let the subset $\{a_1',\dots,a'_{m'}\}$ consist of the points of $\{a_1,\dots,a_m\}$ such that $\res_{a_i}f\neq 0$. For these points, (\ref{eq: res_h}) implies (\ref{eq: res_f}). Since the jumps of $h$ at the endpoints of each free arc are negatives of each other, $f$ obeys (\ref{eq: freearcs}) with some choice of signs $\zeta_i$. Similarly, since $h$ has a non-positive jump discontinuity at $b_{2k}$, the limit 
$$c:=\lim_{z\to b_{2k}}\sqrt{\pi(z-b_{2k})} f(z)$$
exists and is real. The above uniqueness argument shows that $f-cf_{\H,\bcond'}\equiv 0$, where $\bcond'$ is specified by the marked points $b_1'=b_1,\dots,b'_{2k}=b_{2k}$, $a'_1,\dots,a'_{m'}$ and the property that $f_{\H,\bcond'}$ obeys (\ref{eq: freearcs}) with the same signs as $f$.
\end{proof}

\begin{rem}
\label{rem: a1} The proof shows that $\res_{a_i}f_{\H,\bcond}\neq 0$ for $1\leq i\leq m$, and that the limits in (\ref{eq: freearcs}) are non-zero. Indeed, $\res_{a_i}f_{\H,\bcond}= 0$ would imply by (\ref{eq: residue}) that the normal derivative of $h_{\bcond}$ vanishes at $a_i$. Since $h_{\bcond}\geq 0$, this is impossible by the Harnack principle. If the limit in (\ref{eq: freearcs}) were zero, then $h_{\bcond}$ would be equal to zero and have non-negative outer normal derivative of $(b_{2i-1},b_{2i})$, which is impossible. 
\end{rem}

\begin{theorem}
\label{thm: multiple}
 Suppose the domains $(\Od,\bcond^\delta)$ approximate a simply connected domain $(\Omega,\bcond)$. Then
 $$
 \norF:=\frac{F_{\Od,\bcond^\delta}(\cdot)}{2^{\frac14}\sqrt{\delta}Z(\Od,\nor{a}_1,\dots,\nor{a}_{m+s-1},\nor{b}_{2k})}\longrightarrow f_{\Omega, \bcond}(\cdot):= (\varphi'(z))^{\frac12}f_{\H,\varphi(\bcond)}(\varphi(z))
$$
uniformly on compact subsets of $\Omega$, where $\varphi$ is any conformal map from $\Omega$ to $\H$.
\end{theorem}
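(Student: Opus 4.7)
The plan is to follow the strategy of Proposition~\ref{prop: convergence} in three stages: precompactness of $\norF$ on compacts of $\Omega\setminus\{a_1,\ldots,a_m\}$, identification of subsequential limits via the boundary value problem for $h=\Im\int^{\cdot}f^2\,dz$, and finally eliminating degenerate candidate limits. The boundary-value and uniqueness machinery for the limit has already been packaged into Lemma~\ref{lemma: uniqueness}, so the genuinely new work lies in the third stage.

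For precompactness, I would repeat the Harnack-type argument used at the end of the proof of Proposition~\ref{prop: convergence}: if for some $i$ and some $r_0>0$ the quantities $M_\delta:=\max_{\Od\cap\{|z-a_i|>r_0\}}H^\bullet$ blew up, then by \cite[Prop.~3.11]{CHS2} the normalized functions $M_\delta^{-1}H^\bullet$ would subconverge to a non-negative harmonic function on $\Omega\setminus\{a_i\}$ vanishing on $\partial\Omega\setminus\{a_i\}$. By Lemma~\ref{lemma: Hmany} and \cite[Rem.~6.3]{CHS2} this limit has non-negative outer normal derivative on $\free$, so it must vanish identically, contradicting the normalization. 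Once $H^\bullet$ is locally bounded, \cite[Thm.~3.12]{CHS2} gives equicontinuity of $\norF$ on compact subsets of $\Omega\setminus\{a_1,\ldots,a_m\}$, and (\ref{eq: intHbullet})--(\ref{eq: intHcirc}) give joint precompactness together with $H^{\bullet,\circ}$.

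For the identification stage, any subsequential limit $f$ of $\norF$ is holomorphic on $\Omega\setminus\{a_1,\ldots,a_m\}$ and the companions converge to a harmonic $h=\Im\int^w f^2\,dz$. By Lemma~\ref{lemma: Hmany} and convergence of discrete harmonic measure \cite{CHS1}, $h\geq 0$ on $\Omega$, $h\equiv 0$ on $\plus\cup\minus$, $h$ is locally constant on $\free$ with non-negative outer normal derivative there, and $h$ is bounded away from each $a_i$. After conjugating by $\varphi$, Lemma~\ref{lemma: uniqueness} then forces $f=c\,(\varphi')^{1/2}\,f_{\H,\bcond'}\!\circ\varphi$ for some $\bcond'\prec\varphi(\bcond)$ and some $c\in\R$. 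The normalization constant $c$ is read off from the boundary value $\norF(\nor{b}_{2k})=2^{-1/4}\delta^{-1/2}\eta_{\nor{b}_{2k}}$, supplied by (\ref{eq: bc_cb_many}) together with our choice of denominator, which matches condition (\ref{eq: normalization}) and yields $c=1$.

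The main obstacle is ruling out $\bcond'\precneq \bcond$, i.e.\ that no $a_i$ collapses in the limit. The cleanest way I see is induction on $m$, with base case covered by Proposition~\ref{prop: convergence} (and by Lemma~\ref{lemma: aa} when $m=0$). For the inductive step, suppose some $a_j$ drops out in the limit; compare $\norF_{\bcond^\delta}$ with $\norF_{\bcond''^\delta}$, where $\bcond''^\delta$ is obtained from $\bcond^\delta$ by deleting $a_j$. If their limits agreed, then the difference $F^\dagger:=\norF_{\bcond^\delta}-\norF_{\bcond''^\delta}$ would produce an $H^\dagger$ tending to a constant on every arc. However (\ref{eq: bc_cb_many}) and the sign convention of Lemma~\ref{lemma: shol_many} show that along the boundary the values of $\norF_{\bcond^\delta}$ and $\norF_{\bcond''^\delta}$ at the discrete normals $\nor{b}_\ell$ differ in sign precisely on those boundary arcs separated from $\nor{a}_1$ by an odd number of sources, so $H^\dagger$ is forced to have distinct constant values on two different pieces of $\partial\Od$, contradicting its convergence to a single constant via the harmonic measure estimate (\ref{eq: hm}). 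This is exactly the mechanism used to exclude $\alpha=0$ in Proposition~\ref{prop: convergence}, and the combinatorial bookkeeping of signs in Lemma~\ref{lemma: shol_many} is what makes the argument close for arbitrary~$m$.
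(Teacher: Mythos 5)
Your first two stages (the Harnack-type a priori bound and the identification of subsequential limits via Lemma \ref{lemma: uniqueness}) do follow the paper's strategy, but the third stage has two genuine gaps. First, you never pin down the signs $\fsign_i$ in (\ref{eq: freearcs}). Lemma \ref{lemma: uniqueness} only yields $f=c\,f_{\H,\bcond'}$ for \emph{some} $\bcond'\prec\bcond$, and the relation $\prec$ carries no information about the signs with which the limit satisfies (\ref{eq: freearcs}) across each free arc; ruling out the disappearance of the $a_i$'s does not resolve this ambiguity. The paper devotes a separate argument to it: for each free arc it compares $\norF_{\bcond^\delta}$ with the auxiliary observable $\norF_{\bcond^\delta_i}$ of Lemma \ref{lemma: aa} and matches the sizes of the jumps of $\Im\int(\norF_{\bcond^\delta}+\norF_{\bcond^\delta_i})^2$ at $b_{2i-1}$ and $b_{2i}$, using the sign rule of (\ref{eq: bc_cb_many}) and Remark \ref{rem: a1}. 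For the same reason your induction base ($m=0$, $k\geq 2$) is not covered by Lemma \ref{lemma: aa} alone.

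Second, the mechanism you propose for the inductive step does not close. In Proposition \ref{prop: convergence} the contradiction rested on \emph{exact equality of magnitudes}: after normalization, $|\norF_{\bcond^\delta}|$ and $|\norF_{\bcond^\delta_{\arb}}|$ coincide at $\nor{b}$ and $\nor{c}$ because the relevant partition functions are equal, so $H^\dagger$ takes the explicit values $2$ and $0$ on the two arcs and (\ref{eq: hm}) applies. For general $m$, the boundary values of $\norF_{\bcond^\delta}$ and of the reduced observable $\norF_{\bcond^{'\delta}}$ (with $a_i$ removed) are ratios of \emph{different} partition functions; (\ref{eq: bc_cb_many}) gives only the relative sign pattern along the boundary and no control of magnitudes, so a sign discrepancy does not force the constant values of $H^\dagger$ on two boundary pieces to differ by any definite amount. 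In fact, once $F^\dagger\to 0$ in the bulk, \emph{all} these boundary constants tend to zero (this is proved in the paper), so there is no contradiction with ``convergence to a single constant''. The step that is actually needed is quantitative: assuming $|F^\dagger|\geq|\norF_{\bcond^{'\delta}}|$ at every outer normal along a fixed sub-arc $\gamma^\delta$ adjacent to the lost point $a_i$ gives $H^{\bullet\dagger}\geq H^{\bullet}_{\bcond^{'\delta}}$ on $\gamma^\delta$, and the harmonic-measure flux bounds (\ref{eq: flux_contradiction})--(\ref{eq: flux2}), combined with $\max_{\Od_r}H^{\bullet\dagger}\to 0$, then yield a contradiction. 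This produces normals on \emph{both} sides of $a_i$ at which $|F^\dagger|<|\norF_{\bcond^{'\delta}}|$, hence at which the two observables have the same sign, and only then does the sign flip of (\ref{eq: bc_cb_many}) across $a_i$ give the desired contradiction. Without this flux input (or a substitute for it), your claim that $H^\dagger$ must take distinct constants on two arcs, contradicting (\ref{eq: hm}), has no support.
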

\begin{proof}
 We may assume that for all $r>0$, the functions $H$ defined by (\ref{eq: defH}) from $\norF$ are uniformly bounded on $\Od\cap \{|z-a_i|>r,1\leq i\leq m\}$. This assumption can be justified a posteriori as in the proof of Proposition \ref{prop: convergence}. By \cite[Theorem 3.12]{CHS2} $\norF$ and $H$ have subsequential limits, say $f$ and $h=\Im\int f^2$ respectively. Using (\ref{eq: hm}) and \cite[Remark 6.3]{CHS2}, we see that the boundary properties of $H$ established in Lemma \ref{lemma: Hmany} survive in the limit, that is, $h_{\H}:=h\circ(\varphi^{-1})$ satisfies the conditions of Lemma \ref{lemma: uniqueness}, and hence $h_\H=\Im \int f^2_{\H,\bcond'}$, where $\bcond'\prec \bcond$. Our task is to show that actually $\bcond'=\bcond$; cf. excluding the case $\alpha=0$ in the proof of Proposition \ref{prop: convergence}.
 
 First, we show that $f_{\H,\bcond}$ and $f_{\H,\bcond'}$ must satisfy (\ref{eq: freearcs}) for $1\leq i\leq k$ with the same signs $\zeta_i$. Consider an arc $(b_{2i-1},b_{2i})$ and the auxiliary observable $\norF_{\bcond^\delta_i}$, where $\bcond_i$ stands for the following simple boundary conditions: free on $[b_{2i-1},b_{2i}]$ and $\plus$ elsewhere. (Hereinafter we drop $\Od$ from subscripts.) By Lemma~\ref{lemma: aa}, $\norF_{\bcond^\delta_i}$ converges to 
$$f_{\Omega,\bcond_i}(z)=(\varphi'(z))^\frac12\left(\frac{\varphi(b_{2i})-\varphi(b_{2i-1})}{\pi(\varphi(z)-\varphi(b_{2i-1}))(\varphi(z)-\varphi(b_{2i}))}\right)^{-\frac12}.$$ The relation (\ref{eq: bc_cb_many}), in particular the rule for the signs of $\eta_z$, implies the following: if $\{b_{2i-1},b_{2i}\}\cap\{a_{m+1},\dots,a_{m+s-1}\}$ contains one point (respectively, no or two points), 
 then the values of $\
|\norF_
{\bcond^\delta} + \norF_{\bcond^\delta_i}|$ at $\nor{b}_{2i-1}$ and $\nor{b}_{2i}$  are the same (respectively,  differ by 2$\min\{|\norF_
{\bcond^\delta}(\nor{b}_{2i-1})|,|\norF_{\bcond_i^\delta}(\nor{b}_{2i-1})|\}$). 
Taking into account Remark \ref{rem: a1}, we see that $\Im\int(f_{\H,\bcond'}+f_{\Omega,\bcond_i})^2=\lim_{\delta\to 0} \Im\int(\norF_{\bcond^\delta}+\norF_{\bcond^\delta_i})^2$ has jump 
discontinuities of the same size (respectively, of different sizes) at $b_{2i-1}$ and $b_{2i}$. It is easy to see that this condition fixes the sign in~(\ref{eq: freearcs}) in the correct way.

It remains to prove that $m'=m$, that is, that the singularities at $a_i$ do not disappear in the limit. Assume the contrary, and let $a_i\in\{a_1,\dots,a_m\}\setminus\{a'_1,\dots,a'_{m'}\}$. By induction on $m$, we know that that $\norF_{\bcond^{'\delta}}$ converges to $f_{\Omega,\bcond'}$, hence $F^{\dagger}=\norF_{\bcond^{\delta}}-\norF_{\bcond^{'\delta}}$ converges to zero uniformly on compact subsets of $\Omega$. Our goal is to deduce that for small $\delta$ there exist discrete outer normals $\nor{l},\nor{r}$ at some vertices of the two boundary arcs separated by $a_i$ for which $|F^{\dagger}(\nor{l})|<|\norF_{\bcond^{'\delta}}(\nor{l})|$ and $|F^{\dagger}(\nor{r})|<|\norF_{\bcond^{'\delta}}(\nor{r})|$. Taking into account that $F(\nor{z})\in l_{\nor{z}}$ for $\nor{z}$ a discrete outer normal, we see that these inequalities lead to the desired contradiction. Indeed, they imply that $\norF_{\bcond^{\delta}}(\nor{l})/\norF_{\bcond^{'\delta}}(\nor{l})$ and $\norF_{\bcond^{\delta}}(\nor{r})/\norF_{\bcond^{'
\delta}}(\nor{r})$ have the same the sign, whereas we know by (\ref{eq: bc_cb_many}) that these signs must be opposite due to the point $a_i$ between $\nor{l}$ and $\nor{r}$. 

Let $\gamma_1\subset \partial\Omega$ be a proper sub-arc of the $\plus$ or of the $\minus$ arc adjacent to $a_i$, $\gamma$ a proper sub-arc of $\gamma_1$, and $\gamma^\delta_1,\gamma^\delta$ approximations to $\gamma_1,\gamma$ in $\partial\Od$. Let $H^\dagger$ be constructed from $F^\dagger$, as usual, by integrating (\ref{eq: defH}) from some face of $\delta \Z^2\setminus \Od$ adjacent to $\gamma^\delta_1$. 
Then $H^\dagger\equiv 0$ at all such faces. Assume that $|F^{\dagger}(\nor{l})|\geq|\norF_{\bcond^{'\delta}}(\nor{l})|$ for all discrete outer normals $\nor{l}$ adjacent to $\gamma^\delta$.  
By (\ref{eq: defH}), this implies that $H^{\bullet\dagger}(v)\geq H_{\bcond^{'\delta}}^{\bullet}(v)$ for all the vertices $v$ of~$\gamma^\delta$. Denote $\Od_r:=\Od\cap\{|z-a_i|>r,1\leq i\leq m\}$ for a small fixed $r$. By (\ref{eq: hm}), one has $H^{\bullet\dagger}(v)\leq \text{hm}^\bullet_{\Od_r}(\pa\Od_r\setminus\gamma^\delta_1,v)\max_{\Od_r}H^{\bullet\dagger}$ and $H_{\bcond^{'\delta}}^{\bullet}(v)\geq H_{\bcond^{'\delta}}^{\circ}(u)\geq \text{hm}^\circ_{\Od}([b_{2k-1};b_{2k}],u)$, where $u$ is any face of $\Od$ incident to $v$, since $H_{\bcond^{'\delta}}^{\circ}\equiv 1$ on $[b_{2k-1},b_{2k}]$. Summing the resulting inequalities over the vertices of $\gamma^\delta$ yields
\begin{equation}
 \max_{\Od_r}H^{\bullet\dagger}\cdot\sum_{v\in\gamma^\delta}\text{hm}^\bullet_{\Od_r}(\pa\Od_r\setminus\gamma^\delta_1,v)\geq \sum_{u\sim\gamma^\delta}\text{hm}^\circ_{\Od}([b_{2k-1};b_{2k}],u), 
\label{eq: flux_contradiction}
 \end{equation}
where each face $u\sim\gamma^\delta$ is included into the last sum as many times as many of its adjacent edges belong to $\gamma^\delta$. 
By interpreting the sums as the flux of the gradient of the harmonic measure through $\gamma^\delta$ (see Appendix), it is not hard to see that there exist constants $C_{1,2}>0$ independent of $\delta$ such that 
\begin{eqnarray}
\sum_{v\in\gamma^\delta}\text{hm}^\bullet_{\Od_r}(\pa\Od_r\setminus\gamma^\delta_1,v)&\leq& C_1,\label{eq: flux1}\\
\sum_{f\sim\gamma^\delta}\text{hm}^\circ_{\Od}([b_{2k-1};b_{2k}],u)&\geq& C_2.\label{eq: flux2}
\end{eqnarray}
We claim that $\max_{\Od_r}H^{\bullet\dagger}\to 0$ as $\delta \to 0$. Indeed, as in the proof of Lemma \ref{lemma: H} we see that $H^{\dagger}$ is constant along each boundary arc and is bounded from below by the minimum of these constants (this time there might be jumps in these constants at $a_j\in \{a_1,\dots,a_m\}\setminus\{a'_1,\dots,a'_{m'}\} $). 
Convergence of $H^\dagger$ to zero in the bulk implies that all these constants must tend to zero, which in its turn implies that the maximum of $H^\dagger$ over $\Od_r$ tends to zero. 
Therefore, (\ref{eq: flux_contradiction}) fails for $\delta$ small enough, contradicting the assumption that $|F^{\dagger}(\nor{l})|\geq|\norF_{\bcond^{'\delta}}(\nor{l})|$ for all discrete outer normals at the vertices of $\gamma^\delta$. 
Applying the same argument to the other arc adjacent to $a_i$, we find the desired normal $\nor{r}$, thus concluding the proof.
\end{proof}

Theorem \ref{thm: multiple} is of certain independent interest because of the following corollary. Consider the critical FK-Ising model (that is, the random cluster model with $q=2$) on a domain $\Od$ with boundary conditions $\bcond^\delta$ specified by $2k$ marked boundary vertices $x_1,\dots,x_{2k}$ (listed counterclockwise), each of the boundary arcs $[x_1x_2],\dots,[x_{2k-1}x_{2k}]$ being wired and their complement left free. Denote by $\crossing(\Od,\bcond^\delta,i_1,\dots,i_r)$ the probability of the event that the arcs $[x_{2i_1-1}x_{2i_1}],\dots,[x_{2i_r-1}x_{2i_r}]$ belong to the same cluster. 
\begin{corollary}
Suppose the domains $(\Od,\bcond^\delta)$ approximate $(\Omega,\bcond)$. Then each of the probabilities $\crossing(\Od,\bcond^\delta,i_1,\dots,i_r)$ tends to a conformally invariant limit which is a quadratic irrational function of the images of $x_1,\dots,x_{2k}$ under a conformal map to $\H$.
\end{corollary}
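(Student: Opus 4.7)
The plan is to express $\crossing$ as a rational combination of spin Ising partition functions via the Edwards--Sokal coupling, and then to extract the required ratios of partition functions from the scaling limit of the fermionic observable provided by Theorem~\ref{thm: multiple}. By the Edwards--Sokal coupling at $q=2$, the FK-Ising measure on $(\Od,\bcond^\delta)$ is obtained from the coupled spin Ising model by assigning an independent uniform spin $\sigma_j\in\{\pm\}$ to each wired arc $[x_{2j-1}x_{2j}]$ and summing over assignments, weighted by the spin Ising partition function $Z^{\mathrm{spin}}_{\Od,\bcond^\delta_\sigma}$ of the configuration in which wired arc $j$ carries spin $\sigma_j$ and free arcs remain free. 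The event that the wired arcs $[x_{2i_1-1}x_{2i_1}],\dots,[x_{2i_r-1}x_{2i_r}]$ all belong to the same FK cluster depends only on the partition of the wired arcs into clusters; by a standard inclusion--exclusion on the restricted sign vector $(\sigma_{i_1},\dots,\sigma_{i_r})$, one obtains explicit rational coefficients $c(\sigma)\in\mathbb{Q}$ such that
\[
\crossing(\Od,\bcond^\delta,i_1,\dots,i_r)=\frac{\sum_{\sigma\in\{\pm\}^k}c(\sigma)\,Z^{\mathrm{spin}}_{\Od,\bcond^\delta_\sigma}}{\sum_{\sigma\in\{\pm\}^k}Z^{\mathrm{spin}}_{\Od,\bcond^\delta_\sigma}}.
\]
This reduces the problem to showing that each ratio $Z^{\mathrm{spin}}_{\Od,\bcond^\delta_\sigma}/Z^{\mathrm{spin}}_{\Od,\bcond^\delta_{\sigma_0}}$ converges to a quadratic irrational, conformally invariant function of the points $\varphi(x_1),\dots,\varphi(x_{2k})$ for some (and hence any) conformal map $\varphi\colon\Omega\to\H$.

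To extract these ratios I would use that, up to a combinatorial factor depending only on the geometry and not on $\sigma$, $Z^{\mathrm{spin}}_{\Od,\bcond^\delta_\sigma}$ coincides with the low--temperature expansion $Z(\Od,a_1^\sigma,\dots,a_{m_\sigma+s_\sigma}^\sigma)$ of~(\ref{eq: lt_multi}) and, after moving the final marked vertex to the adjacent outer normal corner, with $Z(\Od,\nor{a}_1^\sigma,\dots,\nor{a}_{m_\sigma+s_\sigma-1}^\sigma,\nor{b}_{2k})$. By the boundary value formula~(\ref{eq: bc_cb_many}) of Lemma~\ref{lemma: shol_many}, the last partition function is precisely $|F_{\Od,\bcond^\delta_\sigma}(\nor{b}_{2k})|$, which is the very denominator appearing in the normalization $\norF_{\Od,\bcond^\delta_\sigma}$ of Theorem~\ref{thm: multiple}. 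Consequently, the ratio $Z^{\mathrm{spin}}_{\Od,\bcond^\delta_\sigma}/Z^{\mathrm{spin}}_{\Od,\bcond^\delta_{\sigma_0}}$ can be recovered either by evaluating $F$ at a common interior point $z_0\in\Omega$ and using Theorem~\ref{thm: multiple} to pass to the limit, or --- cleaner in practice --- by telescoping through a sequence of single--arc flips and reading off, via~(\ref{eq: bc_cb_many}) and~(\ref{eq: freearcs}), the leading square--root coefficients of $\norF$ at the endpoints of each flipped arc. In either case the limit is an explicit combination of the boundary coefficients of the half--plane observables $f_{\H,\varphi(\bcond_\sigma)}$.

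Finally, by~(\ref{eq: deff_many}) and Lemma~\ref{lemma: uniqueness}, $f_{\H,\varphi(\bcond_\sigma)}$ is of the form $P_{\bcond_\sigma}(z)\big/\bigl(\prod_i\sqrt{(z-\varphi(b_{2i-1}))(z-\varphi(b_{2i}))}\,\prod_i(z-\varphi(a_i))\bigr)$, where the real coefficients of the polynomial $P_{\bcond_\sigma}$ are determined by the nondegenerate $\mathbb{R}$--linear system (\ref{eq: residue})--(\ref{eq: normalization}) whose inputs are rational in the $\varphi(x_i)$ and in the square roots $\sqrt{(\varphi(b_{2j})-\varphi(b_{2i-1}))(\varphi(b_{2j})-\varphi(b_{2i}))}$. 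Hence each $Z^{\mathrm{spin}}$--ratio limit is quadratic irrational in the $\varphi(x_i)$, and substituting into the rational expression from the first step yields the crossing probability in the limit as a quadratic irrational function of $\varphi(x_1),\dots,\varphi(x_{2k})$. Conformal invariance is automatic because the covariance factor $(\varphi'(z_0))^{1/2}$ in $f_{\Omega,\bcond_\sigma}=(\varphi')^{1/2}f_{\H,\varphi(\bcond_\sigma)}\circ\varphi$ cancels in every ratio that enters the final formula.

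The main obstacle I anticipate is the bookkeeping in the middle step: making precise and uniform in $\sigma$ the proportionality between the combinatorial observable $F_{\Od,\bcond^\delta_\sigma}$ and the spin Ising partition function $Z^{\mathrm{spin}}_{\Od,\bcond^\delta_\sigma}$, and tracking the signs of $\eta_{\nor{b}_i}$ in~(\ref{eq: bc_cb_many}) as $\sigma$ varies --- recall that flipping one wired arc changes which of the vertices $b_1,\dots,b_{2k}$ belong to the distinguished set $\{a_{m+1},\dots,a_{m+s}\}$ where free arcs adjoin a $+/-$ discontinuity, and hence affects both the normalization and the signs $\fsign_i$ in~(\ref{eq: freearcs}). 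All other ingredients are either classical (Edwards--Sokal, conformal invariance of harmonic measure) or direct applications of Theorem~\ref{thm: multiple} and the explicit form~(\ref{eq: deff_many}).
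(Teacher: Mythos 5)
Your proposal follows essentially the same route as the paper's proof: the Edwards--Sokal reduction of $\crossing$ to ratios of fixed-arc-spin partition functions (which the paper pins down concretely via the identity in (\ref{eq: crossing}), $\crossing=\E[\sigma([x_{2i_1-1},x_{2i_1}])\cdots\sigma([x_{2i_r-1},x_{2i_r}])]$, rather than an unspecified inclusion--exclusion), the identification of these ratios with boundary values of the fermionic observable through (\ref{eq: bc_cb_many}), convergence of the corresponding boundary jumps via Theorem \ref{thm: multiple}, telescoping over arc flips, and quadratic irrationality read off from the linear system as in Proposition \ref{prop: linear}. The only caveat is that your alternative extraction at a common interior point $z_0$ would not work as stated, since each $\norF_{\Od,\bcond^\delta_\sigma}$ is normalized by its own partition function and the ratio at $z_0$ therefore loses exactly the quantity you want; the boundary-coefficient telescoping you yourself prefer is precisely what the paper does.
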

\begin{proof}
Sample a random-cluster configuration in $\Od$, and assign a spin $+1$ to all vertices of the cluster attached to $[x_{2i_1-1},x_{2i_1}]$, and for each of the other clusters choose $\pm 1$ independently with probability $\frac12$. By the well-known Edwards-Sokal argument, the resulting spin configuration is distributed as in the critical Ising model with $+$ boundary conditions on the arc $[x_{2i_1-1},x_{2i_1}]$, free boundary conditions on $\partial\Od \backslash \cup_{i=1}^k[x_{2i-1},x_{2i}]$ and ``monochromatic'' boundary conditions on each $[x_{2i-1},x_{2i}]$ for $i\neq i_1$ (i. e., the spins on each arc are conditioned to be the same, but not fixed; we denote these random spins by $\sigma([x_{2i-1},x_{2i}])$). Given $\sigma_1=\pm 1,\dots, \sigma_k=\pm 1$, let $Z_{\sigma_1 \dots \sigma_k}$ denote the restriction of the partition function to the set of configurations in this model such that $\sigma([x_{2i-1},x_{2i}])=\sigma_i$. Then
\begin{multline}
\crossing(\Od,\bcond^\delta,i_1,\dots,i_r)= \E[\sigma([x_{2i_1-1},x_{2i_1}])\dots\sigma([x_{2i_r-1},x_{2i_r}])]\\= \frac{\sum_{\sigma\in\{\pm 1\}^k:\;\sigma_{i_1}\equiv 1}\sigma_{i_1}\dots\sigma_{i_r}Z_{\sigma_1 \dots \sigma_k}}{\sum_{\sigma\in\{\pm 1\}^k:\;\sigma_{i_1}\equiv 1}Z_{\sigma_1 \dots \sigma_k}}.
\label{eq: crossing} 
\end{multline}

Therefore, it suffices to prove the convergence of all the ratios $Z_{\sigma_1 \ldots \sigma_k}/Z_{+1 \ldots +1}$ to conformally invariant limits. To this end, consider the observable $F_{\bcond^\delta}$ for the boundary conditions $\bcond^\delta$ corresponding to $Z_{+1\ldots+1}$ (that is, take $b_{1}=x_2,\ldots,b_{2k-1}=x_{2k},b_{2k}:=x_1$, and  put $s=2$ and $a_1:=b_{2k-1}$, $a_2:=b_{2k}$) and apply (\ref{eq: bc_cb_many}) to $z=\nor{b}_{2k}$ and $z=\nor{b}_{2i}$. This yields
\begin{equation}
\left|\frac{F_{\bcond^\delta}(\nor{b}_{2i})}{F_{\bcond^\delta}(\nor{b}_{2k})}\right|=\frac{Z_{\sigma_1 \dots \sigma_k}}{Z_{+1\ldots+1}},  
\label{eq: F_to_Z} 
\end{equation}
where $\sigma_1=-1,\dots,\sigma_{i}=-1$, $\sigma_{i+1}=+1,\dots,\sigma_{k}=+1$. The left-hand side of (\ref{eq: F_to_Z}) is the jump at $b_{2i}$ in the boundary value of $H$, which by Theorem \ref{thm: multiple} tends to a conformally invariant quantity 
\begin{equation}
\label{eq: F_to_f}
\left|\frac{F_{\bcond^\delta}(\nor{b}_{2i})}{F_{\bcond^\delta}(\nor{b}_{2k})}\right|\stackrel{\delta\to 0}{\longrightarrow} \lim_{z\to\varphi(b_{2i})}\left|\sqrt{\pi(z-\varphi(b_{2i}))}f_{\H,\varphi(\bcond)}(z)\right|,
\end{equation}
which is a quadratic-irrational function of $\varphi(b_1),\dots,\varphi(b_{2k})$. 
%This way, we get the convergence of $Z_{\sigma_1, \dots, \sigma_k}/Z_{+1, \dots, +1}$ whenever the sequence $\sigma_1,\dots,\sigma_{2k},\sigma_1$ has only two sign changes. 
The same reasoning applied to $F_{\bcond^\delta}$ with other boundary conditions $\bcond^\delta$ yields the corresponding result for all the ratios of the form $Z_{-\sigma_1\ldots-\sigma_i \sigma_{i+1}\ldots\sigma_{2k}}/Z_{\sigma_1\ldots\sigma_{2k}}$, and every ratio is a telescoping product thereof. 
\end{proof}
\begin{rem}
 The explicit expressions for the right-hand side of (\ref{eq: F_to_f}), and hence for the limits of $\crossing$, follows readily from Proposition \ref{prop: linear} below. Since those are rather complicated, we prefer not to write them down.
\end{rem}

\section{Convergence of interfaces}
\label{sec: interfaces}
Let $\gamma^\delta:=\{\nor{a}_1=\gamma^\delta_0,\gamma^\delta_1,\gamma^\delta_2\dots\}$ denote the random discrete interface starting at $a^\delta_1$ in the decomposition of $S\in\Conf(\Od,\bcond^\delta)$; we assume any deterministic or random rule to resolve ambiguities in the decomposition of $S$; for example, one may take the rightmost possible interface. Denote by $\gamma^\delta_{[0,n]}:=\{\gamma^\delta_0,\gamma^\delta_1,\dots,\gamma^\delta_n\}$ the initial segment of this interface containing $n+1$ edges. Let $\varphi^\delta(z)$ be a conformal map which maps $\Od$ to the upper half-plane $\H$ such that $\varphi(a_1^\delta)\neq \infty$.
Then $\varphi^\delta(\gamma^\delta)$ is a slit in $\H$ which can be described by Loewner's equation:
$$
 \partial_t g^\delta_t(z)=\frac{2}{g^\delta_t(z)-a_1^\delta(t)},\quad g_0^\delta(z)=z.
$$
Here $g^\delta_t$ is the conformal map from $\H^\delta_t$ to $\H$ satisfying hydrodynamic normalization $g^\delta_t(z)=z+o(1)$ at infinity, $\H^\delta_t$ is the unbounded connected component of $\H\backslash \varphi^\delta\left(\gamma^\delta_{[0,t]}\right)$, $t$~is the parametrization of $\gamma^\delta$ by twice the half-plane capacity of $\varphi^\delta\left(\gamma^\delta_{[0,t]}\right)$, and $a^\delta_1(t)\in\R$ is the random driving force. We denote by $\bcond_t^\delta$ the boundary conditions in $\H$ specified by the marked points $a^\delta_1(t),\ldots, a^\delta_m(t)$, $b^\delta_1(t),\ldots, b^\delta_{2k}(t)$, where  $a^\delta_i(t):= g^\delta_t(a^\delta_i(0))$ and  $a_i^\delta(0):=\varphi^\delta(a_i^\delta)$, $2\leq i\leq m$; similarly for $b_i^\delta(t)$, $1\leq i\leq 2k$.

Given boundary conditions $\bcond$ in $\H$, denote 
$$
 R(\bcond):=\res_{a_1} f_{\H,\bcond};
$$
note that $R$ is a quadratic irrational function of the points $a_1,\dots,a_{m},b_1,\dots,b_{2k}$ which specify $\bcond$; we give the explicit expression for $R(\bcond)$ in Proposition \ref{prop: res_explicit} below. Let $a_1(t)$ be the solution to the following system of stochastic differential equations:
\begin{equation}
\label{eq: drift}
\left\{
\begin{aligned}
da_1(t)=\sqrt{3}dB(t)-3\frac{\partial_{a_1}R(\bcond_t)}{R(\bcond_t)}dt,  \\
da_i(t)=\frac{2dt}{a_i(t)-a_1(t)},\quad 2\leq i\leq m;\\
db_i(t)=\frac{2dt}{b_i(t)-a_1(t)},\quad 1\leq i\leq 2k.
\end{aligned}
\right.
\end{equation}
where $B(t)$ is the standard Brownian motion in $\R$ and $\bcond_t$ stands for boundary conditions in $\H$ specified by $a_1(t),\dots,b_{2k}(t)$. Note that $a_i(t)=g_t(a_i(0))$, $2\leq i\leq m$ and $b_i(t)=g_t(b_i(0))$, $1\leq i\leq 2k$, where $g_t$ is the solution to Loewner's equation with the driving force~$a_1(t)$

If $\crs$ is a cross-cut in $\H$, let $T_\crs$ (respectively, $T_\crs^\delta$) denote the first time that the curve generated by $a_1(t)$ (respectively, by $a_1^\delta(t)$) hits $\crs$.
\begin{theorem}
\label{thm: interfaces} Suppose $(\Od,\bcond^\delta)$ approximate $(\Omega,\bcond)$ and the maps $\varphi^\delta$ are chosen so that they converge to $\varphi:\Omega\to\H$ with $\varphi(a_1)\neq \infty$ uniformly on compact subsets of $\Omega$. Then, for any cross-cut $\crs$ in $\H$ separating $\varphi(a_1)$ from other marked points and from infinity, the random processes $a^\delta_1(t)$ can be coupled to $a_1(t)$ (defined by (\ref{eq: drift}) with initial conditions $\bcond_0:=\varphi(\bcond^\delta)$) in such a way that $
 \sup_{[0,T^\delta_\crs\wedge T_\crs]}|a^\delta_1(t)-a_1(t)| 
 $ tends to zero in probability.
\end{theorem}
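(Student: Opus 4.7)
The plan is to follow the standard route from a holomorphic martingale observable to an SLE-type SDE, as in \cite{LSW, Izy, Zhan-LERW}. There are four ingredients: (i) tightness of the discrete interfaces; (ii) a martingale observable in the limit, obtained from Theorem~\ref{thm: multiple}; (iii) an It\^o-type calculation matching Laurent coefficients in an expansion around the tip to identify the drift and diffusion of $a_1(t)$; and (iv) strong uniqueness of the SDE~(\ref{eq: drift}).

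For tightness, I would invoke the Kemppainen--Smirnov criterion: granted a uniform RSW-type annulus-crossing estimate for the exploration, the discrete driving functions $a_1^\delta(\cdot)$ are tight in $C([0,T])$ for every $T < T_\crs^\delta$, and any subsequential limit solves Loewner's equation with a continuous driving function $a_1(t)$. Such annulus estimates are standard for critical Ising / FK-Ising at the required generality; the presence of free arcs is handled via Edwards--Sokal together with FKG. In particular, one also gets that $b_i^\delta(t), a_i^\delta(t)$ with $i \geq 2$ converge to the images of the corresponding marked points under the limiting Loewner flow, so $\bcond^\delta_t \to \bcond_t$ up to the stopping time $T_\crs$.

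Given any subsequential limit, Theorem~\ref{thm: multiple} applied to $\Od_t^\delta$ identifies, for every fixed bulk point $z$, the continuous local martingale
\[
 M_t(z) \;:=\; (g_t'(z))^{1/2}\,f_{\H,\bcond_t}(g_t(z)).
\]
Assume $a_1(t)$ is a continuous semimartingale with decomposition $da_1 = \mu_t\,dt + \sigma_t\,dB_t$; this follows from tightness together with the fact that $M_t(z)$ is a non-degenerate local martingale that depends smoothly on $a_1$. Set $u := g_t(z) - a_1(t)$ and $G := (g_t'(z))^{1/2}$. From Loewner's equation one computes $du = (2/u - \mu_t)\,dt - \sigma_t\,dB_t$, $d\langle u \rangle = \sigma_t^2\,dt$ and $d\log G = -dt/u^2$. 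Substituting the expansion
\[
 f_{\H,\bcond_t}(g_t(z)) \;=\; \frac{R(\bcond_t)}{u} + A(\bcond_t) + B(\bcond_t)\,u + \dots
\]
from (\ref{eq: res_f}) into $M_t$ and applying It\^o's formula, the drift of $M_t$ becomes a Laurent series in $u$ whose coefficients must vanish separately (as $z$ varies, so does $u$, so each power is tested). The $u^{-3}$ coefficient equals $G R(\bcond_t)(\sigma_t^2 - 3)$, forcing $\sigma_t^2 \equiv 3$. Given this, the $u^{-2}$ coefficient equals $G(R(\bcond_t)\mu_t + 3\partial_{a_1}R(\bcond_t))$, forcing $\mu_t = -3 \partial_{a_1} R(\bcond_t)/R(\bcond_t)$. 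The remaining identities at $u^{-1}, u^0, \dots$ are Ward-type relations for $A, B, \dots$ that are automatic consequences of the characterization of $f_{\H,\bcond}$ in Lemma~\ref{lemma: uniqueness}.

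This is exactly the SDE (\ref{eq: drift}). Since $\crs$ separates $\varphi(a_1)$ from the other marked points, the marked points stay distinct on $[0,T_\crs]$, the coefficients of (\ref{eq: drift}) are locally Lipschitz there, so the solution is strong and unique. Hence every subsequential limit coincides, which gives convergence in probability of $a_1^\delta(t)$ to $a_1(t)$ up to $T_\crs^\delta \wedge T_\crs$; a Skorokhod-type construction then yields the claimed coupling. The main obstacle I anticipate is the tightness input: verifying Kemppainen--Smirnov uniformly for domains with mixed $+/-/$free arcs and several changes of boundary conditions is the only place where the argument is not a purely formal consequence of Theorem~\ref{thm: multiple}.
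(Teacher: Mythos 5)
Your overall architecture (tightness, subsequential limit, It\^o identification via Laurent coefficients, strong uniqueness) is a legitimate route, and your coefficient matching at orders $u^{-3}$ and $u^{-2}$ is the same algebra as the paper's expansion (note that your $u^{-2}$ identity silently uses $A(\bcond_t)=0$, i.e.\ the normalization (\ref{eq: residue})/(\ref{eq: res_f}); with a nonzero constant term the drift would come out wrong). However, there are two genuine gaps, both located exactly where the paper's argument is designed to do the work. First, your input (i): the Kemppainen--Smirnov framework is asserted, not proved, and the paper explicitly remarks that the results of \cite{AnttiStas} ``formally do not apply directly'' to this mixed $+/-/\mathrm{free}$ setup; the no-six-arm estimate from \cite{CDH} gives tightness of the \emph{curves}, which is used only \emph{after} Theorem \ref{thm: interfaces} to upgrade to convergence of curves, and does not by itself deliver the control of driving processes and capacity parametrization your scheme needs. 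Second, and more seriously, the sentence ``Assume $a_1(t)$ is a continuous semimartingale with decomposition $da_1=\mu_t\,dt+\sigma_t\,dB_t$; this follows from tightness together with the fact that $M_t(z)$ is a non-degenerate local martingale depending smoothly on $a_1$'' is not a deduction. Tightness only produces a continuous limit; the semimartingale property of that limit, with an absolutely continuous drift and a Brownian martingale part, is precisely the delicate point, and knowing that $M_t(z)$ is a local martingale for each fixed $z$ does not imply it (you cannot apply It\^o to $M_t(z)=\sqrt{g_t'(z)}\,f_{\H,\bcond_t}(g_t(z))$ as a function of $a_1(t)$ before you know $a_1(t)$ is a semimartingale --- the argument as written is circular). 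There is also a smaller glossed step: transferring the discrete martingale property to the limiting $M_t(z)$ requires the convergence of Theorem \ref{thm: multiple} to hold uniformly over all slit domains arising along the exploration and over stopping times (the Carath\'eodory-compactness argument in Lemma \ref{lemma: martingale}).

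The paper circumvents both problems by never passing to a subsequential limit of the driving process: from the uniform approximate martingale property (\ref{eq: approxMart}) it derives quantitative conditional increment estimates for the \emph{discrete} driving process (Lemma \ref{lemma: LSW}), i.e.\ $\E[\Delta_a+3\tfrac{\partial_{a_1}R}{R}\Delta_\tau\,|\,\mathcal{F}_{\tau_1}]=O(\epsilon^3)$ and $\E[\Delta_a^2-3\Delta_\tau\,|\,\mathcal{F}_{\tau_1}]=O(\epsilon^3)$, the singular-coefficient extraction being justified by a compactness argument rather than by ``each power is tested''; then the classical coupling of \cite{LSW} with $\sqrt{3}B_t$ and Gronwall's lemma against the strong solution of (\ref{eq: drift}) give the stated coupling directly, with no tightness hypothesis and no a priori semimartingale structure. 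To salvage your route you would have to either prove the KS-type estimates in this setting and then supply an honest argument for the semimartingale decomposition of the limit (e.g.\ via increment estimates --- at which point you have essentially reproduced Lemma \ref{lemma: LSW}), or switch to the paper's direct approach.
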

\begin{rem}
Since the tightness of interfaces is known for the critical Ising model (the ``no-six-arm estimate'' easily follows e.g. from \cite[Corollary 1.7]{CDH}), this theorem immediately implies convergence in the topology of curves. The results of \cite{AnttiStas}, although formally do not apply directly to our setup, also can be adapted to get the precompactness estimates and to simplify the proof below. The same a priori bounds allow one to extend the convergence up to the time interface hits other marked points or free boundary arcs.
\end{rem}

We will need several standard analytic facts. First, the set of all possible realizations of~$\H_t$ for $t<T_\crs$ (and of~$\H^\delta_t$ for $t<T^\delta_\crs$) is precompact with respect to the Carath\'eodory topology as seen from any point outside $\crs$; this in particular implies that all possible realizations of $\bcond^\delta_t$ and $\bcond_t$ belong to a compact subset of the set of  $(2k+m)$-tuples of \emph{distinct} points in $\R$. Second, if $g_t,g_{1,t}$ and $g_{2,t}$ are Loewner chains driven by Loewner parameters $a(t),a_1(t)$, $a_2(t)$ respectively, and their hulls at times $t, t_{1},t_{2}$ are separated from a compact set $\comp\subset \overline{\H}$ by a cross-cut $\crs$, then for all $z\in\comp$,
\begin{gather}
|g_{1,t}(z)-g_{2,t}(z)|\leq C \max\limits_{[0,t]}|a_1(t)-a_2(t)| \label{eq: lip}\\
\left|g_{t_2}(z)-g_{t_1}(z)-\frac{2(t_2-t_1)}{g_{t_1}(z)-a(t_1)}\right|\leq C|t_2-t_1|\left(\max_{t_1\leq t\leq t_2}|a(t)-a(t_1)|+|t_2-t_1|\right)\label{eq: regg}\\
 \left|g'_{t_2}(z)-g'_{t_1}(z)+\frac{2g'_{t_1}(z)(t_2-t_1)}{(g_{t_1}(z)-a(t_1))^2}\right|\leq C|t_2-t_1|\left(\max_{t_1\leq t\leq t_1}|a(t)-a(t_2)|+|t_2-t_1|\right)\label{eq: reggprime}
\end{gather}
with a constant $C$ depending only on $\mathcal{C}$ and $\crs$. See \cite{Zhan-LERW} or Appendix for proofs.

The proof of Theorem \ref{thm: interfaces} is based on the martingale property of the observable $\norF_{\Od,\bcond^\delta}$. Given $\gamma^\delta_{[0,n]}$, denote by $\Od(n)$ the connected component of 
$$
\Od \backslash \{\text{faces adjacent to the edges of } \gamma^\delta_{[0,n]}\}
$$ 
that contains all the marked points except for $a_1^\delta$, if such a component exists. Note that once $\gamma^\delta_{[0,n]}$ is known, one also knows the spins on its adjacent faces, thus $\Od(n)$ comes with natural boundary conditions which we denote by $\bcond^\delta(n)$. Denote by $\mathcal{F}_n$ the filtration generated by $\gamma^\delta_{[0,n]}$ and by $\mathcal{F}_t$ the one generated by $a^\delta_1|_{[0,t]}$; then $\mathcal{F}_t=\mathcal{F}_{n_t}$, where $n_t$ is the smallest $n$ such that $2\hcap(\varphi^\delta(\gamma^\delta_{[0,n]}))\geq t$. 
\begin{lemma}
\label{lemma: martingale}
For every edge $z\in\Od$, the process $\norF_{\Od(n),\bcond^\delta(n)}(z)$, stopped at the first $n$ such that either $z\notin \Od(n)$ or $\Od(n)$ seizes to exist, is a martingale with respect to $\mathcal{F}_n$. Given a cross-cut $\beta$ in $\H$ separating $\varphi(a_1)$ from infinity, other marked points and from a compact set $\mathcal{C}\subset \H$, one has  
  \begin{equation}
   \label{eq: approxMart}
  \E\left[\left.\sqrt{(g^\delta_{\tau_2})'(z)}f_{\H,\bcond_{\tau_2}^\delta}(g_{\tau_2}^\delta(z))-\sqrt{(g^\delta_{\tau_1})'}f_{\H,\bcond_{\tau_1}^\delta}(g_{\tau_1}^\delta(z))\right|\mathcal{F}^\delta_{\tau_1}\right]\stackrel{\delta\to 0}{\longrightarrow} 0
  \end{equation}
uniformly in $z\in\mathcal{C}$ and in $\mathcal{F}^\delta_t$-stopping times $\tau_1<\tau_2<T^\delta_\crs$.
\end{lemma}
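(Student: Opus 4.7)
The plan is to derive (\ref{eq: approxMart}) from an exact discrete martingale identity, together with uniform convergence in the scaling limit.

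For the discrete martingale property, I will use the domain Markov property of the Ising model applied to the low-temperature representation (\ref{eq: obs_many}). Conditionally on $\gamma^\delta_{[0,n]}$, the restriction to edges not on the explored path of a configuration $S\in\Conf(\Od,\nor{a}_1,\dots,\nor{a}_{m+s-1},z)$ belongs to $\Conf(\Od(n),\gamma^\delta_n,\nor{a}_2,\dots,\nor{a}_{m+s-1},z)$ with the same Ising weights. The winding of $S$ from $\nor{a}_1$ to $z$ splits into the winding along $\gamma^\delta_{[0,n]}$, which is an $\mathcal{F}_n$-measurable phase, and the winding on the remaining piece from $\gamma^\delta_n$ to $z$. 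Summing over the remaining configurations reproduces, up to that phase, the observable $F_{\Od(n),\bcond^\delta(n)}(z)$. Dividing by $Z(\Od(n),\dots)$ absorbs the phase, and $\norF_{\Od(n),\bcond^\delta(n)}(z)$ is identified as a conditional expectation of an $\mathcal{F}_\infty$-measurable functional of $\gamma^\delta$ and $z$, hence a martingale. Stopping at the first $n$ at which $z$ is engulfed by the hull or $\Od(n)$ ceases to exist handles the degenerate cases.

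To identify the scaling limit, I observe that for each realization of $\gamma^\delta_{[0,n_t]}$ with $t<T^\delta_\crs$, the pair $(\Od(n_t),\bcond^\delta(n_t))$ approximates $\Omega(t):=(\varphi^\delta)^{-1}(\H^\delta_t)$ equipped with the pullback of $\bcond^\delta_t$. By Theorem \ref{thm: multiple} and the conformal covariance of $f_{\Omega,\bcond}$, for $z\in\mathcal{C}$ and $w:=(\varphi^\delta)^{-1}(z)$ one has
$$\norF_{\Od(n_t),\bcond^\delta(n_t)}(w)\longrightarrow ((\varphi^\delta)'(w))^{1/2}((g^\delta_t)'(z))^{1/2}f_{\H,\bcond^\delta_t}(g^\delta_t(z)).$$
Since $(\varphi^\delta)'(w)$ does not depend on $t$, the discrete martingale converges, up to this fixed factor, to the quantity inside the expectation in (\ref{eq: approxMart}).

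To promote this pointwise identification into a statement uniform over $\mathcal{F}^\delta_{\tau_1}$-stopping times $\tau_1<\tau_2<T^\delta_\crs$, I will rely on Carath\'eodory precompactness: the cross-cut $\crs$ keeps the tip and the other marked points of $\bcond^\delta_t$ at a definite distance from $\mathcal{C}$ under $g^\delta_t\circ\varphi^\delta$, so the family $\{(\Od(n_t),\bcond^\delta(n_t)): t\leq T^\delta_\crs\}$ is Carath\'eodory-precompact as seen from $\mathcal{C}$. Inspecting the proof of Theorem \ref{thm: multiple}, the a priori bounds on $H^{\bullet,\circ}$ and the identification of the limiting boundary value problem via Lemma \ref{lemma: uniqueness} are quantitative in this data, so the convergence is uniform over this family; combined with the discrete martingale property, this yields (\ref{eq: approxMart}). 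The main technical obstacle is precisely this uniformity: one must verify that $f_{\H,\bcond}$ depends continuously on the $(2k+m)$-tuple of marked points as they vary over compact sets of pairwise distinct configurations, and that the estimates underlying Theorem \ref{thm: multiple} are stable under Carath\'eodory perturbations of the domain and of its marked points. This is analogous to the stability arguments in \cite{Izy, Zhan-LERW} and should proceed by the same template.
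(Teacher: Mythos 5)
Your proposal is correct and follows essentially the same route as the paper: the discrete martingale property comes from the normalization by the Ising partition function via the decomposition of configurations along the explored interface (the paper simply invokes the verbatim argument of \cite[Proposition 2.1]{Izy}), and (\ref{eq: approxMart}) then follows from optional stopping together with convergence of the slit-domain observables, uniform over realizations with $t<T^\delta_\crs$, granted by Theorem \ref{thm: multiple} and Carath\'eodory compactness. The only divergence is in how that uniformity is closed: where you propose to make the proof of Theorem \ref{thm: multiple} quantitative (stability of $f_{\H,\bcond}$ and of the $H$-estimates under perturbations), the paper avoids any such analysis by a subsequence-extraction argument -- from any putative sequence of data violating (\ref{eq: uniform}) one extracts Carath\'eodory-convergent slit domains $\Od(n_t)$ (which share their limit with $\Od\setminus\gamma^\delta_{[0,t]}$, the difference being cut off by cross-cuts of length $O(\delta)$) and convergent tips and marked prime ends, to which Theorem \ref{thm: multiple} applies as stated, yielding a contradiction.
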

\begin{proof}
 The first assertion follows from the fact that $F$ is normalized by the appropriate partition function; the proof repeats verbatim, e. g., one in \cite[Proposition 2.1]{Izy}. The second one follows from the first one, Theorem \ref{thm: multiple} and the compactness. Indeed, denote by $e_z$ the edge of $\Od$ closest to $\left(\varphi^\delta\right)^{-1}(z)$; then $e_z\in\Od(n_{\tau_2})$ for $\delta$ small enough.
 Clearly, $n_{\tau_{1,2}}$ are stopping times for $\mathcal{F}_n$. Applying the optional stopping theorem to the martingale $F_{\Od(n),\bcond^\delta(n)}(e_z)-F_{\Od(n_{\tau_1}),\bcond^\delta(n_{\tau_1})}(e_z)$, we infer that its expected value at $n_{\tau_2}$ given $\mathcal{F}_{\tau_1}$ is zero. The quantity $|((\varphi^\delta)^{-1})'|$ is bounded from above and below uniformly in $\delta$ and over the compact set $\comp$. Hence, it suffices to show that 
 \begin{equation}
 \label{eq: uniform}
 F_{\Od(n_t), \bcond(n_t)}(e_z)-\sqrt{\frac{(g^\delta_{t})'(z)}{((\varphi^\delta)^{-1})'(z)}}f_{\H,\bcond^\delta_t}((g_{t}^\delta(z)) \stackrel{\delta\to 0}{\longrightarrow} 0 
 \end{equation}
uniformly over $z\in \mathcal{C}$ and over all $t$ and all possible $\gamma^\delta_{[0,t]}$ such that $t<T^\delta_\crs$, i. e. that (\ref{eq: uniform}) holds for every sequence $\delta^{(i)}$, $z^{(i)}\in\comp$, $t^{(i)}$, $\gamma^{\delta^{(i)}}_{[0,t^{(i)}]}$ of such data. The set of all possible realizations of $\Od\backslash \gamma^\delta_{[0,t]}$ is Carath\'eodory compact as seen from any point of $\varphi^{-1}(\comp)$, hence  we may assume that $\Omega^{\delta^{(i)}}\backslash \gamma^{\delta{(i)}}_{[0,t^{(i)}]}\stackrel{\text{Cara}}{\longrightarrow}\Omega'$. Then also $\Omega^{\delta^{(i)}}(n_{t^{(i)}})\stackrel{\text{Cara}}{\longrightarrow}\Omega'$, since $\Od(n_{t})\subset \Od\backslash \gamma^\delta_{[0,t]}$ and any point of their difference can be separated from $(\varphi)^{-1}(\mathcal{C})$ by a cross-cut of length at most  $2\delta$. By one more extraction, we may assume that $\gamma_{t^{(i)}}^{\delta^{(i)}}$ also converges to some prime end $a'\in\pa\Omega'$ and that $z^{(i)}\to z$. But then (\
ref{eq: uniform}) follows directly from Theorem~\ref{thm: multiple}.
\end{proof}

\begin{lemma}
\label{lemma: LSW}
If two $\mathcal{F}^\delta_t$-stopping times $\tau_1<\tau_2<T^\delta_\crs$ almost surely satisfy the inequalities $\tau_2-\tau_1\leq\epsilon^2$, $\max\limits_{\tau_1\leq t\leq \tau_2}|a^\delta_1(t)-a^\delta_1(\tau_1)|\leq\epsilon$, then
 \begin{eqnarray}
  |\E[\Delta_a+3\frac{\partial_{a_1}R(\bcond^\delta_{\tau_1})}{R(\bcond^\delta_{\tau_1})}\Delta_{\tau}|\mathcal{F}_{\tau_1}]|<C(\bcond^\delta_0,\crs)\epsilon^3\\
  |\E[\Delta_a^2-3\Delta_{\tau}|\mathcal{F}_{\tau_1}]|<C(\bcond^\delta_0,\crs)\epsilon^3,
 \end{eqnarray}
 provided that $\delta<\delta_0(\bcond^\delta_0,\crs,\epsilon)$, where $\Delta_{\tau}:=\tau_2-\tau_1$ and $\Delta_a:=a^\delta_1(\tau_2)-a^\delta_1(\tau_1)$
\end{lemma}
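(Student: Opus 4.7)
The plan is to use the approximate martingale
$$M(\tau, z) := \sqrt{(g^\delta_\tau)'(z)}\, f_{\H,\bcond^\delta_\tau}(g^\delta_\tau(z))$$
from Lemma~\ref{lemma: martingale} as a test function, Taylor-expand $M(\tau_2) - M(\tau_1)$ to second order in $\Delta_a$ and $\Delta_\tau$, and read off $\mu_j := \E[\Delta_a^j\mid\mathcal{F}^\delta_{\tau_1}]$ for $j=1,2$ from the two most singular Laurent coefficients at the tip of the curve. First I would fix a compact $K \subset \H$ well separated from $\crs$; by \eqref{eq: approxMart}, for $\delta < \delta_0(\bcond^\delta_0, \crs, \epsilon)$ small enough we have $|\E[M(\tau_2, z) - M(\tau_1, z)\mid\mathcal{F}^\delta_{\tau_1}]| < \epsilon^3$ uniformly in $z \in K$ and in admissible stopping times $\tau_1 < \tau_2 < T_\crs^\delta$.

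Setting $w_0 := g^\delta_{\tau_1}(z) - a^\delta_1(\tau_1)$, $R_0 := R(\bcond^\delta_{\tau_1})$, and $R_1 := \partial_{a_1}R(\bcond^\delta_{\tau_1})$, I would combine the Loewner estimates \eqref{eq: regg}, \eqref{eq: reggprime} with the deterministic ODE $\dot{a}^\delta_i = 2/(a^\delta_i - a^\delta_1)$ (and similarly for the $b^\delta_i$) to obtain, under the standing hypotheses,
\begin{equation*}
g^\delta_{\tau_2}(z) - a_1^\delta(\tau_2) = w_0 - \Delta_a + 2\Delta_\tau/w_0 + O(\epsilon^3), \qquad \sqrt{(g^\delta_{\tau_2})'(z)/(g^\delta_{\tau_1})'(z)} = 1 - \Delta_\tau/w_0^2 + O(\epsilon^3),
\end{equation*}
together with analogous $O(\epsilon^3)$ expansions for $a_i^\delta(\tau_2), b_i^\delta(\tau_2)$. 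Substituting these into the Laurent expansion $f_{\H,\bcond}(w) = R(\bcond)/(w-a_1) + c_0(\bcond) + c_1(\bcond)(w-a_1) + O((w-a_1)^2)$ and Taylor-expanding $R, c_0, c_1$ to first order in the parameters, one finds after conditional expectation
\begin{equation*}
w_0^3 \, \E\bigl[(M(\tau_2) - M(\tau_1))/\sqrt{(g_{\tau_1}^\delta)'(z)}\,\bigm|\,\mathcal{F}^\delta_{\tau_1}\bigr] = \bigl(R_0\mu_2 - 3R_0\Delta_\tau\bigr) + \bigl(R_0\mu_1 + R_1\mu_2 - c_0(\bcond^\delta_{\tau_1})\Delta_\tau\bigr)\, w_0 + O(w_0^2) + O(\epsilon^3),
\end{equation*}
a polynomial in $w_0$ of degree at most $3$; the constant term comes from the $1/w_0^3$ singularity of the expansion and the linear term from the $1/w_0^2$ singularity (here the $-3R_0$ combines $-2R_0$ from the $2\Delta_\tau/w_0$ shift and $-R_0$ from the $\sqrt{g'}$ factor).

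The key algebraic input is $c_0(\bcond) \equiv 0$: squaring the Laurent expansion gives $f^2_{\H,\bcond}(w) = R^2/(w-a_1)^2 + 2Rc_0/(w-a_1) + O(1)$, so $h_\bcond := \Im\int^w f^2_{\H,\bcond}$ has a jump of $2\pi R(\bcond)c_0(\bcond)$ across $\R$ at $a_1$; since $a_1$ separates two arcs in $\plus\cup\minus$ where $h_\bcond \equiv 0$, this jump vanishes, and $R(\bcond) \neq 0$ (Remark~\ref{rem: a1}) forces $c_0(\bcond) = 0$. After removing the $c_0 \Delta_\tau$ term, the $O(\epsilon^3)$ bound on the martingale increment together with the boundedness of $|w_0|^3$ on $K$ show that the displayed polynomial in $w_0$ is $O(\epsilon^3)$ on the compact image of $K$ under $z \mapsto w_0$. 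Extracting its coefficients (for instance by Lagrange interpolation at sufficiently many points in $K$, which is well-conditioned by Carath\'eodory compactness of admissible hulls before $T^\delta_\crs$) yields $R_0(\mu_2 - 3\Delta_\tau) = O(\epsilon^3)$ and $R_0\mu_1 + R_1\mu_2 = O(\epsilon^3)$; since $R_0$ is bounded below on admissible configurations (Remark~\ref{rem: a1}), solving these gives $\mu_2 = 3\Delta_\tau + O(\epsilon^3)$ and then $\mu_1 = -3R_1/R_0 \cdot \Delta_\tau + O(\epsilon^3)$, which are exactly the two claimed inequalities. The main technical obstacle is the uniformity of all the $O(\epsilon^3)$ errors and of the coefficient-extraction step over the random realizations of $\bcond^\delta_{\tau_1}$; this is handled by the Carath\'eodory compactness of admissible hulls up to $T^\delta_\crs$ and the uniform positivity of $R_0$ guaranteed by Remark~\ref{rem: a1}.
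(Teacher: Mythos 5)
Your overall strategy is the same as the paper's: expand the approximate martingale of Lemma \ref{lemma: martingale} using (\ref{eq: regg})--(\ref{eq: reggprime}), write $f_{\H,\bcond}$ as $R(\bcond)/(w-a_1)$ plus a regular part, and read the two claimed estimates off the coefficients of $u^{-3}$ and $u^{-2}$; your coefficients agree with the paper's (\ref{eq: q3})--(\ref{eq: q2}), and your ``key algebraic input'' $c_0\equiv 0$ is simply the defining condition (\ref{eq: residue}), so that detour is harmless. The gap is in the extraction step. After multiplying by $w_0^3$, the quantity is \emph{not} a polynomial of degree at most $3$ in $w_0$: besides the constant and linear terms you display, it contains $Q_1^* w_0^2$ and, crucially, $Q_0^*(w_0)\,w_0^3$, where $Q_0^*$ is an unknown function built from the regular part $\dob(\bcond^\delta_{\tau_1},\cdot)$ and its first and second parameter derivatives. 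A priori these contributions are only $O(\epsilon)$ (and even a posteriori only $O(\epsilon^2)$), not $O(\epsilon^3)$; indeed, if your ``degree $\leq 3$ polynomial, Lagrange interpolation'' reasoning were valid it would also force $Q_1^*=O(\epsilon^3)$, which is false in general since $Q_1^*\approx \partial_{a_1}R\cdot\E[\Delta_a\mid\mathcal{F}_{\tau_1}]\sim\epsilon^2$. So evaluating the expression at finitely many points of a compact set bounded away from $u=0$ cannot, by interpolation alone, separate the constant and linear coefficients from this remainder to precision $\epsilon^3$.

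What is actually needed, and what the paper proves, is a uniform transversality statement: on the compact family of admissible configurations, a function of the form $Q_3^*/u^3+Q_2^*/u^2+Q_1^*/u+Q_0^*(u)$, with $Q_0^*$ ranging over the fixed finite-dimensional space spanned by $\dob$ and its derivatives (all analytic at $u=0$), can only be uniformly $O(\epsilon^3)$ on $g_{\tau_1}(\comp)$ if $\max\{|Q_1^*|,|Q_2^*|,|Q_3^*|\}=O(\epsilon^3)$. The paper establishes this by a normal-families/contradiction argument (normalizing by $M^{(n)}$, extracting Carath\'eodory-convergent subsequences of hulls and configurations, and noting that a nonzero principal part cannot coincide on an open set with a function analytic at the origin). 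Your closing sentence names the uniformity issue but offers only ``Carath\'eodory compactness'' and positivity of $R_0$ as the cure; that does not by itself justify the coefficient extraction, and as written the step would fail. Replacing the interpolation claim by the paper's compactness argument (or an equivalent uniform bound for the projection onto $\mathrm{span}\{u^{-1},u^{-2},u^{-3}\}$ along the finite-dimensional analytic space) closes the gap; the rest of your argument then goes through as in the paper.
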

\begin{proof}
In the proof below, the constants in $O(\cdot)$ may depend on $\bcond_0$ and $\crs$. The idea is to expand (\ref{eq: approxMart}) in the small parameters $\Delta_{\tau}$, $\Delta_a$ up to the order $\epsilon^3$. Choose a compact set $\comp\subset \H$ with a non-empty interior separated by $\crs$ from $a^\delta_1$. By (\ref{eq: regg}), (\ref{eq: reggprime}), with the notation $u(z):=g^\delta_{\tau_1}(z)-a^\delta_1(\tau_1)$, one has 
$$
g_{\tau_2}^\delta(z)-g_{\tau_1}^\delta(z)= \frac{2}{u(z)}\Delta_\tau+O(\epsilon^3);
$$
$$
(g^\delta_{\tau_2})'(z)^\frac12-(g^\delta_{\tau_1})'(z)^\frac12=-\Delta_\tau\frac{(g^\delta_{\tau_1})'(z)^\frac12}{u^2(z)}+O(\epsilon^3).
$$
whenever $z\in\comp\cup\{a^\delta_2(0),\ldots, b^\delta_{2k}(0)\}$.  By (\ref{eq: residue}), we can write $f_{\H,\bcond}(z)=\frac{R(\bcond)}{z-a_1}+\dob(\bcond, z)$, where $\dob(\bcond, z)$ is analytic at $a_1$ and $\dob(\bcond, a_1)=0$. Therefore one has, for $z\in\comp$,
 $$
 (g^\delta_{\tau_2})'(z)^\frac12f_{\H,\bcond_{\tau_2}^\delta}(g_{\tau_2}^\delta(z))-(g^\delta_{\tau_1})'(z)^\frac12f_{\H,\bcond_{\tau_1}^\delta}(g_{\tau_1}^\delta(z))=
 $$
 \begin{equation}
  (g^\delta_{\tau_1})'(z)^\frac12\left(\frac{Q_3}{u^3(z)}+\frac{Q_2}{u^2(z)}+\frac{Q_1}{u(z)}+Q_0(u(z))\right)+O(\epsilon^3),
  \label{eq: expand}
 \end{equation}
 where
 \begin{eqnarray}
  Q_3:=(\Delta_a^2-3\Delta_t)R(\bcond_{\tau_1}^\delta)\label{eq: q3}\\ 
  Q_2:=\Delta^2_a\partial_{a_1}R(\bcond_{\tau_1}^\delta)+\Delta_aR(\bcond_{\tau_1}^\delta). \label{eq: q2}
 \end{eqnarray}
The explicit form of $Q_1$ and $Q_0$ is not important, it suffices to note that they are polynomials in $\Delta_a,\Delta_\tau$, $R(\bcond_{\tau_1}^\delta)$, $\dob(\bcond^\delta_{\tau_1}, g_{\tau_1}^\delta(z))$ and first and second derivatives of the last two quantities with respect to the parameters; $Q_1$ does not depend on $z$. 

By Lemma \ref{lemma: martingale}, there exists $\delta_0>0$ such that for all $\delta<\delta_0$ and $z\in \comp$, the expectation of the left-hand side of (\ref{eq: expand}) given $\mathcal{F}^\delta_{\tau_1}$ is less than $\epsilon^3$.  By compactness, $|(g^\delta_{t})'(z)|$ is uniformly bounded from below over $z\in \comp$ and $t<T^\delta_{\crs}$. Therefore, for $\delta<\delta_0$, one has 
\begin{equation}
\frac{Q^*_3}{u^3(z)}+\frac{Q^*_2}{u^2(z)}+\frac{Q^*_1}{u(z)}+Q^*_0(u(z))=O(\epsilon^3) 
\label{eq: o_eps_3}
\end{equation}
uniformly over $z\in\comp$, where $Q^*_i:=\E[Q_i|\mathcal{F}^\delta_{\tau_1}]$, $i=0,1,2,3$. We claim that this identity implies $\max\{Q^*_1,Q^*_2,Q^*_3\}=O(\epsilon^3)$. Indeed, assume the contrary. Then there exist sequences of hulls $K^{(n)}$ in $\H$, points $a^{(n)}_1\in \R$, numbers $\epsilon^{n}$, $Q^{(n)}_i$ and quadratic irrational functions $Q^{(n)}_0$ with the following properties: \textbf{(i)} $K^{(n)}$ and $g^{-1}_{K^{(n)}}(a^{(n)}_1)$ are separated by $\crs$ from infinity and other marked points; \textbf{(ii)} for each $n$, the function $u\mapsto Q^{(n)}_0(u)$ belongs to the finite-dimensional space spanned by $\dob(g_{K^{(n)}}(\bcond^\delta_0), u + a^{(n)}_1)$ and its first and second partial derivatives with respect to the positions of the marked points; \textbf{(iii)} (\ref{eq: o_eps_3}) holds true with $Q^*_i=Q^{(n)}_i$, $\epsilon=\epsilon^{(n)}$ and $u(z)=g_{K^{(n)}}(z)-a^{(n)}_1$; \textbf{(iv)} $M^{(n)}(\epsilon^{(n)})^{-3}$ tends to infinity, where $M^{(n)}:=\max\{Q^{(n)}_1,
Q^{(n)}_2,Q^{(n)}_3\}$. By compactness, we may  assume that $K^{(n)}$, $a^{(n)}_1$, $(M^{(n)})^{-1}Q^{(n)}_i$, $i=1,2,3,$ converge to $K^{*}$, $a^{*}$ and $q_i$ respectively, with at least one of $q_i$ non-zero. Thus (\ref{eq: o_eps_3}) implies that $(M^{(n)})^{-1}Q^{(n)}_0(u)$ converges to $-q_3/u^3-q_2/u^2-q_1/u$ uniformly in $u\in g_{K^*}(\comp)$. On the other hand, the limit of $(M^{(n)})^{-1}Q^{(n)}_0$ must belong to the space spanned by $\dob(g_{K^{*}}(\bcond^\delta_0), u + a^{*}_1)$ and its first and second partial derivatives and thus it must be analytic at the origin. This contradiction proves the claim.
 
 By Remark \ref{rem: a1}, $R(\bcond_{t}^\delta)$ does not vanish, and thus by the compactness, it is uniformly bounded away from $0$ over $t<T^\delta_\crs$. Thus the lemma follows readily from (\ref{eq: q3}) -- (\ref{eq: q2}) and  the equations $\E[Q_{2,3}|\mathcal{F}^\delta_{\tau_1}]=O(\epsilon^3)$.
 \end{proof}
\begin{proof}[Proof of Theorem \ref{thm: interfaces}]
 Fix a cross-cut $\crs_1$ that separates $a_1(0)$ and $\crs$ from infinity and other marked points. Define recursively the stopping times $0=\tau_1,\tau_2,\dots\tau_{N(\crs_1)}$, $\tau_i$ being the maximal time so that $\tau_{i-1}$ and $\tau_i$ satisfy the conditions of Lemma \ref{lemma: LSW} with the cross-cut $\crs_1$, and define the process $w(t)$ by
$$
w(\tau_i):=a^\delta_1(\tau_i)+\sum\limits_{j=0}^{i-1}3\frac{\partial_{a_1}R(\bcond^\delta_{\tau_{j}})}{R(\bcond^\delta_{\tau_{j}})}(\tau_{j+1}-\tau_{j}).
$$
extending linearly between $\tau_j$. Recall that all possible realization of $\bcond^\delta_{t}$, $t<T^\delta_{\crs_1}$, form a precompact subset of the set of $(2k+m)$-tuples of distinct points of $\R$; thus $R$ is uniformly bounded from below and $\pa_{a_1}R/R$ is uniformly Lipschitz in its parameters over this subset. From this and from the bounds $\tau_{i+1}-\tau_{i}\leq \epsilon^2$, $\max\limits_{\tau_i\leq t\leq \tau_{i+1}}|a^\delta_1(t)-a^\delta_1(\tau_i)|\leq\epsilon$, we infer 
\begin{equation}
\label{eq: w_int}
w(t):=a^\delta_1(t)+\int_0^t\frac{\partial_{a_1}R(\bcond^\delta_{s})}{R(\bcond^\delta_{s})}ds+O(\epsilon),
\end{equation}
provided that $t\leq\tau_{N(\crs_1)}$. Continue the process $w(t)$ beyond the hitting time of $\crs_1$ by taking, for $i\geq N(\crs_1)$, $\tau_{i+1}:=\tau_i+\epsilon^2$; $w(\tau_{i+1}):= w(\tau_{i})\pm\sqrt{3}\epsilon$, with signs chosen at random with probability $1/2$ and independently. By Lemma \ref{lemma: LSW}, for all $i>0$, $\E[w_{i}-w_{i-1}|\mathcal{F}_{\tau_{i-1}}]=O(\epsilon^3)$ and $\E[(w_{i}-w_{i-1})^2-3(\tau_i-\tau_{i-1})|\mathcal{F}_{\tau_{i-1}}]=O(\epsilon^3)$, hence the standard argument (see \cite{LSW}) shows that there is a coupling of $w(t)$ to the standard Brownian motion $\sqrt{3}B(t)$ such that $\P[\max_{0<t<T}|w(t)-\sqrt{3}B(t)|<\sqrt{\epsilon}]>1-\sqrt{\epsilon}$, where $T=4\hcap(\crs_1)$. If $a_1(t)$ denotes the strong solution to (\ref{eq: drift}) with this $B_t$, then on this event, for all $t<T_{\crs_1}\wedge T^\delta_{\crs_1}$,
\begin{align*}
 |a_1^\delta(t)-a_1(t)|\leq |w(t)-\sqrt{3}B(t)|+3\int_0^t\left|\frac{\partial_{a_1}R(\bcond^\delta_{s})}{R(\bcond^\delta_{s})}-\frac{\partial_{a_1}R(\bcond_{s})}{R(\bcond_{s})}\right|ds+O(\epsilon)\\ \leq C(\crs_1)\int_0^t\max_{[0,s]}|a_1^\delta-a_1|ds+O(\sqrt{\epsilon}),
\end{align*}
where the first inequality follows from (\ref{eq: drift})  and (\ref{eq: w_int}) and the second one from (\ref{eq: lip}) and the uniform Lipschitzness of $\partial_{a_1}R/R$. The theorem now follows by Gronwall's lemma.
\end{proof}

\section{Explicit expressions for observables}
\label{sec: corollaries}

The main goal of this section is to give explicit expressions for the observables $f_{\H,\bcond}$ and hence for the drift terms $3\partial_{a_1}\log R(\bcond)$ and for the crossing probabilities $\crossing(\H,\bcond,i_1,\dots,i_r)$. We only do the cases $m=0$ and $m=1$; as we note in Remark \ref{rem: big_m} below, the formulae for arbitrary $m$ can be obtained from those by a straightforward limiting procedure. Let us introduce some notation. Denote by $\chi_{ij}$ the cross-ratio of the endpoints of $i$-th and $j$-th free boundary arc:
$$
\chi_{ij}=\chi_{ji}=[b_{2i-1};b_{2j-1};b_{2i};b_{2j}]=\frac{(b_{2i-1}-b_{2j-1})(b_{2i}-b_{2j})}{(b_{2i-1}-b_{2j})(b_{2i}-b_{2j-1})},
$$
and, given an auxiliary point $x\in \R$, put
$$
\psi_{ij}(x)=\frac{(x-b_{2j-1})(b_{2i-1}-b_{2j})}{(x-b_{2j})(b_{2i-1}-b_{2j-1})},
$$
so that the multiplication by $\psi_{ij}(x)$ acts by substituting $x$ into $\chi_{ij}$ in the place of $b_{2i-1}$ :
$$
[x;b_{2j-1};b_{2i};b_{2j}]=\chi_{ij}\psi_{ij}(x).
$$

\begin{proposition}
\label{prop: linear}
Given boundary conditions $\bcond$ in $\H$ with $m=0$, the observable $f_{\H,\bcond}$ reads
 $$
 f_{\H,\bcond}=C\cdot\prod_{1\leq i\leq k}\left(\frac{z-b_{2i-1}}{z-b_{2i}}\right)^{\frac12}\left(\sum_{1\leq i\leq k}\frac{p_i}{z-b_{2i-1}}\right),
 $$
 where $C$ does not depend on $z$, and 
\begin{gather}
p_r=\frac{b_{2r}-b_{2r-1}}{b_{2r}-b_{2k-1}}
\sum_{\substack{s\in\{\pm 1\}^{k}\\s_k\equiv-s_r\equiv 1}}
\prod_{\substack{1\leq i\leq k-1\\ \fsign_i=-1}}s_i 
\prod_{\substack{i< j}}\chi_{ij}^\frac{s_is_j}{4}\prod _{\substack{i\neq k,r}} \psi_{ri}(b_{2k-1})^{\frac{1-s_i}{2}},\quad 1\leq r\leq k-1,\label{eq: p_r}\\ \label{eq: p_k}
p_k=  \sum_{\substack{s\in\{\pm 1\}^{k}\\s_k\equiv 1}}
\prod_{\substack{1\leq i\leq k-1\\ \fsign_i=-1}}s_i 
\prod_{\substack{i< j}}\chi_{ij}^\frac{s_is_j}{4}.
\end{gather}
\end{proposition}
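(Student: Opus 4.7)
My plan is to verify that the proposed formula matches the characterization of Lemma \ref{lemma: uniqueness}. Since $m=0$, the residue conditions \eqref{eq: residue} are vacuous, so one only needs to check that the formula has the structure of \eqref{eq: deff_many} with a polynomial numerator of degree $k-1$, that it satisfies \eqref{eq: freearcs} for $1 \leq i \leq k-1$, and that it meets the normalization \eqref{eq: normalization}; uniqueness then finishes the identification. The structural check is routine: using
\[
\prod_{i=1}^k \left(\frac{z-b_{2i-1}}{z-b_{2i}}\right)^{1/2} = \frac{\prod_{i=1}^k(z-b_{2i-1})}{\prod_{i=1}^k \sqrt{(z-b_{2i-1})(z-b_{2i})}},
\]
the proposed $f_{\H,\bcond}$ takes the form $C \sum_{r=1}^k p_r \prod_{j \neq r}(z-b_{2j-1})$ divided by the standard denominator, so $P_\bcond$ is a real polynomial of degree $k-1$ as required.

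Next I would compute the boundary limits
\[
A_i := \lim_{z \to b_{2i-1}} \sqrt{(z-b_{2i-1})(z-b_{2i})}\, f_{\H,\bcond}(z), \qquad B_i := \lim_{z \to b_{2i}} \sqrt{(z-b_{2i-1})(z-b_{2i})}\, f_{\H,\bcond}(z).
\]
The value $A_i$ isolates only the residue $p_i$ of $G(z) := \sum_r p_r/(z-b_{2r-1})$ at $b_{2i-1}$, multiplied by a square-root prefactor built from the differences $b_{2i-1}-b_j$; the value $B_i$ picks up the full sum $G(b_{2i})$ with an analogous prefactor. The identity $B_i = -\zeta_i A_i$ then reduces, after substituting \eqref{eq: p_r} and \eqref{eq: p_k}, to a combinatorial identity on sums over $\{\pm 1\}^k$ constrained by $s_k = 1$. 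The central mechanism is that the swap $b_{2i-1} \leftrightarrow b_{2i}$ sends $\chi_{ij} \mapsto 1/\chi_{ij}$, which equals $\psi_{ij}(b_{2i})$, so the product $\prod_{i<j}\chi_{ij}^{s_is_j/4}$ is preserved by the simultaneous involution $s_i \mapsto -s_i$. The factors $\psi_{ri}(b_{2k-1})^{(1-s_i)/2}$ in $p_r$ are precisely what interpolate between the reference endpoint $b_{2k-1}$ natural in $p_k$ and the endpoint $b_{2r-1}$ relevant for $A_r$. The sign $\zeta_i$ then arises from the linear prefactor $\prod_{j < k,\, \zeta_j = -1} s_j$, which flips under $s_i \mapsto -s_i$ precisely when $\zeta_i = -1$, combined with a sign from the square-root branches at $b_{2i-1}$ versus $b_{2i}$. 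Finally, letting $z \to b_{2k}$ in \eqref{eq: normalization}, the $i=k$ factor in the product contributes $(z-b_{2k})^{-1/2}$, which is absorbed by $\sqrt{\pi(z-b_{2k})}$, leaving an explicit algebraic equation for $C$ in terms of $p_k$ and the $b_j$.

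The main obstacle is the combinatorial identity verifying \eqref{eq: freearcs}. The pairing of sign configurations between the sums defining $A_i$ and $B_i$ is not manifest in the stated form, and one must reconcile the factors $\psi_{ri}(b_{2k-1})^{(1-s_i)/2}$ in $p_r$ with the square-root branches at the arc endpoints. Once the correct involution on $\{\pm 1\}^k$ and the bijection between the sums for $A_i$ and $B_i$ are identified, the remaining verification should be a tedious but entirely mechanical algebraic calculation.
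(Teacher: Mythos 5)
Your framework---check that the proposed $f$ has the form (\ref{eq: deff_many}) with a real polynomial of degree $k-1$, verify (\ref{eq: freearcs}) and (\ref{eq: normalization}), and invoke the uniqueness part of Lemma \ref{lemma: uniqueness}---is the right one, and the structural and normalization checks you describe are indeed routine. But the step you leave open is not a routine remainder: it is the entire content of the proposition. Writing $G(z)=\sum_r p_r/(z-b_{2r-1})$, the condition (\ref{eq: freearcs}) at the $i$-th arc couples \emph{all} the coefficients, since the limit at $b_{2i}$ involves $G(b_{2i})=\sum_{r}p_r/(b_{2i}-b_{2r-1})$ while the limit at $b_{2i-1}$ involves only $p_i$; so what must be proved is a family of linear relations mixing the $k$ different sign-sums (\ref{eq: p_r})--(\ref{eq: p_k}). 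The involution you propose ($s_i\mapsto-s_i$ together with $b_{2i-1}\leftrightarrow b_{2i}$, using $\chi_{ij}\mapsto 1/\chi_{ij}=\psi_{ij}(b_{2i})$) is a symmetry of a \emph{single} sum and cannot by itself produce an identity relating $p_i$ to the weighted combination $\sum_r p_r/(b_{2i}-b_{2r-1})$; no pairing of sign configurations inside one sum will do that, so the ``tedious but mechanical'' verification is in fact missing its key mechanism.

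The missing idea is the Cauchy-determinant structure. The paper argues in the opposite direction: write $P_\bcond(z)=\sum_i p_i\prod_{j\ne i}(z-b_{2j-1})$, temporarily normalize $p_k=1$, and observe that (\ref{eq: freearcs}) for $1\le i\le k-1$ becomes the linear system $A\,p=\bar v$ with $A=D+C$, where $C_{ij}=(b_{2i}-b_{2j-1})^{-1}$ is a Cauchy matrix and $D$ is diagonal with $D_{ii}=\fsign_i(b_{2i}-b_{2i-1})^{-1}\prod_{j\ne i}\chi_{ij}^{1/2}$. Non-degeneracy is already known from Lemma \ref{lemma: uniqueness}, so Cramer's rule applies, and the expansions $\det(D+C)=\sum_S\det D_S\det C_{\bar S}$ together with the product formula for principal minors of a Cauchy matrix are precisely what turn the determinants into the sums over $s\in\{\pm1\}^k$ with $s_k=1$; the factors $\psi_{ri}(b_{2k-1})$ and the prefactor $(b_{2r}-b_{2r-1})/(b_{2r}-b_{2k-1})$ arise because replacing the $r$-th column by $\bar v$ amounts to substituting $b_{2k-1}$ for $b_{2r-1}$ in the Cauchy matrix. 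If you want to keep your ``verify and use uniqueness'' route, you would still have to reproduce these determinant identities (or an equivalent) to establish that the explicit $p_r$ solve the system, so as it stands the proposal has a genuine gap at exactly the point where the proof must do work.
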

\begin{proof}
We look for a solution to the linear system (\ref{eq: freearcs}) -- (\ref{eq: normalization}) (note that (\ref{eq: residue}) is vacuous since $m=0$) by writing the unknown polynomial $P_\bcond$ is the form 
$$
P_\bcond(z)=\sum_i p_i\prod_{j\neq i}(z-b_{2j-1}).$$ We temporarily replace the last equation (\ref{eq: normalization}) by the equation $p_k=1$ (this will only change the overall normalization). Then it is straightforward to check that in the unknowns $p_i$, (\ref{eq: freearcs}) becomes a system with the right-hand side 
$$
\bar{v}:=\left(\frac{1}{b_{2k-1}-b_2},\frac{1}{b_{2k-1}-b_4},\dots,\frac{1}{b_{2k-1}-b_{2k-2}}\right)^T
$$
and with the matrix $A=D+C$, where $D$ is the diagonal matrix with entries 
$$
 D_{ii}=\frac{\fsign_i}{b_{2i}-b_{2i-1}}\prod\limits_{1\leq j\neq i\leq k}\chi^{\frac12}_{ij},\quad 1\leq i\leq k-1
$$
 and $C$ is the Cauchy matrix:
$$
C_{ij}= \frac{1}{b_{2i}-b_{2j-1}}.
$$
To solve the system by Cramer's rule, we have to compute $\det A$ and $\det A_{[r]}$ (where $A_{[r]}$ is the matrix $A$ with $r$-th column replaced by $\bar{v}$). Recall that given a subset $S$ of indices, the principal minor $\det(C_S)$ of the Cauchy matrix $C$ is given by 
$$
\det(C_S)=\prod\limits_{\substack{i\neq j\\ i,j\in S}}\chi^{\frac12}_{ij}\prod\limits_{i\in S}\frac{1}{b_{2i}-b_{2i-1}}.
$$
Therefore
\begin{multline*}
\det A = \sum\limits_{S\subset\{1,\dots,k-1\}} \det D_S \det C_{\bar{S}}\\=
\prod_{1\leq i<k-1}\frac{1}{b_{2i}-b_{2i-1}}
\sum_{\substack{S\subset\{1,\dots,k\}\\k\notin S}}
\prod_{\substack{i \in S}}\fsign_i
\prod_{\substack{i \in S\\j\neq i}}\chi_{ij}^{\frac12}
\prod_{\substack{i,j\notin S\\i,j\neq k}}\chi_{ij}^{\frac12}=\\
\prod_{1\leq i<k-1}\frac{1}{b_{2i}-b_{2i-1}}
\prod_{\substack{j \neq i\\j,i\neq k}}\chi^{\frac38}_{ij}
\prod_{\substack{i\neq k}}\chi_{ki}^{\frac{1}{4}}
\sum_{\substack{s\in\{\pm 1\}^{k}\\s_k=1}}
\prod_{\substack{1\leq i\leq k-1\\ \fsign_i=-1}}s_i
\prod_{j\neq i}\chi_{ij}^{\frac{s_is_j}{8}}.
\end{multline*}
The computation of $\det A_{[r]}$ is similar; the only difference is that $D_{rr}$ gets replaced by zero (which restricts the sum to $S$ such that $r\notin S$) and that in the Cauchy matrix, $b_{2r-1}$ gets replaced by $b_{2k-1}$ (which introduces the factor $\frac{b_{2r}-b_{2r-1}}{b_{2r}-b_{2k-1}}$ in front and the factors $\psi_{rj}(b_{2k-1})$ inside the sum). Canceling the common factors and changing the normalization once again, we arrive at the assertion.
\end{proof}

\begin{proposition}
\label{prop: res_explicit}
If $\bcond$ are boundary conditions in $\H$ such that $m=1$, then  
\begin{multline}
R(\bcond)=C(a_1-b_{2k})\prod_{1\leq i\leq k}\left(\frac{a_1-b_{2i-1}}{a_1-b_{2i}}\right)^{\frac12}\times\\\times\left(
\sum_{\substack{s\in\{\pm 1\}^{k}\\s_k\equiv-1}}
\prod_{\substack{1\leq i\leq k\\ \fsign_i=-1}}s_i 
\prod_{\substack{1\leq i<j\leq k}}\chi_{ij}^\frac{s_is_j}{4}\prod _{\substack{1\leq i\leq k-1}} \psi_{ki}(a_1)^{\frac{1-s_i}{2}}\right)^{-1} 
\label{eq: res_explicit}
\end{multline}
where the factor $C$ does not depend on $a_1$ (and hence is unimportant for the drift term in Theorem \ref{thm: interfaces}).
\end{proposition}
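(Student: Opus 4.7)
The plan is to extend the Cramer's-rule computation used in Proposition~\ref{prop: linear} to the case $m=1$, where the residue-vanishing condition~(\ref{eq: residue}) contributes an additional linear equation. For the polynomial $P_\bcond$ of degree $k$, I choose the basis adapted to the pole at $a_1$,
\[
P_\bcond(z) = p_0 \prod_{i=1}^k (z - b_{2i-1}) + \sum_{j=1}^{k} p_j\, (z - a_1)\prod_{i\neq j}(z - b_{2i-1}),
\]
so that a direct residue computation on $f_{\H,\bcond}(z) = P_\bcond(z)/\bigl[\prod_i\sqrt{(z-b_{2i-1})(z-b_{2i})}\,(z-a_1)\bigr]$ gives
\[
R(\bcond) = p_0 \prod_{i=1}^{k}\sqrt{\frac{a_1-b_{2i-1}}{a_1-b_{2i}}},
\]
the branches being chosen so that $f_{\H,\bcond}$ is analytic in $\H$. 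It then suffices to compute $p_0$ from the linear system consisting of the $k-1$ free-arc equations~(\ref{eq: freearcs}), the residue equation~(\ref{eq: residue}), and the normalization~(\ref{eq: normalization}).

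First I translate the system into matrix form. Evaluating $P_\bcond$ at $b_{2i}$ and $b_{2i-1}$ and dividing the $i$-th free-arc equation by $\Phi_i := \prod_{l\neq i}\sqrt{(b_{2i}-b_{2l-1})/(b_{2i}-b_{2l})}$ and by $(b_{2i}-b_{2i-1})$ shows that the $p_j$-columns ($j=1,\dots,k$) form a $(k-1)\times k$ Cauchy matrix $C_{ij} = 1/(b_{2i}-b_{2j-1})$ plus a diagonal perturbation on the $(i,i)$ entries proportional to $\zeta_i\Psi_i/[(b_{2i}-b_{2i-1})\Phi_i]$, while the $p_0$-column entries are $1/(b_{2i}-a_1)$. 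The residue equation~(\ref{eq: residue}), obtained by expanding $f_{\H,\bcond}(z) - R(\bcond)/(z-a_1)$ around $z=a_1$, adds a further row with $p_j$-entries $1/(a_1-b_{2j-1})$ for $j=1,\dots,k$ and $p_0$-entry equal to the logarithmic derivative $\phi'(a_1)/\phi(a_1)$, where $\phi(z) := \prod_i\sqrt{(z-b_{2i-1})/(z-b_{2i})}$; in effect, this row extends the Cauchy block by an extra row whose row parameter is $a_1$.

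Second, I apply Cramer's rule, expanding each of the relevant determinants by summing over subsets $S\subseteq\{1,\dots,k-1\}$ recording which diagonal perturbations participate, exactly as in the proof of Proposition~\ref{prop: linear}. Each term of the expansion yields a principal minor of the extended Cauchy matrix, evaluated by the classical formula
\[
\det (C_S)=\prod_{\substack{i,j\in S,\, i\neq j}}\chi_{ij}^{1/2}\prod_{i\in S}\frac{1}{b_{2i}-b_{2i-1}}.
\]
Re-indexing via $s_i = +1$ if $i\in S$ and $s_i=-1$ otherwise, the signs $\zeta_i$ produce the factor $\prod_{\zeta_i=-1} s_i$ and the minors combine into the factor $\prod_{i<j}\chi_{ij}^{s_is_j/4}$, mirroring the formula for $p_k$. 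The new feature is the extra residue row: its presence in the Cramer determinant acts as a replacement of the would-be free-arc row for $i=k$ by a Cauchy row with $a_1$ in place of $b_{2k}$, and this replacement contributes the additional factor $\psi_{ki}(a_1)^{(1-s_i)/2}$ precisely on the indices $i$ with $s_i=-1$. After the normalization~(\ref{eq: normalization}) is restored, the prefactor $(a_1-b_{2k})\prod_i\sqrt{(a_1-b_{2i-1})/(a_1-b_{2i})}$ in~(\ref{eq: res_explicit}) emerges from the $j=k$ residue-row entry $1/(a_1-b_{2k-1})$ combined with the identity $R(\bcond)=p_0\prod_i\sqrt{(a_1-b_{2i-1})/(a_1-b_{2i})}$.

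The main obstacle will be the careful combinatorial bookkeeping: tracking the interaction of the extra residue row with the subset expansion to produce exactly the pattern $\psi_{ki}(a_1)^{(1-s_i)/2}$, and verifying that all signs and square-root branches conspire correctly. Once this is done, the structure of the resulting sum mirrors that in Proposition~\ref{prop: linear} and~(\ref{eq: res_explicit}) falls out directly.
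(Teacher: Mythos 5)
Your setup is sound: with $P_\bcond$ written in the basis adapted to $a_1$ one indeed has $R(\bcond)=p_0\prod_i\sqrt{(a_1-b_{2i-1})/(a_1-b_{2i})}$, and your description of the free-arc rows and of the residue row (Cauchy entries $1/(a_1-b_{2j-1})$ in the $p_j$-columns, logarithmic derivative in the $p_0$-column) is correct. The gap is in the determinant evaluation, which is exactly where the proof lives. First, the factors $\psi_{ki}(a_1)^{(1-s_i)/2}$ in (\ref{eq: res_explicit}) sit in the \emph{denominator} of $R(\bcond)$, so in any Cramer scheme they must come out of the determinant by which you divide, not out of the numerator minor in which your residue row appears as a clean Cauchy row with row parameter $a_1$; moreover, in the correct computation they arise from substituting $a_1$ for the \emph{column} parameter $b_{2k-1}$ (exactly as in the proof of Proposition \ref{prop: linear}, where replacing a column by the right-hand side swaps $b_{2r-1}$ for $b_{2k-1}$ and creates the $\psi$-factors), whereas your proposed mechanism is a \emph{row} substitution $b_{2k}\to a_1$, which produces a different pattern. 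Second, the determinant you must divide by contains the residue row's $p_0$-entry $\phi'(a_1)/\phi(a_1)$ together with the normalization row; this matrix is not a diagonal perturbation of a Cauchy matrix, so the subset expansion via ``the classical formula'' does not apply to it, and since the statement is exact up to an $a_1$-\emph{independent} constant you cannot discard its $a_1$-dependence (already for $k=1$ the $\phi'/\phi$ entry cancels nontrivially against the normalization row to produce the factor $(a_1-b_{2k})$). Equivalently, when you ``restore the normalization'', the rescaling constant $\lim_{z\to b_{2k}}\sqrt{\pi(z-b_{2k})}f_{\H,\bcond}(z)$ involves all of $p_0,\dots,p_k$ and depends on $a_1$; controlling it requires knowing that the value at $b_{2k}$ is governed by $p_k$ alone, i.e.\ that the relation (\ref{eq: freearcs}) holds at $i=k$ as well — a fact you never establish.

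This last point is the paper's key step, and it is what makes the computation a one-liner rather than a new determinant identity: using the jump-balance argument from the proof of Lemma \ref{lemma: uniqueness} and the sign-change count of Remark \ref{rem: a1}, for $m=1$ the residue condition (\ref{eq: residue}) can be traded for (\ref{eq: freearcs}) at $i=k$. After this exchange the linear system is \emph{literally} the $m=0$ system of Proposition \ref{prop: linear} with $k$ replaced by $k+1$ and a degenerate free arc $b_{2k+1}=b_{2k+2}=a_1$ (so $\chi_{i,k+1}=1$), and (\ref{eq: p_r})--(\ref{eq: p_k}) apply verbatim: the sum in (\ref{eq: res_explicit}) is precisely the coefficient $p_k$ of the enlarged system, where the substitution point $b_{2(k+1)-1}$ equals $a_1$ and yields the $\psi_{ki}(a_1)$-factors, while the exchange also shows that the normalizing limit at $b_{2k}$ is proportional to this same $p_k$, so that $R\propto(a_1-b_{2k})\prod_i\sqrt{(a_1-b_{2i-1})/(a_1-b_{2i})}$ divided by that sum. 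I would advise you either to adopt this reduction, or, if you insist on keeping the residue row, to state and prove the determinant identity your plan implicitly requires; as written, the deferred ``bookkeeping'' is not bookkeeping but the missing argument, and the sketch of it points in the wrong direction.
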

\begin{proof}
We will reduce this proposition to the computation in the previous one. First, it follows from the proof of Lemma \ref{lemma: uniqueness} that (\ref{eq: residue}) -- (\ref{eq: normalization}) imply (\ref{eq: freearcs}) for $i=k$ with some $\zeta_k$ (since the jumps of $h_{\bcond}$ at $b_{2k-1}$ and $b_{2k}$ must be the same). Actually, by Remark~\ref{rem: a1}, the function $f_{\H, \bcond}$ changes its sign at each $a_i$ and across each free arc with $\zeta_i=-1$. Since the total number of sign changes must be odd (taking into account the behavior at infinity), we have $\zeta_{k}=-\zeta_1\ldots\zeta_{k-1}(-1)^m$, and since $m+s$ is even, this is consistent with the definition of $\zeta$'s after (\ref{eq: freearcs}). Conversely, if $m=1$ and a non-zero function $f$ of the form (\ref{eq: deff_many}) satisfies (\ref{eq: freearcs}) for $1\leq i\leq k$, the same arguments imply that there must be a sign change at $a_1$ and at the same time that $h=\Im\int f^2$ has no jump in the constant boundary conditions at $a_
1$; 
therefore (\ref{eq: residue}) holds.

Summarizing, the equation (\ref{eq: residue}) can be replaced with (\ref{eq: freearcs}) for $i=k$, and this yields an equivalent system. Now look for $P_\bcond$ in the form 
$$P_\bcond(z)=\sum_i p_i(z-a_1)\prod_{j\neq i}(z-b_{2j-1})+p_{k+1}\prod(z-b_{2j-1}).$$ After replacing temporarily the normalization~(\ref{eq: normalization}) with $p_{k+1}=1$, we get exactly the same system as in Proposition \ref{prop: linear} with $k$ replaced by $k+1$ and $b_{2k+1}=b_{2k+2}=a_1$ (thus $\chi_{k+1,i}=1$, $1\leq i\leq k$). Since we know that the system is non-degenerate, the determinant of $A$ is non-zero, and so the sum in (\ref{eq: p_k}) is also non-zero. Hence, up to a non-zero multiplicative constant, the solution is given by (\ref{eq: p_r}) -- (\ref{eq: p_k}) with the above-mentioned substitutions. Changing the normalization to (\ref{eq: normalization}), picking up the residue at $a_1$ and dropping the factors independent on $a_1$ concludes the proof.
\end{proof}

\begin{rem}
\label{rem: big_m}
Starting with the expression (\ref{eq: res_explicit}), one could pick some other $i$ with $\zeta_i=-1$ and let $b_{2i-1}$ tend to $b_{2i}$. This way one obtains explicit formulae for $m>1$. When taking the limit, the coefficient $C$ in front might in general tend to zero and the second factor to infinity, hence one has to use the l'Hospital rule; the resulting expressions gets complicated and we prefer not to write them down. 
\end{rem}

To illustrate (\ref{eq: res_explicit}), we specialize it to a few cases with a small number of marked points. In the case $m=1$ and $k=1$ (that is, the $+/-/\text{free}$ boundary conditions), the sum contains just one term and all the products are empty. Hence
$$
R=C(a_1-b_1)^\frac12(a_1-b_2)^\frac12,
$$
and plugging this into (\ref{eq: drift}) yields SLE${}_3(-\frac32, -\frac32)$, recovering the result of \cite{HonKyt}. If $m=1$ and $k=2$ (which is the $+/-/\text{free}/+/\text{free}$ boundary conditions), choose $\fsign_1=\fsign_2=1$; then there are two terms in the sum, and 
\begin{multline*}
R=C \cdot\frac{(a_1-b_1)^\frac12(a_1-b_3)^\frac12(a_1-b_4)^\frac12}{(a_1-b_2)^\frac12\left(\chi_{12}^{-\frac14}-\chi_{12}^{\frac14}\psi_{21}(a_1)\right)}=\\=C_1\cdot\frac{(a_1-b_1)^\frac12(a_1-b_3)^\frac12(a_1-b_4)^\frac12(a_1-b_2)^\frac12}{(a_1-b_2)\left(\frac{(b_1-b_4)(b_1-b_3)}{(b_2-b_4)(b_2-b_3)}\right)^{\frac12}+(a_1-b_1)}. 
\end{multline*}
Taking $b_2$ to infinity, one gets the 5-point formula in the introduction, and merging $b_3$ with $b_4$ yields the 4-point one. 

We conclude by a corollary concerning the probability of a \emph{spin-crossing event}, i. e. the probability that in a domain with four marked points on the boundary there is a nearest-neighbor chain of pluses or minuses (diagonal jumps allowed) connecting two opposite sides. 
\begin{corollary}
Suppose $(\Od,\bcond^\delta)$ approximate $(\Omega,\bcond)$, where $\bcond$ stands for $+/-/+/\;\text{free}$ boundary conditions. Then the probability that there is a $+$ crossing between $(b^\delta_2 a^\delta_2)$ and $(a^\delta_1 b^\delta_1)$ (respectively, $-$ crossing between $(a^\delta_2 a^\delta_1)$ and $(b^\delta_1 b^\delta_2)$) tends to $1-G_{+/-/+/\text{free}}(\lambda)$ (respectively, $G_{+/-/+/\text{free}}(\lambda)$), where
$$
 G_{+/-/+/\text{free}}(\lambda)=\left(\int\limits_0^1 \frac{s^\frac23ds}{(1-s)^{\frac13}(2-s)^{2}}\right)^{-1}\int\limits_0^\lambda \frac{s^\frac23ds}{(1-s)^{\frac13}(2-s)^{2}},
$$
$\lambda =\frac{\varphi(a_1)-\varphi(a_2)}{\varphi(b_1)-\varphi(a_2)}$ and $\varphi$ is a conformal map from $\Omega$ to $\H$ with $\varphi(b_2)=\infty$.  
\end{corollary}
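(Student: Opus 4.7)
The plan is to reduce the two spin-crossing events to the terminal behavior of an Ising interface, apply Theorem~\ref{thm: interfaces} to obtain an SLE-type scaling limit, and then compute a one-dimensional hitting probability via the scale function. First I work in the low-temperature expansion with the convention that assigns spin $-1$ to the free arc, so that all four marked points $a_1,a_2,b_1,b_2$ become odd-degree in every configuration $S\in\Conf(\Od,\ldots)$. By planarity of $S$, exactly one of the two non-crossing pairings occurs: (i) $\{a_1{-}a_2,\,b_1{-}b_2\}$ or (ii) $\{a_1{-}b_1,\,a_2{-}b_2\}$, while the third pairing $\{a_1{-}b_2,\,a_2{-}b_1\}$ would force the two interfaces to cross. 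In case (i) the complement of both interfaces is a single connected region adjoining both $+$ arcs, hence contains a $+$-cluster realizing the $+$-crossing; in case (ii) the region bounded by the two interfaces together with the $-$ arc and the free arc carries $-$ spins along its entire boundary (since both interfaces have a $-$ side facing this region), producing a $-$-cluster from $(a_2a_1)$ to $(b_1b_2)$. Internal loops of $S$ do not affect these large-scale connections, so the discrete $+$-crossing event coincides with pairing (i) and the $-$-crossing event with pairing (ii); in particular the two events are complementary.

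Next, apply Theorem~\ref{thm: interfaces} to the interface $\gamma^\delta$ starting at $a_1^\delta$, using the tightness extension mentioned in the remark after the theorem to run up to the first time $\gamma^\delta$ hits $a_2$ or the free arc. The driving function converges to the solution of (\ref{eq: drift}) with the drift $D_{+/-/+/\text{free}}$ displayed in the introduction. Pairing (ii) corresponds in the limit to $a_1(t)$ colliding with $b_1(t)$ before with $a_2(t)$. Introduce the conformally-invariant cross-ratio $\lambda(t)=(a_1(t)-a_2(t))/(b_1(t)-a_2(t))\in(0,1)$ and combine the SDE for $a_1$ with the Loewner ODEs $\dot a_2=2/(a_2-a_1)$, $\dot b_1=2/(b_1-a_1)$. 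A direct It\^o computation, with $v:=b_1-a_2$, gives
\begin{equation*}
d\lambda=\frac{\sqrt{3}}{v}\,dB_t+\frac{1}{v^2}\left[-\frac{1}{\lambda}+\frac{1}{2(\lambda-1)}+\frac{3}{\lambda-2}\right]dt,
\end{equation*}
so after the time change $d\tau=dt/v^2$, $\lambda(\tau)$ is a one-dimensional diffusion on $(0,1)$ with variance $3$ and drift $\mu(\lambda)$ equal to the bracketed expression.

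It remains to compute the probability that $\lambda$ exits $(0,1)$ at $1$. The scale function $S$ of this diffusion satisfies $S'(\lambda)\propto\exp(-\int(2\mu/3)\,d\lambda)$, and elementary integration for $\lambda\in(0,1)$ yields
\begin{equation*}
S'(\lambda)\propto\frac{\lambda^{2/3}}{(1-\lambda)^{1/3}(2-\lambda)^{2}}.
\end{equation*}
Normalizing $S(0)=0$ and $S(1)=1$ reproduces exactly the formula for $G_{+/-/+/\text{free}}(\lambda)$ in the statement. Hence the $-$-crossing probability converges to $G(\lambda_0)$ and, by the complementarity established above, the $+$-crossing probability converges to $1-G(\lambda_0)$, where $\lambda_0=(\varphi(a_1)-\varphi(a_2))/(\varphi(b_1)-\varphi(a_2))$.

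The main obstacle is the combinatorial step identifying each spin-crossing event with an interface pairing: the $-1$-on-free convention is essential to produce four odd-degree points in $S$, and one must verify that internal $\mp$ loops of $S$ do not disrupt the large-scale connectivity of the $\pm$ regions implementing the crossings. Once this dictionary is in place, the SLE convergence from Theorem~\ref{thm: interfaces} and the scale-function computation are entirely routine.
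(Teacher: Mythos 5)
Your stochastic part is correct and is essentially the paper's own argument in a different guise: the It\^o computation for $\lambda(t)=\frac{a_1(t)-a_2(t)}{b_1(t)-a_2(t)}$ (with $b_2$ kept at $\infty$, which is legitimate because $\infty$ is fixed by the hydrodynamically normalized chain) produces exactly the drift $-\frac1\lambda+\frac{1}{2(\lambda-1)}+\frac3{\lambda-2}$ after the time change, and its scale function is the stated $G$; the paper phrases the same computation as the verification that $G(\lambda(t))$ is a martingale for (\ref{eq: drift}) with $R=\sqrt{a_1-b_1}\,(a_1-a_2)/(a_1+a_2-2b_1)$. The transfer from convergence of driving processes (Theorem \ref{thm: interfaces}) to convergence of the swallowing probability, which you declare routine, is also how the paper proceeds, except that it points to the a priori estimates of \cite{CDH} and the detailed argument in \cite{Izy_disser}; your appeal to the remark after Theorem \ref{thm: interfaces} is acceptable at the paper's level of detail.

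The genuine gap is in the combinatorial dictionary. First, the two crossing events are \emph{not} complementary in the discrete: in the definition used here both the $+$ chain and the $-$ chain are allowed diagonal jumps, so at a vertex whose four adjacent faces alternate in sign both crossings can occur simultaneously; complementarity would hold only if one colour used ordinary and the other diagonal adjacency. Second, and for the same reason, ``the pairing of $S$'' is not well defined: at a vertex visited twice by $S$ the decomposition into two boundary-to-boundary paths can be resolved in two ways, and the two resolutions may realize pairing (i) and pairing (ii) respectively --- these ambiguous configurations are exactly those where both crossings happen. The implications ``(ii) under some resolution $\Rightarrow$ $-$ crossing'' and ``(i) under some resolution $\Rightarrow$ $+$ crossing'' are fine (read off the faces along the appropriate side of the path from $a_1$), but the converses, which you need, depend on the resolution rule. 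The paper circumvents this by taking the \emph{rightmost} admissible interface from $a_1$ for the $-$ crossing event (that event is exactly ``the rightmost interface hits the free arc before $a_2$'') and the \emph{leftmost} one for the complement of the $+$ crossing event; since both extremal interfaces satisfy the convergence of Theorem \ref{thm: interfaces} with the same limit, one obtains the limits $G(\lambda)$ and $1-G(\lambda)$ separately, without ever asserting discrete complementarity. As written, your step ``the $+$ crossing probability converges to $1-G$ by complementarity'' is not justified; it can be repaired exactly by the paper's leftmost/rightmost device, and the final formulas are unaffected.
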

\begin{rem}
 The corresponding probabilities for other boundary conditions are given by 
\begin{eqnarray}
 G_{+/-/+/-}(\lambda)\label{eq: pmpm}=\left(\int\limits_0^1 \frac{s^\frac23(1-s)^{\frac23}ds}{(1-s+s^2)}\right)^{-1}\int\limits_0^\lambda \frac{s^\frac23(1-s)^{\frac23}ds}{(1-s+s^2)}\\
G_{+/-/\text{free}/\text{free}}(\lambda)=\left(\int\limits_0^1 \frac{ds}{s^{\frac13}(1-s)^{\frac13}}\right)^{-1}\int\limits_0^\lambda \frac{ds}{s^{\frac13}(1-s)^{\frac13}}\label{eq: pmff}
\end{eqnarray}
The formula (\ref{eq: pmpm}) was conjectured in \cite{BBK05} and proved in \cite{Izy_disser}, and (\ref{eq: pmff}) follows from the result of \cite{HonKyt} and a calculation in \cite{BB}. We do not know explicit formulae for other boundary conditions. In \cite{BDH}, the conformal invariance is proven for the $\text{free}/\text{free}/\text{free}/\text{free}$ and $+/\text{free}/\text{free}/\text{free}$ cases. Combining the techniques of \cite{BDH} with the computation of drifts in 5-point geometry, it should be possible to extend this result to the remaining $+/\text{free}/+/\text{free}$ case.
\end{rem}
\begin{proof}
 It is straightforward to check that $G(\lambda(t)):=G\left(\frac{a_1(t)-a_2(t)}{b_1(t)-a_2(t)}\right)$ is a martingale for the process (\ref{eq: drift}) with 
 $$
 R=\frac{\sqrt{a_1-b_1}(a_1-a_2)}{a_1+a_2-2b_1},
 $$
 and since $G(0)=0$, $G(1)=1$, we deduce that $G(\lambda(0))$ is the probability that the Loewner chain driven by this process swallows $b_1(0)$ before $a_2(0)$.  Given this, it is not hard to deduce from Theorem \ref{thm: interfaces} and \cite[Corollary 1.7]{CDH} the following corollary: the probability that the interface in $\Od$ starting at $a_1^\delta$ hits $(b_1^\delta b_2^\delta)$ before $a_2^\delta$ tends to $G(\lambda(0))$ as $\delta\to 0$ (see \cite[Proposition 5.26]{Izy_disser} for a detailed argument). This event (for the rightmost possible interface in the decomposition of $S$) is exactly the ``$-$'' crossing event; considering the complementary event for the leftmost possible interface gives the ``$+$'' crossing probability.
\end{proof}

\section{Appendix}
\begin{proof}[Proof of Lemma \ref{lemma: dhol}]
 Consider the bijection $p:S\mapsto S\bigtriangleup (zq)$ from $\Conf(\Od,\nor{a},q)$ to $\Conf(\Od,\nor{a},z)$, where $(zq)$ is the shortest graph path from the midpoint of $z$ to $q$, and $\bigtriangleup$ denotes the symmetric difference. It suffices to check that for any $S\in\Conf(\Od,\nor{a},q)$
 $$x^{|S\backslash\free|}e^{-i\frac{\wind(S)}{2}}=\text{Pr}_{l_q}x^{|p(S)\backslash\free|}e^{-i\frac{\wind(p(S))}{2}}.$$ Taking into account that the contribution of a half-edge and of a corner to $x^{|S\backslash\free|}$ equals $\sqrt{x}$ and $\sqrt{x}\cos\frac\pi8$ respectively, and using that $z \notin \free$, we find
 \begin{eqnarray*}
 x^{|p(S)\backslash\free|}e^{-i\wind(p(S))/2}=x^{|S\backslash\free|}e^{-i\frac{\wind(S)}{2}}e^{\pm \frac{i\pi}{8}}\sqrt{x}\left(\sqrt{x}\cos\frac{\pi}8\right)^{-1} \quad \text{if }z\in S \\
 x^{|p(S)\backslash\free|}e^{-i\wind(p(S))/2}=x^{|S\backslash\free|}e^{-i\frac{\wind(S)}{2}}e^{\pm \frac{3i\pi}{8}}\left(x\cos\frac{\pi}8\right)^{-1} \quad \text{if }z\notin S. \\
 \end{eqnarray*}
 Since $e^{-i\frac{\wind(S)}{2}}\in l_q$, the lemma follows from the elementary identity $x=\sqrt{2}-1=\frac{\cos 3\pi/8}{\cos \pi/8}$.
 \end{proof}

 \begin{proof}[Proof of the second clause of Lemma \ref{lemma: H}]
  We express $\Delta H^{\circ}$ at a face $u$ via $\tilde{F}(c_j)$, where $c_j$, $j=1,2,3,4$, are the corners adjacent to $u$ (listed counterclockwise starting from the lower right). Let $z$ be the edge that separates $u$ from $u+\delta$. If $z\notin \free$, then by (\ref{eq: defH}) and (\ref{eq: orthogonal}) one has
 $$
\frac{1}{\sqrt{2}\delta}(H^\circ(u+\delta)-H^\circ(u))=|\tilde{F}(c_1)|^2+|\tilde{F}(c_2)|^2-|\tilde{F}(e)|^2 
 $$
 which by (\ref{eq: continuation}) equals $-|\tilde{F}(c_1)|^2-|\tilde{F}(c_2)|^2-2i (\tilde{F}(c_1)\overline{\tilde{F}(c_2)} -\tilde{F}(c_2)\overline{\tilde{F}(c_1)})$. If $z \in \free$, then, with the extension of $H^\circ$ we have made, $H^\circ(u+\delta)=1=H^\bullet(u+\frac{\delta-i}{2})$. Hence by~(\ref{eq: defH}), one has
 $$
\frac{2\sqrt{2}-2}{\sqrt{2}\delta}(H^\circ(u+\delta)-H^\circ(u))=(2\sqrt{2}-2)|\tilde{F}(c_1)|^2, 
 $$
which by (\ref{eq: bc_proof}) also equals $-|\tilde{F}(c_1)|^2-|\tilde{F}(c_2)|^2-2i (\tilde{F}(c_1)\overline{\tilde{F}(c_2)} -\tilde{F}(c_2)\overline{\tilde{F}(c_1)}$ (recall that $2\sqrt{2}-2$ is the weight with which that difference contributes to the modified Laplacian). Summing similar identities for the four edges adjacent to $u$ yields, with the notation $c_5:=c_1$,
\begin{eqnarray*}
(\sqrt{2}\delta)^{-1}\Delta H^\circ(u)=-2\sum\limits_{j=1,2,3,4} |\tilde{F}(c_j)|^2 -2\sum\limits_{j=1,2,3,4} (i\tilde{F}(c_j) \overline{\tilde{F}(c_{j+1}})+\tilde{F}(c_{j+1}) \overline{i\tilde{F}(c_{j}}))= \\
-2|\tilde{F}(c_1)+i\tilde{F}(c_2)-\tilde{F}(c_3)-i\tilde{F}(c_4)|^2+4\Re(\tilde{F}(c_1)\overline{\tilde{F}(c_3)})+4\Re(\tilde{F}(c_2)\overline{\tilde{F}(c_4)}),
\end{eqnarray*}
and the last two terms are equal to zero since $\tilde{F}(c_j)\in l_{c_j}$ and $l_{c_{1,2}}$ are orthogonal, respectively, to $l_{c_{3,4}}$. The proof for $\Delta H^\bullet$ is similar.
 \end{proof}
 
 \begin{proof}[Proof of Harnack's principle for $H$]
  Suppose $0<r<r_0$, and let $v_0$ be the vertex of $\Od\backslash \{z:|z-a|>r\}$ such that  $H^\bullet(v_0)=\max\limits_{\Od\cap \{z:|z-a|>r\}}H^\bullet$. By subharmonicity of $H^\bullet$, we can construct a path $v_0\sim v_1\sim\dots$ of neighboring vertices such that $H^\bullet(v_{j+1})\geq H^\bullet(v_{j})$ which can only end at $a^\delta$. If $\gamma$ denotes the set of faces of $\Od$ adjacent to this path, then $H^\circ|_{\gamma}\geq \const H^\bullet(v_0)$ with some absolute positive constant (see \cite[Remark 3.10]{CHS2}). Hence by (\ref{eq: hm}), 
for any face $u$ of $\Od$, $H^\circ(u)\geq \const H^\bullet(v_0)\text{hm}^\circ(\gamma,u)$. If for every $\delta$ we choose $u$ to be a face of $\Od$ close to some fixed point $z\in\Omega\cap \{z:|z-a|>r_0+\epsilon\}$, then it is easy to see that $\text{hm}^\circ(\gamma,u)$ is bounded from below by a positive constant (depending only on $\Omega$, $r$ and $z$, but not on $\delta$). Hence $$\max\limits_{\Od\cap \{z:|z-a|>r\}}H=H^\bullet(v_0)\leq C(\Omega,r,r_0)\cdot H^\circ(u)\leq C(\Omega,r,r_0)\cdot \max\limits_{\Od\cap \{z:|z-a|>r_0\}}H.$$
 \end{proof}
\begin{proof}[Proof of (\ref{eq: flux1}) and (\ref{eq: flux2}).]
 We will actually prove that the sum in (\ref{eq: flux2}) converges as $\delta\to 0$ to the flux of $\nabla\text{hm}_{\Omega}([b_{2k-1};b_{2k}],\cdot)$ through $\gamma$, which is a positive conformal invariant (given by $-\int_\gamma\pa_n\text{hm}_{\Omega}([b_{2k-1};b_{2k}],z)|dz|$ whenever $\pa\Omega$ is smooth). Assume that the endpoints $\gamma_l,\gamma_r$ of $\gamma$ can be connected by a broken line $\beta$ in $\Omega$ in such a way that $\dist(z,\pa \Omega)\geq c\min\{|z-\gamma_l|;|z-\gamma_r|\}$, $z\in \beta$. Since any boundary arc $\gamma'$ can be approximated by an arc $\gamma$ with this property (choose two points $z_l$, $z_r\in \Omega$ close to the prime ends $\gamma'_l, \gamma'_r$, and let $\gamma$ be the arc bounded by the points of $\pa \Omega$ closest to $z_l$ and $z_r$), and due to the fact that $-\int_\gamma\pa_n\text{hm}_{\Omega}([b_{2k-1};b_{2k}],\cdot)$ is monotone and continuous in the endpoints of $\gamma$, this does not lose any generality.
 
 Fix $\epsilon>0$ and let $z^\epsilon_{l,r}\in\beta$ be such that $|z^\epsilon_{l,r}-\gamma_{l,r}|=\epsilon$. Let $\beta_\epsilon$ be the part of $\beta$ between $z^\epsilon_{l}$ and $z^\epsilon_{r}$. Define $\beta^\delta_\epsilon$ to be the broken line that consists of $\beta_\epsilon$ and the two segments connecting $z^\epsilon_{l,r}$ to the nearest points in $\pa\Od$. We will assume that the boundary arc $\gamma^\delta$ is bounded by these nearest points; by monotonicity the result can be extended to arbitrary approximations of $\gamma$. Denote for shortness $\text{hm}^\circ(\cdot):=\text{hm}^\circ_{\Od}([b_{2k-1};b_{2k}],\cdot)$. By summing the identity $\Delta \text{hm}^\circ(u) = 0$ over the faces $u$ enclosed by the union of $\gamma^\delta$ and $\beta^\delta$, we infer 
 $$
 \sum_{u\sim\gamma^\delta}\text{hm}^\circ(u)=\sum_{e\cap \beta^\delta_\epsilon\neq \emptyset}\text{hm}^\circ(e_{out})-\text{hm}^\circ(e_{in}),
 $$
 where the last sum is over the edges $e=(e_{out},e_{in})$ that intersect $\beta^\delta_\epsilon$. Since the gradients of the discrete harmonic measures converge uniformly on compact subsets of $\Omega$, the part of the sum in the right-hand side corresponding to $\beta_\epsilon$ converges to the flux of $\nabla\text{hm}_{\Omega}([b_{2k-1};b_{2k}],\cdot)$ through $\beta_\epsilon$. We claim that the contribution of the two remaining segments is bounded from above by $C\cdot \epsilon^\sigma$, where $C$ and $\sigma>0$ are absolute constants. Indeed, denote $R_e=\dist(e,\pa\Od)$, where $e$ is an edge on one of these two segments. The weak discrete Beurling estimate (see, e.g., \cite{CHS1}) shows that $\max_{B_{R_e/2}(e)}\text{hm}^\circ\leq C_1\cdot R_e^\sigma$, therefore the discrete Harnack inequality yields $|\text{hm}^\circ(e_{in})-\text{hm}^\circ(e_{out})|\leq C_2\cdot \delta R_e^{\sigma-1}$, and summing this estimate over the two segments proves the claim. Taking $\epsilon$ to zero concludes the 
proof of convergence of the sum in (\ref{eq: flux2}).
 
 To prove (\ref{eq: flux1}), the only additional work is to take into account the multiplicities. Write $\text{hm}^\bullet(\cdot):=\text{hm}^\bullet_{\Od_r}(\pa\Od_r\setminus\gamma^\delta_1,\cdot)$ and denote by $d(v)$, $v\in\gamma^\delta$, the number of neighbors of $v$ not in $\Od$, with the convention introduced in Remark \ref{rem: at_a}.  Note that every $v\in\gamma^\delta$ such that $d(v)=0$ has a neighbor $v'\in\gamma^\delta$  with $d(v')\geq 1$. By discrete harmonicity and positivity, one has $\text{hm}^\bullet(v)\leq 4\text{hm}^\bullet(v')$. Therefore, 
$$
\sum_{v\in\gamma^\delta}\text{hm}^\bullet(v)\leq 10\sum_{v\in\gamma^\delta}d(v)\text{hm}^\bullet(v),
$$
and the last quantity converges to the flux of $\nabla\text{hm}_{\Omega_r}(\pa\Omega_r\setminus\gamma_1,\cdot)$ through $\gamma$. 
 \end{proof}
 
 \begin{proof}[Proof of (\ref{eq: lip})--(\ref{eq: reggprime})] It follows from Loewner's equation that
 $$|\partial_t g_{1,t}(z)-\partial_t g_{2,t}(z)|\leq2\frac{|g_{1,t}(z)-g_{2,t}(z)|+|a_1(t)-a_2(t)|}{|g_{1,t}(z)-a_1(t)||g_{2,t}(z)-a_2(t)|}.$$ By compactness, the denominator is bounded away from zero by a constant depending on $\crs$ and $\comp$ only. Now Gronwall's lemma readily implies
 \begin{equation}
 \label{eq: Gronwall}
 |g_{1,t}(z)-g_{2,t}(z)|\leq C_1\max_{[0,t]}|a_1-a_2|(e^{C_1t}-1)\leq C_1\max_{[0,t]}|a_1-a_2|(e^{2C_1\hcap(\crs)}-1),
 \end{equation} which is (\ref{eq: lip}). 
 Note that it suffices to prove (\ref{eq: regg}) for $t_2-t_1$ small enough (since the left-hand side of (\ref{eq: regg}) is bounded by compactness). Put, for $t_1\leq t\leq t_2$, $g_{1,t}:= g_t$ and $g_{2,t}:= \sqrt{(g_{t_1}(z)-a(t_1))^2+4(t_2-t_1)}+a(t_1)$. These are two Loewner chains on the interval $[t_1,t_2]$ with the same initial data $g_{t_1}(z)$, driven by $a_1(t)=a(t)$ and $a_2(t)\equiv a(t_1)$ respectively; the latter chain generates a vertical slit. If a cross-cut $\crs_1$ separates $\crs$ from $\comp$, then by compactness, $\min|g_{t_1}(\crs_1)-a(t_1)|>c(\crs,\crs_1)>0$; hence the vertical slit stays inside $g_{t_1}(\crs_1)$ for $t\leq t_1+c_1(\crs,\crs_1)$. The same argument as for (\ref{eq: Gronwall}) now yields
 $$
 |g_{1,t_2}(z)-g_{2,t_2}(z)| \leq C_2\max_{[t_1,t_2]}|a(t)-a(t_1)|(e^{C_2(t_2-t_1)}-1) \leq C_3\max_{[t_1,t_2]}|a(t)-a(t_1)||t_2-t_1|.
 $$
 From the explicit formula for $g_{2,t_2}(z)$, bearing in mind that $|g_{t_1}(z)-a(t_1)|$ is uniformly bounded away from zero for $z\in\comp$, we get 
$$
 \left|g_{2,t_2}(z)-g_{t_1}(z)-\frac{2(t_2-t_1)}{g_{t_1}(z)-a(t_1)}\right| \leq C_4 |t_2-t_1|^2,
 $$ 
and (\ref{eq: regg}) follows. The proof of (\ref{eq: reggprime}) is similar: differentiating Loewner's equation, we get, for two Loewner chains,
 $$
 |\partial_t g'_{1,t}(z)-\partial_t g'_{2,t}(z)|\leq \frac{2|g'_{1,t}(z)-g'_{2,t}(z)|}{(g_{1,t}(z)-a_1(t))^2}+\left|\frac{2|g'_{2,t}(z)|}{(g_{1,t}(z)-a_1(t))^2}-\frac{2|g'_{2,t}(z)|}{(g_{2,t}(z)-a_2(t))^2}\right|.
 $$
Since the denominators are bounded from below and $|g'_{1,t}(z)|$ from above, this yields
$$|\partial_t g'_{1,t}(z)-\partial_t g'_{2,t}(z)|\leq C_6 |g'_{1,t}(z)-g'_{2,t}(z)|+C_7|g_{1,t}(z)-g_{2,t}(z)|+C_8|a_1(t)-a_2(t)|,$$
and hence by (\ref{eq: Gronwall}) and Gronwall's lemma, 
$$|g'_{1,t}(z)-g'_{2,t}(z)|\leq C_9 \max_{[0,t]}|a_1-a_2|(e^{C_{10}t}-1).$$ 
To conclude the proof, apply this, as above, to $a_1(t)=a(t)$ and $a_2(t)\equiv a(t_1)$.
\end{proof}

\end{document}